\newif\ifdraft
\newif\ifhideproofs
\newcommand{\svp}[1]{\ifdraft{} \paragraph{TODO: }\textcolor{blue}{#1 -- Sriram} \\\fi}
\providecommand{\Pr}{\text{Pr}}
\renewcommand{\Pr}{\text{Pr}}
\newcommand{\eps}{\varepsilon}
\newcommand{\ot}{\tilde{O}}
\newcommand{\congest}{\textsc{Congest}}
\newcommand{\poly}{\operatorname{\text{{\rm poly}}}}
\newcommand{\KTZero}{KT-$0$\xspace}
\newcommand{\KTOne}{KT-$1$\xspace}
\newcommand{\KTTwo}{KT-$2$\xspace}
\newcommand{\KTRho}{KT-$\rho$\xspace}
\def\polylog{\operatorname{polylog}}
\begin{document}

\title[]{Can We Break Symmetry with $o(m)$ Communication?}
\author{Shreyas Pai}
\email{shreyas-pai@uiowa.edu}
\affiliation{
  \institution{The University of Iowa}
  \city{Iowa City}
  \state{IA}
  \country{USA}
}
\author{Gopal Pandurangan}
\authornote{Gopal Pandurangan was supported, in part, by NSF grants IIS-1633720, CCF-1717075,  CCF-1540512, and  BSF grant 2016419.}
\email{gopalpandurangan@gmail.com}
\affiliation{
  \institution{University of Houston}
  \city{Houston}
  \state{TX}
  \country{USA}
}
\author{Sriram V.~Pemmaraju}
\authornote{Sriram V.~Pemmaraju was supported, in part, by NSF grant IIS-1955939.}
\email{sriram-pemmaraju@uiowa.edu}
\affiliation{
  \institution{The University of Iowa}
  \city{Iowa City}
  \state{IA}
  \country{USA}
}
\author{Peter Robinson}
\authornote{Peter Robinson was partially supported by a grant from the Research Grants Council (HKSAR) [Project No. CityU 11213620], as well as by a grant from the City University of Hong Kong [Project No. 7200639/CS].}

\email{peter.robinson@cityu.edu.hk}
\affiliation{
  \institution{City University of Hong Kong}
  \country{Hong Kong SAR, China}
}

\begin{abstract}
We study the  communication cost (or \emph{message complexity}) of fundamental distributed symmetry
breaking problems, namely,  coloring and MIS. While significant progress
has been made in understanding and improving the running time of such problems, much less is known about the message complexity of these problems. In fact, all known algorithms need at least $\Omega(m)$ communication for these problems, where $m$ is the number of edges in the graph.
We address the following question in this paper: \emph{can we solve problems such as coloring and MIS using sublinear, i.e., $o(m)$ communication, and if so under what conditions?}

In a classical result, Awerbuch, Goldreich, Peleg, and Vainish [JACM 1990] showed that fundamental \emph{global} problems such as broadcast and  spanning tree construction require at least $\Omega(m)$ messages in the KT-\(1\) \congest\ model (i.e., \congest\ model in which nodes have initial
knowledge of the neighbors' \texttt{ID}'s) when algorithms are
restricted to be comparison-based (i.e., algorithms in which node \texttt{ID}'s can only be compared).
Thirty five years after this result,  King, Kutten, and Thorup [PODC 2015] showed that one can solve the above problems using $\tilde{O}(n)$ messages ($n$ is the number of nodes in the graph) in $\tilde{O}(n)$ rounds in the KT-\(1\) \congest\ model if \emph{non-comparison-based} algorithms are permitted.
An important implication of this result is that one can use the synchronous nature of the KT-\(1\) \congest\ model, using silence to convey information, and solve \emph{any} graph problem
using non-comparison-based algorithms with $\tilde{O}(n)$ messages, but this takes an
\emph{exponential} number of rounds. In the asynchronous model, even this is not possible.

In contrast, much less is known about the message complexity
of \emph{local} symmetry breaking problems such as coloring and MIS.
Our paper fills this gap by presenting the following results.
\begin{description}
\item[Lower bounds:] In the KT-\(1\) CONGEST model, we show that any comparison-based algorithm, even a randomized Monte-Carlo algorithm with
constant success probability, requires $\Omega(n^2)$ messages in the worst case to solve either  $(\Delta+1)$-coloring or MIS, \emph{regardless} of the number of rounds. We also show that
$\Omega(n)$ is a lower bound on the number of messages for any $(\Delta+1)$-coloring or MIS algorithm, even non-comparison-based, and even with nodes having initial knowledge of up to
a constant radius.
\item[Upper bounds:]
In the KT-\(1\) CONGEST model, we present the following randomized non-comparison-based algorithms for coloring that, with high probability, use $o(m)$ messages and run in polynomially many rounds.
\begin{description}
\item[(a)] A $(\Delta+1)$-coloring algorithm that
uses $\tilde{O}(n^{1.5})$ messages, while running in $\tilde{O}(D+\sqrt{n})$ rounds, where
$D$ is the graph diameter.
Our result also implies an \emph{asynchronous} algorithm  for $(\Delta+1)$-coloring with the same message bound but running in $\tilde{O}(n)$ rounds.
\item[(b)] For any constant $\varepsilon > 0$, a $(1+\varepsilon)\Delta$-coloring algorithm  that
uses $\tilde{O}(n/\varepsilon^2)$ messages, while running in $\tilde{O}(n)$ rounds.
\end{description}
If we increase our input knowledge slightly to radius 2, i.e., in the KT-\(2\) CONGEST model,
we obtain:
\begin{description}
\item[(c)] A randomized comparison-based  MIS algorithm that uses $\tilde{O}(n^{1.5})$ messages.
while running in $\tilde{O}(\sqrt{n})$ rounds.
\end{description}
\end{description}

While our lower bound results can be viewed as counterparts to the classical result of
Awerbuch, Goldreich, Peleg, and Vainish [JACM 90], but for local problems, our algorithms are the first-known algorithms for coloring and MIS that take $o(m)$ messages and run in polynomially many rounds.
\end{abstract}

\maketitle
\thispagestyle{empty}

\section{Introduction}
\label{section:introduction}

There has been significant interest over the last decade in obtaining
communication-efficient algorithms for fundamental problems in distributed computing. In the \congest\ model, which is a  message-passing  model with small-sized messages (typically $O(\log n)$-sized, where $n$
is the number of nodes in the network), communication cost is usually measured by the number of messages. In the so-called
\emph{clean network} model, a.k.a. the \KTZero (\textbf{K}nowledge \textbf{T}ill radius 0) model, where nodes have intial knowledge of only themselves and don't even know the \texttt{ID}'s of neighbors, Kutten et al.~\cite{kuttenjacm15} showed
that $\Omega(m)$ ($m$ is the number of edges in the network) is a lower bound for the message complexity for fundamental \emph{global} problems such as leader election, broadcast,  spanning tree, and mimimum spanning tree (MST) construction. This lower bound applies even for randomized Monte Carlo algorithms. For all these problems, there exist algorithms that (essentially) match this message lower bound; in fact, these also have optimal time complexity (of $D$, the network
diameter) in the \congest\ model  (see e.g., \cite{kuttenjacm15,stoc17,elkin}).

The clean network model does not capture many real world networks such as the Internet and peer-to-peer networks where nodes typically have knowledge of identities (i.e., IP addresses)
of other nodes. Also, there has been a lot of recent interest in ``all-to-all'' communication models such
as the congested clique \cite{LPPPSICOMP2005}, Massively Parallel Computing (MPC) \cite{karloff2010model}, and $k$-machine model \cite{KNPRSODA2015}, where each machine is assumed to have knowledge of \texttt{ID}'s of all other machines.
Motivated by these applications and models,
there has been a lot of recent interest
in studying message-efficient algorithms under the so-called \KTOne\ version of the \congest\ model,
where nodes have initial knowledge of the \texttt{ID}s of their neighbors, but no
other knowledge of their neighbors. An immediate question that arises is whether the $\Omega(m)$ message lower bound also holds
in the \KTOne model; or whether sublinear, i.e., $o(m)$ message complexity
is possible.

The above question was partially answered in a seminal paper by Awerbuch et al.~\cite{AwerbuchGPV1988} who initiated the study of trade-offs between the message complexity and \emph{initial knowledge} of distributed algorithms that solve \emph{global} problems, such as broadcast and spanning tree construction.
For any integer $\rho > 0$, in the \KTRho\
version of the \congest\ model (in short, \KTRho\ \congest), each node $v$ is provided initial  knowledge of (i) the \texttt{ID}s of all nodes at distance at most $\rho$ from $v$ and (ii) the neighborhood of every node at distance at most $\rho-1$ from $v$.
The bounds in this paper \cite{AwerbuchGPV1988} are for \emph{comparison-based algorithms}, i.e., algorithms in which
local computations on \texttt{ID}s are restricted to comparisons only.   This means that operations on \texttt{ID}s such as those used in the Cole-Vishkin coloring algorithm \cite{ColeVishkinSTOC1986} or applying random hash functions to \texttt{ID}s are disallowed.
Comparison-based algorithms are quite natural and indeed, most distributed algorithms (with few notable exceptions such as Cole-Vishkin \cite{ColeVishkinSTOC1986} and hash-functions based algorithms of King et al \cite{KingKTPODC2015}) are comparison-based.
For the \KTOne \congest\ model the authors show that $\Omega(m)$ messages are needed  for any \emph{comparison-based} algorithm (even randomized) that solves broadcast. Furthermore, in the KT-$\rho$ \congest\ model, $\Omega\left(\min\left\{m, n^{\frac{1+\Theta(1)}{\rho}}\right\}\right)$ messages are needed for any comparison-based algorithm that solves broadcast. The paper also shows matching upper bounds for comparison-based algorithms for broadcast.
These lower bounds also hold for non-comparison based algorithms,
where the size of the \texttt{ID}s is very large and grows independently with respect to message size, time, and randomness. This paper left open the possibility of circumventing the lower bound if one uses non-comparison based algorithms on more natural \texttt{ID} spaces typically used in distributed algorithms (as assumed in the current paper), where
\texttt{ID}s are drawn from a polynomial-sized \texttt{ID} space.

Nearly 35 years later, the above question was settled by King et al.~\cite{KingKTPODC2015} who showed that the Awerbuch et al.~lower bounds ``break'' if the assumption that the algorithms be comparison-based is dropped and one uses
\texttt{ID} space that is of polynomial size.\footnote{This can be relaxed to allow even exponential-sized \texttt{ID} space: by using fingerprinting technique \cite{karp-rabin,KingKTPODC2015}, with high probability, one can map $n$ \texttt{ID}s in exponential ID space to distinct \texttt{ID}s in polynomial
ID space.} Specifically, it is shown in \cite{KingKTPODC2015} that the Spanning Tree (and hence broadcast) and Minimum Spanning Tree (MST) problem can be solved using $\ot(n)$ messages in \KTOne \congest\ model.\footnote{We use $\ot(f(n))$ as short for $O(f(n) \cdot \poly \log n)$ and $\tilde{\Omega}(g(n))$ as short for $\Omega(g(n)/(\poly \log n))$.} In followup papers, it is shown that these problems can be solved with $o(m)$ messages, but with a higher message bound of $\ot(n^{1.5})$, even in the \emph{asynchronous}
\congest\ \KTOne model \cite{KMDISC18,KMDISC19}.
Using the King et al.~\cite{KingKTPODC2015} result,  it is
possible to solve \emph{any} graph problem (including symmetry breaking problems) using randomized \emph{non-comparison} based algorithms in $\tilde{O}(n)$ messages. However, this takes an
\emph{exponential} number of rounds. This is done by building
a spanning tree using the algorithm of King et al.~and then using time-encoding
to convey the entire topology to the root of the spanning tree. The root then locally computes the result and disseminates it to the entire network, again
using time-encoding (e.g., see \cite{petersoda} for details). Time-encoding uses silence to convey information and takes at least exponential (in $m$) rounds. Note that this works only in \emph{synchronous} setting and not in the asynchronous model. Hence, designing algorithms that use $\ot(n)$ (or even $o(m)$) messages for other graph problems, including local symmetry breaking problems, regardless of the number of rounds, in the asynchronous \congest\ \KTOne model is open.

Motivated by the above results, we initiate a similar study, but for fundamental \emph{local symmetry breaking} problems, such as $(\Delta+1)$-coloring and Maximal Independent Set (MIS).
These problems have been studied extensively for over four decades.
Significant progress has been made in understanding and improving the \emph{running time} (round complexity) of these problems  (see e.g., \cite{coloring-book,pettie,magnus,rozhon,ghaffarimis,GhaffariSODA2019,BEPSFOCS2012} and the references therein); however, much
less is known  with respect to message complexity.
For $(\Delta+1)$-coloring and MIS, to the best of our knowledge, all known distributed algorithms use at least $\Omega(m)$ messages. The overarching question we address in this paper is whether these problems can be solved using $o(m)$ messages in the \congest\ model and if so, under what conditions.

Our paper presents both negative and positive answers for the above question and shows results in three general directions.
First, we show that even though the \emph{round complexity} of local symmetry breaking problems is provably much smaller than the round complexity of global problems, comparison-based algorithms for local symmetry breaking problems require \emph{as many messages} as they do for global problems in the \KTOne\ \congest\ model.
Second, we show that if we drop the requirement that our algorithms be comparison-based only, then it is possible to design algorithms for local symmetry breaking problems in the \KTOne\ \congest\ model
that use far fewer messages.
Third, as we increase $\rho$, the radius of initial knowledge, to just two,
i.e., in the \KTTwo \congest\ model, it is possible to
design algorithms for local symmetry breaking problems that use even fewer messages.
The specific results that illustrate these three directions are presented in the next subsection.

\subsection{Main Results}
\label{subsection:mainResults}

\begin{figure}
\begin{center}
\def\arraystretch{1.5}
    \begin{tabular}{|l|l|l|}
        \hline
                                  & \textbf{\((\Delta + 1)\)-coloring}                 & \textbf{MIS}                                     \\
        \hline
        \multirow{2}{*}{KT-\(1\)} & Lower Bound (C): \(\Omega(m)\)            & Lower Bound (C): \(\Omega(m)\)          \\\cline{2-3}
                                  & Upper Bound (NC): \(\ot(n^{1.5})\)  & Upper Bound (C): \(\ot(m)\)           \\
        \hline
        \multirow{2}{*}{KT-\(2\)} & Lower Bound (NC): \(\Omega(n)\)   & Lower Bound (NC): \(\Omega(n)\)  \\\cline{2-3}
                                  &                & Upper Bound (C): \(\ot(n^{1.5})\)     \\
        \hline
        KT-\(\rho\)               & Lower Bound (NC): \(\Omega(n)\)   & Lower Bound (NC): \(\Omega(n)\) \\
        \hline
    \end{tabular}

\end{center}

\caption{A summary of lower and upper bounds results in this paper. The notation ``(C)'' and ``(NC)'' in each cell stand for comparison-based and non-comparison-based respectively. The \KTOne\ upper bound of $\ot(m)$ for MIS is not from this paper; it is immediately implied by a number of well-known MIS algorithms (e.g., \cite{LubySTOC1985,ghaffarimis}).
The \KTRho\ lower bounds hold for any constant $\rho \ge 1$
and hold even for non-comparison-based algorithms.
}
\label{fig:resultsTable}
\end{figure}

We present new lower and upper bounds on the message complexity
for two fundamental symmetry breaking problems, namely, coloring and MIS.
See Figure \ref{fig:resultsTable} for a summary.
\begin{description}
\item[Lower bounds:] In the \KTOne \congest\ model, we show that any comparison-based algorithm, even a randomized Monte Carlo algorithm with
constant success probability, requires $\Omega(n^2)$ messages in the worst case to solve either  $(\Delta+1)$-coloring or MIS, regardless of the number of rounds. Our result can be considered as a counterpart  to the classical result of Awerbuch et al.~\cite{AwerbuchGPV1988},
but for local problems.
We also show that in the \KTRho\ \congest\ model, for
any constant $\rho \ge 1$, $(\Delta+1)$-coloring and MIS require $\Omega(n)$ messages
even for non-comparison-based and Monte Carlo randomized algorithms with constant success probability.

\item[Upper bounds:] In the \KTOne \congest\ model, we present the following randomized non-comparison-based algorithms for coloring that
with high probability\footnote{This refers to probability at least $1 - n^{-c}$ for constant $c \ge 1$.} (w.h.p.) use $o(m)$ messages and run in polynomially many rounds.
\begin{description}
\item[(a)] A $(\Delta+1)$-coloring algorithm that
uses $\ot(n^{1.5})$ messages, while running in $\ot(D+\sqrt{n})$ rounds, where
$D$ is the graph diameter.
Our result also implies an \emph{asynchronous} algorithm  for $(\Delta+1)$-coloring with the same message bound but running in $\ot(n)$ rounds.
\item[(b)] A $(1+\eps)\Delta$-coloring algorithm  that
uses $\ot(n/\eps^2)$ messages, while running in $\ot(n)$ rounds.
\end{description}
If we increase our input knowledge slightly, i.e., we work in the \KTTwo \congest\ model, where nodes
have initial knowledge of their \emph{two hop-neighborhood}, then we
get the following additional and stronger result.
\begin{description}
\item[(c)] A \emph{comparison-based} algorithm for MIS that uses  $\ot(n^{1.5})$ messages, while running in $\ot(\sqrt{n})$ rounds.
\end{description}
Our algorithms for coloring and MIS  are the first-known algorithms that take
$o(m)$ messages and running in polynomial number of rounds.
\end{description}

\iffalse
\begin{description}
\item[Lower bounds:] In the KT-1 \congest\ model, we show that any comparison-based algorithm, even randomized (??), requires $\Omega(n^2)$ messages in the worst case to solve either the MIS or the $(\Delta+1)$-coloring.

\item[Upper bounds for MIS:] We show that if we relax our model and let $\rho$, the
radius of initial knowledge be 2 (i.e., KT-2 \congest), then there exists a (randomized) comparison protocol for MIS that uses only $\ot(n^{1.5})$ messages, while running in $O(\log n)$ rounds.

\item[Upper bounds for coloring:] We show that if our model is relaxed to allow arbitrary computations on IDs, then much improved upper bounds on the message complexity of coloring problems are possible. Specifically, we show the following upper bounds:
\begin{itemize}
\item[(a)] there is a $(\Delta+1)$-coloring algorithm in the KT-1 \congest\ model that
uses $O(n^{1.5})$ messages, while running in $O(\sqrt{n})$ rounds,
\item[(b)] there is a $(1+\eps)\Delta$-coloring algorithm in the KT-1 \congest\ model that
uses $\ot(n/\eps^2)$ messages, while running in $\ot(n)$ rounds, and
\item[(c)] there is a $(\Delta+1)$-coloring algorithm in the KT-2 \congest\ model that
uses $\ot(n)$ messages, while running in $\ot(n)$ rounds.
\end{itemize}
\end{description}
\fi

\subsection{Other Related Work}
Several recent papers (see e.g., \cite{GmyrPanduranganDISC18, ghaffarikuhndisc18,KMDISC18,KMDISC19} have studied message-efficient algorithms for \emph{global} problems, namely, construction of spanning tree, minimum spanning tree, broadcasting and leader election, in the \KTOne\ \congest\ model inspired
by the work of King et al.~\cite{KingKTPODC2015}. We note that all these
are non-comparison-based algorithms. We use these prior
algorithms for our non-comparison-based algorithms in the \KTOne and \KTTwo models.
In a recent paper, Robinson \cite{petersoda} shows non-trivial lower bounds on the message complexity
of constructing graph spanners in the \congest\ \KTOne model.

In contrast to global problems, much less is known about obtaining sublinear, i.e.,
$o(m)$ algorithms for local problems, such as MIS and coloring.
Pai et al.~\cite{paidisc17} showed that MIS has a fundamental lower bound
of $\Omega(n^2)$ messages in the \congest\ \KTZero model (even for randomized algorithms). However, this result does not extend to the \KTOne model.
In contrast, they also showed that the 2-ruling set problem (note that MIS is 1-ruling set) can be solved using $\ot(n)$ messages in the \KTZero\ model in polynomial time.
To the best of our knowledge, we are not aware of other results on
the message complexity (in particular, lower bounds and sublinear upper bounds)
on fundamental symmetry breaking problems, vis-a-vis the initial input knowledge.

\subsection{Technical Contributions}
\label{subsection:technicalContributions}
\begin{itemize}
\item\textbf{Lower bounds:}
To obtain our \KTOne\ \congest\ lower bounds for comparison-based algorithms for $(\Delta+1)$-coloring and MIS, we start with the machinery introduced by Awerbuch et al.~\cite{AwerbuchGPV1988} for proving their \KTOne\ \congest\ lower bounds for comparison-based algorithms for broadcast.
At the core of their approach is an indistinguishability argument that uses edge crossings. Edge crossings have been used numerous times to prove
a variety of distributed computing lower bounds (see \cite{KorachMoranZaksSICOMP1987,kuttenjacm15, PaiPemmarajuFSTTCS2020,paidisc17,Patt-ShamirPerrySSS2017} for some examples). However, in the \KTOne\ \congest\ model, indistinguishability arguments via edge crossing are more challenging because when an edge incident on a node is crossed, the node is exposed to a new \texttt{ID} due to \KTOne.
For symmetry breaking problems, there is a further challenge due to the fact that multiple outputs are possible and the indistinguishability argument needs to work for all outputs.
Finally, since we want to show our lower bounds even for Monte Carlo algorithms with constant success probability, we require our indistinguishability arguments to apply to a large fraction of edge crossings (so as to be able to apply Yao's lemma \cite{YaoFOCS1977,MotwaniRaghavan}).
The lower bound graph family and \texttt{ID} assignment we design, overcomes all of these challenges. We use a unified construction that works for both $(\Delta+1)$-coloring and MIS and we expect this construction to work for other symmetry breaking problems such as maximal matching and edge coloring.

\item\textbf{Upper bounds:}
Our upper bounds are largely obtained by exploiting the fact that shared (or public) randomness combined with \KTOne\ is a powerful way of eliminating
the need to communicate over a large number of edges.\footnote{Note that we do not a priori assume shared randomness, but only private randomness (as is usual), but use the danner structure (Section \ref{sec:danner}) to share privately generated random bits throughput the graph.}
Specifically, we start with the recent coloring algorithm of Chang et al.~ \cite{ChangFGUZPODC2019} that works efficiently in the MPC model.
Roughly speaking, this algorithm starts with a probabilistic step; by randomly partitioning the nodes and the color palette. Then, after this probabilistic step, a large number of edges become inactive for the rest of algorithm.
This property is crucial to ensuring that the algorithm is efficient in the MPC model.
After the probabilistic step, nodes exchange their state with neighbors in so that every node can determine which of its incident edges to render inactive.
This state exchange is cheap in the MPC model, but is costly with respect to messages in the \congest\ model. We show how to simulate this coloring algorithm in the \congest\ model without the costly exchange of state. Instead we use shared randomness with limited dependence combined with \KTOne.
\end{itemize}

\subsection{Preliminaries}
\label{subsection:preliminaries}

\subsubsection{KT-$\rho$ \congest\ model}
\label{sec:model}
We work in the synchronous, message-passing model of distributed computing, known as the \congest\ model.
The input is a graph \(G=(V, E)\), $n = |V|$, which also serves as the communication network.
Nodes in the graph are processors with unique \texttt{ID}s from a space whose size is polynomial in $n$. In each round, each node can send an \(O(\log n)\)-bit message to each of its neighbors. Since we are interested in message complexity, the initial knowledge of the nodes is important.
For any integer $\rho > 0$, in the \KTRho\  \congest\ model
each node $v$ is provided initial  knowledge of (i) the \texttt{ID}s of all nodes at distance at most $\rho$ from $v$ and (ii) the neighborhood of every vertex at distance at most $\rho-1$ from $v$.

\subsubsection{Comparison-based Algorithms}
\label{sec:comp-model}
Often, the outcome of a distributed algorithm does not depend on specific values of
node \texttt{ID}s, but may depend on the relative ordering of \texttt{ID}s. For example, node \texttt{ID}s of endpoints may be used to break ties between edges of the same weight vying to join a minimum spanning tree. In this case, only the ordering of the \texttt{ID}s matters, not their
specific values.
Since this type of behavior is characteristic of many distributed algorithms,
Awerbuch et al.~\cite{AwerbuchGPV1988} formally define these as \emph{comparison-based} algorithms.
In comparison-based algorithms, the algorithm executed by each node contains two types of variables:
\texttt{ID}-type variables and \emph{ordinary} variables.
In the \KTRho\ \congest\ model, the \texttt{ID}-type variables at a node $v$ will store the \texttt{ID}s of all nodes within $\rho$ hops of $v$.
Nodes can send \texttt{ID}-type variables in messages, but since messages in the \congest\ model are restricted to be \(O(\log n)\) bits long, each message can contain only a constant number of \texttt{ID}-type variables. The local computations at any node may involve operations of the following two forms only:
\begin{enumerate}
  \item Comparing two \texttt{ID}-type variables \(I_{i}, I_{j}\) and storing the result of the comparison in an ordinary variable.
  \item Performing an arbitrary computation on ordinary variables and storing the result in another ordinary variable.
\end{enumerate}

Note that if randomization is allowed, then nodes can choose to ignore the node IDs and choose a new set of ($O(\log n)$-length) IDs and do arbitrary computations with them. These are still comparison-based algorithms.\footnote{However, note that such randomly chosen node IDs are unknown to neighbors and if the algorithm uses only those IDs
then this becomes effectively the KT0 model where bounds are already known \cite{paidisc17, AwerbuchGPV1988}.}

\subsubsection{Efficient Broadcasting in the \KTOne\ \congest\ model} \label{sec:danner}

As explained earlier, shared randomness along with initial knowledge, plays a key role in making our algorithms message-efficient.
We use a graph structure called a \emph{danner} introduced by Gmyr and Pandurangan \cite{GmyrPanduranganDISC18} to share random bits among the nodes in the graph in a message-efficient fashion.
Their specific result is stated in the following theorem.

\begin{theorem}[Gmyr and Pandurangan \cite{GmyrPanduranganDISC18}]
\label{lem:danner}
Given an $n$-vertex, $m$-edge, diameter $D$, graph $G = (V, E)$ and a parameter $\delta \in [0, 1]$, there is a randomized algorithm in the  KT-1 \congest\ model, that constructs a spanning subgraph (i.e., a danner) $H$ of $G$ such that $H$ has $\ot(\min\{m, n^{1+\delta}\})$ edges and
diameter $\ot(D+n^{1-\delta})$ with high probability. This construction uses $\ot(\min\{m, n^{1+\delta}\})$ messages and runs in $\ot(n^{1-\delta})$ rounds with high probability.
\end{theorem}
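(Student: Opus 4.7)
The plan is to construct the spanning subgraph $H$ by partitioning nodes into \emph{light} and \emph{heavy} classes with respect to a threshold $\tau := n^{\delta}$, and to treat each class with a different sparsification rule that exploits KT-1 knowledge of neighbor \texttt{ID}s so that both endpoints of every edge make a consistent decision without extra coordination. Each node $v$ locally knows $\deg(v)$ in the KT-1 model; declare $v$ \emph{light} if $\deg(v) \le \tau$ and \emph{heavy} otherwise. After a single round of exchanging a one-bit ``light/heavy'' label with each neighbor, both endpoints of every edge know the class of the other endpoint; this costs $O(m)$ messages if done naively, so to stay within budget I would instead have each light node announce itself only to the $\ot(n^{1+\delta})$ edges it intends to keep.

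\emph{Light edges.} Include in $H$ every edge that has at least one light endpoint. Since each light node has at most $\tau$ incident edges, this contributes at most $n \cdot \tau = n^{1+\delta}$ edges, and the light endpoint can send each of these edges a single $O(\log n)$-bit ``retain'' message, for a total of $\ot(n^{1+\delta})$ messages in $O(1)$ rounds. If $m \le n^{1+\delta}$ we instead trivially keep all edges and finish.

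\emph{Heavy-heavy edges.} For each edge whose endpoints are both heavy, retain it in $H$ independently with probability $p := c \log n / \tau$ for a sufficiently large constant $c$, where the decision is obtained by both endpoints applying a public hash function keyed on the unordered pair $\{\mathtt{ID}(u),\mathtt{ID}(v)\}$; the KT-1 assumption guarantees that both sides know this key. No messages are needed for agreement. In expectation each heavy node retains $\Theta(\log n)$ heavy-heavy edges and by a Chernoff/union bound the total number of retained heavy-heavy edges is $\ot(n)$ w.h.p., well within the edge budget.

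\emph{Diameter analysis.} The substantive work is to show that any $G$-edge $(u,v)$ dropped from $H$ admits a detour of length $\ot(n^{1-\delta})$ in $H$. I would argue this by a BFS-hitting lemma: for any heavy $u$, the roughly $\log n$ sampled heavy neighbors of $u$ constitute a random ``spread'' set, and expanding a BFS ball of radius $\ot(n^{1-\delta})$ around $v$ in $H$ contains $\tilde\Omega(n^{\delta})$ distinct heavy nodes, so by a standard birthday/union-bound argument at least one sampled neighbor of $u$ lies in that ball w.h.p. Stitching together such detours along a shortest $u\mbox{-}v$ path in $G$ then yields $\mathrm{diam}(H) = \ot(D + n^{1-\delta})$. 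This diameter argument is where I expect the main difficulty: one must bound the multiplicative blow-up across many simultaneously-dropped edges, which typically requires choosing the constant in $p$ large enough to union-bound over all $\binom{n}{2}$ pairs and carefully handling correlations introduced by sharing the same hash-seeded sample across edges incident to a common heavy node. The round bound $\ot(n^{1-\delta})$ then comes directly from the depth of the BFS exploration used to certify the hitting property, and the message bound is dominated by the light-edge step, giving $\ot(\min\{m,n^{1+\delta}\})$ overall.
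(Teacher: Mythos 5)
The paper does not actually prove this statement: it is quoted verbatim as a black box from Gmyr and Pandurangan \cite{GmyrPanduranganDISC18} and used only as a subroutine, so there is no internal proof to compare your argument against. Judged on its own merits, your proposal has three substantive gaps.

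First, the heavy--heavy edge bound is wrong. You retain each heavy--heavy edge independently with probability $p = c\log n/\tau = c\log n/n^{\delta}$ and claim each heavy node keeps $\Theta(\log n)$ such edges and the total is $\ot(n)$. That only holds if every heavy node has degree $\Theta(n^{\delta})$. A heavy node of degree $d$ keeps $\Theta(d\log n/n^{\delta})$ heavy--heavy edges in expectation, so the total retained is $\Theta(m\log n/n^{\delta})$, which for dense graphs is $\Theta(n^{2-\delta}\log n)$ and exceeds the required $\ot(n^{1+\delta})$ whenever $\delta < 1/2$. The sampling rate must instead be calibrated against each node's own degree (or one must sample \emph{nodes} rather than \emph{edges}), which then reintroduces a coordination problem: a neighbor cannot infer your degree or a degree-dependent sampling decision from your \texttt{ID} alone under \KTOne.

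Second, the diameter argument as sketched is essentially circular. You expand a BFS ball of radius $\ot(n^{1-\delta})$ in $H$ and assert that it contains $\tilde\Omega(n^{\delta})$ heavy nodes, but that requires $H$ to already have expansion or minimum-degree properties, and $H$ is precisely what you are constructing; a BFS ball in $H$ could be a path and contain only $\ot(n^{1-\delta})$ nodes total. Moreover, replacing each dropped edge on a $G$-shortest path by a local detour does not automatically give an additive $\ot(n^{1-\delta})$ term; without a global ``hub'' structure (a sparse set of sampled nodes whose pairwise $H$-distances are controlled by a spanning backbone), the detours compound and the blow-up can be multiplicative in $D$.

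Third, a ``public hash function keyed on $\{\mathtt{ID}(u),\mathtt{ID}(v)\}$'' needs a shared random seed, but distributing a seed requires broadcast, which is what the danner is meant to enable in the first place. The published construction avoids this by having each node make decisions with its \emph{own} private coins and then pay messages only on edges it chooses to keep, so the randomness used to decide retention is never something a neighbor needs to reproduce locally. Your scheme, as written, either needs a deterministic retention rule (undermining the Chernoff argument) or a pre-existing broadcast channel.

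In short, the light-node step is fine, but the heavy-node sparsification rate, the hitting/diameter lemma, and the source of shared randomness all need to be reworked before this can match the statement of the theorem.
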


We need the following corollary of this theorem.
\begin{corollary}
\label{cor:danner}
Given an $n$-vertex, $m$-edge, diameter $D$ graph $G = (V, E)$ and a parameter $\delta \in [0, 1]$, there exists a randomized algorithm to solve the leader election and broadcast problems in the synchronous KT-1 \congest\ model using $\ot(\min\{m, n^{1+\delta}\})$ messages and in $\ot(D + n^{1-\delta})$ rounds with high probability.
\end{corollary}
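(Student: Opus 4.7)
The plan is to reduce to the setting guaranteed by Theorem~\ref{lem:danner}: first invoke the danner construction, and then run classical leader election and broadcast restricted to the sparse, low-diameter spanning subgraph $H$. Since the danner itself is built using $\tilde{O}(\min\{m, n^{1+\delta}\})$ messages and $\tilde{O}(n^{1-\delta})$ rounds, any subsequent protocol whose cost is linear in $|E(H)|$ and in the diameter of $H$ will stay within the advertised budget.

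More concretely, I would proceed as follows. First, every node executes the danner algorithm of Gmyr and Pandurangan to obtain a spanning subgraph $H$ with $|E(H)| = \tilde{O}(\min\{m, n^{1+\delta}\})$ edges and diameter $\tilde{O}(D + n^{1-\delta})$, w.h.p. Nodes locally mark which of their incident edges belong to $H$, so that from this point onward only $H$-edges are ever used for communication. Because $H$ is connected and has \(\tilde{O}(D+n^{1-\delta})\) diameter, we can now execute any standard comparison/flooding-based leader election on $H$: for example, each node repeatedly forwards the smallest \texttt{ID} it has yet seen along $H$-edges, with a node sending an \texttt{ID} along a given edge at most once. This is the well-known ``flooding the minimum \texttt{ID}'' protocol, and it terminates in $O(\mathrm{diam}(H))$ rounds using $O(|E(H)|)$ messages. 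At termination, every node knows the smallest \texttt{ID} in the network, which it adopts as the leader; the leader simultaneously builds a BFS tree of $H$ rooted at itself within the same message and round budget (e.g., using the classical synchronous BFS of Peleg). Finally, any broadcast from the leader can be carried out over this BFS tree using $O(n)$ messages and $O(\mathrm{diam}(H))$ rounds.

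Summing the three phases (danner construction, leader election, broadcast), the total message complexity is $\tilde{O}(\min\{m, n^{1+\delta}\})$ and the round complexity is $\tilde{O}(D + n^{1-\delta})$, w.h.p., matching the statement of the corollary.

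The only subtle point — and the step I would be most careful about — is ensuring that the leader election and BFS phases truly restrict their communication to $H$-edges. Because we are in the \KTOne{} \congest{} model, a node knows the \texttt{ID}s of \emph{all} its neighbors in $G$, and could be tempted to send messages on non-$H$ edges; the danner construction only guarantees the existence and connectivity of $H$, so using non-danner edges could blow up the message count toward $\Omega(m)$. This is easily handled by a convention that, after danner construction, every algorithm only sends and receives on edges marked as belonging to $H$, a convention that is consistent with both endpoints of each edge since the danner construction informs both endpoints whether their shared edge is in $H$. With that convention in place, the message and round bounds follow directly from the corresponding bounds on $H$.
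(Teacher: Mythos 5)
Your overall plan is the right one, and it is the one the paper implicitly intends: build the danner $H$ via Theorem~\ref{lem:danner}, then restrict all further communication to $H$-edges (which, as you correctly note, both endpoints can agree on locally). The broadcast phase is also fine: once a leader is known, building a BFS tree of $H$ and broadcasting over it costs $O(|E(H)|)$ messages and $O(\mathrm{diam}(H))$ rounds. The problem is the leader-election subroutine you slot in.

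The gap is in the claim that ``flooding the minimum \texttt{ID}'' on $H$ uses $O(|E(H)|)$ messages. If a node may send along each edge only once, the protocol is simply incorrect: a node that forwards the smallest \texttt{ID} it currently knows may later learn a smaller one and then has no budget left to propagate the correction. If, instead, a node re-sends whenever its known minimum strictly decreases, the protocol is correct but the message bound fails. On an $n$-node path with \texttt{ID}s assigned in decreasing order along the path, node $v_i$'s running minimum drops $n-i$ times, so the total number of messages is $\Theta(n^2)$ even though $|E(H)| = n-1$; more generally this variant can cost $\Omega\bigl(n \cdot |E(H)|\bigr)$. This would blow the message budget well past $\tilde O(\min\{m, n^{1+\delta}\})$. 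The fix is to replace the flooding step with a genuinely message-efficient leader-election protocol, e.g.\ the randomized algorithm of Kutten et al.~\cite{kuttenjacm15} (cited in the introduction), which achieves $O(m)$ messages and $O(D)$ rounds with high probability in \KTZero{}; running it restricted to $H$ gives $\tilde O(|E(H)|) = \tilde O(\min\{m, n^{1+\delta}\})$ messages and $\tilde O(\mathrm{diam}(H)) = \tilde O(D + n^{1-\delta})$ rounds, after which your BFS-tree-and-broadcast phase completes the argument.
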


We use this corollary to share $O(\poly \log n)$ random bits in a message-efficient manner by first
electing a leader and then having the leader locally generate the random bits and
broadcasting them. The message and time complexities
for this operation are given by the above corollary.
We note that the above danner bounds hold in \KTOne\ \congest\ model, which is \emph{synchronous}.
In the \emph{asynchronous} version of the \KTOne\ \congest\ model, we appeal to the following result.

\begin{theorem}[Mashregi and King \cite{KMDISC19,KMDISC18}]
\label{th:asyncst}
Given an $n$-vertex, $m$-edge graph $G = (V, E)$, there exists a randomized algorithm to construct a minimum spanning tree and (hence) solve the leader election and broadcast problems in the asynchronous KT-1 \congest\ model using $\ot(\min\{m, n^{1.5}\})$ messages and in $O(n)$ rounds, with high probability.
\end{theorem}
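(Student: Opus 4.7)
The plan is to reduce asynchronous MST construction in \KTOne\ to a Boruvka-style algorithm that never communicates along the full edge set. The crux of getting below $\Omega(m)$ messages is that \KTOne\ knowledge lets each node locally enumerate the \texttt{ID}s of its neighbors, so two nodes can test via hashing whether they lie in the same fragment \emph{without} first exchanging messages with all neighbors. I would structure the argument around three layers: (i) a message-efficient test for cross-fragment edges, (ii) a Boruvka-like pipeline that invokes this test only $O(\log n)$ times, and (iii) an asynchronous synchronizer that bounds the time cost.

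First I would set up the sampling/hashing primitive in the spirit of King, Kutten, and Thorup. Fix a leader $\ell$ (elected once, up front, via, say, the minimum-\texttt{ID} convergecast on any spanning subgraph built with $\ot(n^{1.5})$ messages); $\ell$ draws and broadcasts a family of $\ot(\sqrt{n})$ random hash functions. Each node uses these hashes, together with the \texttt{ID}s of its neighbors (known for free under \KTOne), to locally determine with high probability whether it shares a fragment label with each neighbor, and to pick, per Boruvka phase, a minimum-weight outgoing edge by sending $\ot(\sqrt{n})$ probes rather than one probe per neighbor. A standard coupon-collector / sampling analysis shows that $\Theta(\sqrt{n}\log n)$ probes per node per phase suffice to recover a cross-fragment edge whenever one exists. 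This caps the per-phase message complexity at $\ot(n^{1.5})$.

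Second I would run $O(\log n)$ Boruvka phases on top of this primitive. In each phase every fragment contracts along its minimum outgoing edge, halving the number of fragments, so $O(\log n)$ phases suffice. Each phase also needs intra-fragment communication (to aggregate the minimum outgoing edge to a fragment root and to disseminate the new fragment label); since every fragment is a tree in the evolving MST and the total tree size across fragments is $O(n)$, this intra-fragment work costs only $\ot(n)$ messages per phase. Summed over phases, the total message complexity is $\ot(n^{1.5})$, and trivially $\ot(m)$ when $m < n^{1.5}$, yielding the stated $\ot(\min\{m,n^{1.5}\})$ bound.

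The main obstacle, and where I would spend most of the care, is the asynchronous round bound of $O(n)$. In the asynchronous \congest\ model one cannot rely on silence or lock-step phase boundaries, so the Boruvka phases must be serialized by a synchronizer lest two phases race and corrupt the fragment labels. I would use a centralized (leader-based) synchronizer: the leader explicitly starts each Boruvka phase via a convergecast/broadcast on the current MST forest, and only after confirmation begins the next phase. Because each phase's activity is dominated by a constant number of convergecasts and broadcasts on trees of total size $O(n)$, each phase completes in $O(n)$ asynchronous time units, and pipelining the $O(\log n)$ phases along the growing MST --- together with charging idle time to the longest tree path, which is $O(n)$ --- gives a total time of $O(n)$, as required. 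The subtle point is arguing that the hashing test remains correct under asynchrony: the synchronizer guarantees that all nodes agree on the current fragment labels before any probe is issued, so the high-probability correctness analysis of the synchronous primitive carries over verbatim.
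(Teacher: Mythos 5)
The paper does not prove this theorem; it cites it verbatim from Mashregi and King~\cite{KMDISC19,KMDISC18} and uses it as a black box for the asynchronous variant of the coloring algorithm. So there is no in-paper proof to compare against, and your attempt must stand or fall on its own. As written, it has two genuine gaps.

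First, the bootstrapping is circular. You elect a leader ``via the minimum-\texttt{ID} convergecast on any spanning subgraph built with $\ot(n^{1.5})$ messages,'' but constructing such a spanning subgraph in the \emph{asynchronous} \KTOne\ \congest\ model with $o(m)$ messages is essentially the content of the theorem. The danner of Gmyr--Pandurangan is a synchronous construction (it uses round-counting), and there is no obvious $o(m)$-message leader-election subroutine you can invoke asynchronously before you have any communication structure in place. This is precisely the obstruction that makes the asynchronous result hard and is why the asynchronous bound in the literature is $\ot(n^{1.5})$ rather than $\ot(n)$; your argument assumes it away.

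Second, your core primitive --- ``\KTOne\ knowledge lets each node locally enumerate its neighbors' \texttt{ID}s, so two nodes can test via hashing whether they lie in the same fragment without first exchanging messages'' --- misstates what \KTOne\ buys you. Knowing a neighbor's \texttt{ID} does \emph{not} reveal that neighbor's current fragment label, which is a dynamic quantity produced by the algorithm, not a function of the \texttt{ID}. The actual King--Kutten--Thorup mechanism is a linear sketch (XOR of hashed edge identifiers) aggregated along the fragment's spanning tree; internal edges cancel pairwise, leaving (w.h.p.) a single outgoing edge. \KTOne\ is used so that a node can form its local sketch without messaging its neighbors, not so that it can infer fragment membership directly. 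Your ``$\Theta(\sqrt{n}\log n)$ probes per node per phase'' sampling claim is therefore not grounded in a concrete primitive, and without the sketch-aggregation machinery the Boruvka phase has no correct, message-cheap way to find an outgoing edge.
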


\section{Message Complexity Lower Bounds}

\subsection{Technical Preliminaries}
We now state key definitions and notation from Awerbuch et al.~ \cite{AwerbuchGPV1988} which we will use in our proofs of the $\Omega(m)$ message lower bounds for $(\Delta+1)$-coloring and MIS, for comparison-based algorithms, in the \KTOne\ \congest\ model.

\begin{definition}
[Executions]
We denote the execution of a \congest\ algorithm (or protocol) \(\mathcal{A}\) on a graph \(G(V, E)\) with an ID-assignment \(\phi\) by \(EX(\mathcal{A}, G, \phi)\). This execution contains (i) the messages sent and received by the nodes in each round and (ii) a snapshot of the local state of each node in each round.
We denote the state of a node \(v\) in the beginning of round \(i\) of the execution \(EX(\mathcal{A}, G, \phi)\) by \(L_i(EX, v)\).
\end{definition}

The \emph{decoded representation} of an execution is obtained by replacing each occurrence of an ID value \(\phi(v)\) by \(v\) in the execution. This decoded representation allows us to define a similarity of executions. We denote the decoded representations of all messages sent during round \(i\) of an execution \(EX(\mathcal{A}, G, \phi)\) as \(h_i(EX(\mathcal{A}, G, \phi))\).

\begin{definition}
[Similar executions]
Two executions of a \congest\ algorithm \(\mathcal{A}\) on graphs \(G(V, E)\) and \(G'(V, E')\) with ID-assignments \(\phi\) and \(\phi'\) are \emph{similar} if they have the same decoded representation. Likewise, we say that two messages are \emph{similar} if their decoded representations are the same.
\end{definition}

A crucial element of our lower bound proof consists of taking two graphs $G(V, E)$ and $G'(V', E')$, where $G'$ is obtained from $G$ by ``crossing'' a pair of edges in $G$, and showing that the executions of any comparison-based algorithm, on $G$ and $G'$ are similar.
Showing similarity of executions requires that the ``crossing'' of edges remains, in a certain
sense, hidden from the algorithm. Below, we define what it means for an algorithm to \emph{utilize} an edge. Later on we will be able to show that if the edges being ``crossed'' are not utilized by the algorithm, then the edge ``crossing'' is hidden from the algorithm.
One way an algorithm utilizes an edge is by sending a message across it. But, this notion of utilization does not suffice in the KT-\(1\) model. We need the stronger notion, defined below.
\begin{definition}
[Utilized Edge]
An edge \(e=\{u, v\}\) is utilized if any one of the following happens during the course of the algorithm:
(i)  a message is sent along \(e\),
(ii) the node $u$ sends or receives \texttt{ID} $\phi(v)$, or
(iii) the node $v$ sends or receives \texttt{ID} $\phi(u)$.
\end{definition}

By definition, the number of utilized edges is an upper bound on the number of edges along which a message sent. Using a charging argument, Awerbuch et al.~\cite{AwerbuchGPV1988} show that the number of utilized edges is also upper bounded by a constant times the number of edges along which a message sent. We restate their claim here.

\begin{lemma}[Lemma 3.4 of \cite{AwerbuchGPV1988}]
\label{lem:utilization-message-complexity}
Let \(m_u\) denote the number of utilized edges in an execution \(EX(\mathcal{A}, G, \phi)\). Then the message complexity of the execution is \(\Omega(m_u)\).
\end{lemma}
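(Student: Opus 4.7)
The plan is to argue by a charging scheme: we charge each utilized edge to a constant number of messages sent during the execution, so that the total message count dominates $m_u$ up to a constant factor.

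First I would classify each utilized edge $e=\{u,v\}$ by which clause of the definition caused it to be utilized, breaking ties by the earliest round in which the utilization occurs. Clause (i) is trivial: an edge along which at least one message is sent can be charged directly to one such message, and that message is charged at most twice in total (once per endpoint's view of clause (i)). So the interesting contribution comes from clauses (ii) and (iii), where the endpoints' IDs travel across some other edges of the network.

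Next I would exploit the bandwidth restriction of the \congest\ model. Since messages are $O(\log n)$ bits long and every \texttt{ID} takes $\Omega(\log n)$ bits, each message carries only a constant number $c$ of \texttt{ID}-type variables. Consider a utilized edge $e=\{u,v\}$ whose utilization is witnessed by clause (ii), i.e., $u$ sends or receives $\phi(v)$ in some message $M$ transmitted along an edge $e'$ incident to $u$ (with $e'\ne e$, else clause (i) applies). I would charge $e$ to the pair $(M,\phi(v))$. Clause (iii) is handled symmetrically with the roles of $u$ and $v$ swapped. A utilized edge is thus charged to at most one such pair.

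Finally I would bound the number of charges received by any single message $M$. The message $M$ is sent along some edge $e'=\{x,y\}$ and contains at most $c$ \texttt{ID}s, say $\phi(v_1),\dots,\phi(v_c)$. For each such $\phi(v_j)$, the charge can land only on an edge whose endpoint set includes both $v_j$ and one of $\{x,y\}$; that is at most two edges per \texttt{ID} (the edge $\{x,v_j\}$ and the edge $\{y,v_j\}$, whichever exist and have $v_j$ as the ``other'' endpoint needed by clauses (ii)/(iii)). Hence $M$ is charged by at most $2c=O(1)$ utilized edges. Summing over all messages sent in the execution, $m_u \le O(1)\cdot(\text{number of messages})$, which gives the claimed $\Omega(m_u)$ lower bound on message complexity.

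The main subtle point, and the only one I expect to require care, is making sure the charging rule is exhaustive and disjoint across the three utilization clauses — in particular, ruling out double-counting when the same message simultaneously witnesses utilization for its own edge $\{x,y\}$ via clause (i) and for several other incident edges via clauses (ii)/(iii). Ordering by first round of utilization and by clause priority (i) $>$ (ii) $>$ (iii), together with the bandwidth-based cap of $c$ \texttt{ID}s per message, suffices to keep the per-message charge $O(1)$.
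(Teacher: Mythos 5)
Your proof is correct and takes essentially the same route the paper attributes to Awerbuch et al., namely a charging argument that bounds the number of utilized edges by exploiting the fact that each $O(\log n)$-bit \congest\ message carries only a constant number of \texttt{ID}-type variables (a constraint the paper makes explicit in its discussion of comparison-based algorithms). The paper itself merely cites Lemma 3.4 of \cite{AwerbuchGPV1988} without reproducing the proof, so your write-up supplies exactly the details the paper leaves implicit; the only cosmetic overshoot is the ``charged at most twice'' remark for clause~(i), where a single charge per edge already suffices.
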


\subsection{Lower Bound Graph Construction and \texttt{ID} Assignments}
\label{sec:kt-1-lb}
We now describe the construction of lower bound graphs that we use for our \(\Omega(n^2)\) message complexity lower bounds.
The same construction works for both the $(\Delta+1)$-coloring and MIS lower bounds.
Recall that these bounds are for comparison-based algorithms in the KT-1 \congest\ model.

We start with a graph \(G(X, Y, Z, E)\)
such that \(|X| = |Y| = |Z| = t\) and the subgraphs of $G$ induced by $X \cup Y$ and $Y \cup Z$ are both isomorphic to the complete bipartite graph $K_{t, t}$. Thus, \(|E| = 2t^2\).
We then add a copy \(G'(X', Y', Z', E')\) of $G$ and consider the graph $G \cup G'$. We call
this the \emph{base graph}.
Let \(V = X \cup Y \cup Z\) and \(V' = X' \cup Y' \cup Z'\). For each \(v \in V\), the corresponding copy in \(V'\) is named \(v'\).
Let $n = |V \cup V'|$. Thus $t = n/6$.
From the base graph $G \cup G'$, we obtain a  \emph{crossed graph} as follows.
For a vertex \(y \in Y\), \emph{cross} an edge \(e = \{y,z\}\) in \(G\), where \(z \in Z\) with the edge \(e' = \{x',y'\}\) in \(G'\) where \(x' \in X'\) to obtain the graph \(G_{e,e'}\). When we cross the edge \(e = \{y, z\} \in E\) with \(e' = \{x', y'\} \in E'\), the resulting \emph{crossed graph} \(G_{e,e'}\) has vertex set \(V \cup V'\) and edge set \((E \cup E' \setminus \{e, e'\}) \cup \{\{y, y'\}, \{x', z\}\}\). The base graph $G \cup G'$ and the crossed graph \(G_{e,e'}\) for edges \(e \in E, e' \in E'\) are illustrated in Figure \ref{fig:lower-bound-graph}.

\begin{figure}
\begin{center}
\includegraphics[width=0.8\linewidth]{./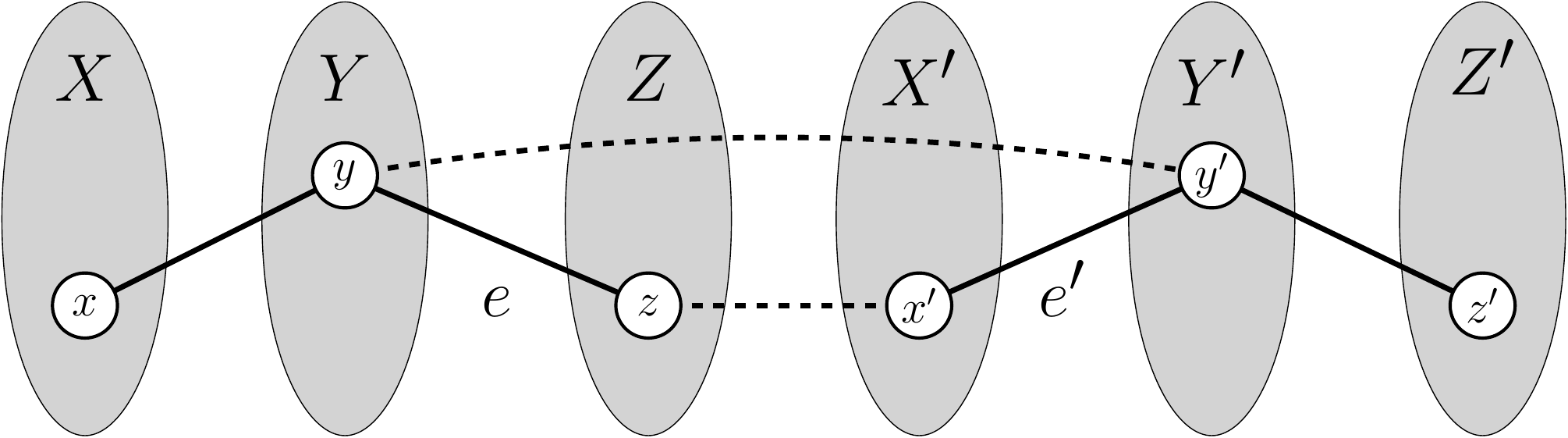}
\end{center}
\caption{\label{fig:lower-bound-graph} This figure shows the base graph $G \cup G'$ and the crossed graph \(G_{e,e'}\), described in Section \ref{sec:kt-1-lb}}
\end{figure}

We now define appropriate ID-assignments for the base graph and the crossed graph.
Let \(S\) be an arbitrary totally ordered set such that \(|S| = 40t\), and let \(\overline{S}\) be the sorted list of elements in \(S\) in ascending order. We will assign distinct elements in $S$ as \texttt{ID}'s to the base graph and the crossed graph.
We use a short-hand and say that the \texttt{ID} of a vertex \(v\) is \(i \in [0, 40t)\), when we mean that the ID of \(v\) is \(\overline{S}[i]\). Note that since \(\overline{S}\) is sorted in ascending order, the relative ordering of the indices is the same as that of the corresponding \texttt{ID}'s in \(\overline{S}\).

Let \(\phi: V \to [0, 40t)\) be an \texttt{ID} assignment such that \(\phi(v)\) is even for all \(v \in V\) and additionally \(\phi(v) \in [0, 2t)\) if \(v \in X\), \(\phi(v) \in [10t, 12t)\) if \(v \in Y\), and \(\phi(v) \in [20t, 22t)\) if \(v \in Z\).
For a vertex $y \in Y$ and pair of incident edges $e =\{y, z\}$ and $e' = \{x',y'\}$, we
define a ``shifted'' \texttt{ID} assignment $\phi'_{e,e'}$ for the vertex set $V'$ of $G'$.
We motivate this ``shifted'' assignment and define it precisely further below.
But for now, assuming $\phi'_{e, e'}$ is defined, we
define the \texttt{ID} assignment \(\psi_{e,e'}: V \cup V' \to [0, 40t)\) as just the union of \(\phi\) and \(\phi'_{e,e'}\), i.e., \(\psi_{e,e'}(v) = \phi(v)\) for all \(v \in V\) and \(\psi_{e,e'}(v') = \phi'_{e,e'}(v')\) for all \(v' \in V'\).
Our first goal in this subsection is to show that these two executions
$$EX = EX(\mathcal{A}, G \cup G', \psi_{e,e'}) \qquad\qquad EX_{e,e'} = EX(\mathcal{A}, G_{e,e'}, \psi_{e,e'})
$$
on the base graph $G \cup G'$ and the crossed graph
$G_{e,e'}$ are similar for any comparison-based algorithm $\mathcal{A}$.

For the executions $EX$ and $EX_{e,e'}$ to be similar, it must be the case that the crossing of edges $e$ and $e'$ is hidden from algorithm $\mathcal{A}$.
To achieve this, the \texttt{ID} assignment $\phi'_{e,e'}$ of $V'$ must be carefully chosen.
For example, vertex $z$ has neighbor $y$ in $G \cup G'$, but has neighbor $x'$ in $G_{e,e'}$ (see Figure \ref{fig:lower-bound-graph}). In the KT-1 model, $z$'s initial local knowledge consists of vertex $y$ in $G \cup G'$ and vertex $x'$ in $G_{e,e'}$. Therefore, for $\mathcal{A}$ to not distinguish between these two situations, it must be the case that the \texttt{ID} of $x'$ is ``adjacent'' to the \texttt{ID} of $y$. To achieve this, without disrupting other constraints on the relative order of \texttt{ID}'s, we start by assigning vertices in $X'$ the \texttt{ID}'s of their corresponding vertices in $X$ and then ``shift'' these by $(\phi(y) - \phi(x)) + 1$. As a result, vertex $x'$ ends up with \texttt{ID} $\phi(y)+1$. A similar ``shift'' is performed to obtain the \texttt{ID}'s of vertex set $Y'$, though this time the ``shift'' is by the amount $(\phi(z) - \phi(y))+1$ because we want vertex $y'$ to be ``adjacent'' to vertex $z$. The
``shift'' for vertex set $Z'$ just needs to be so that the \texttt{ID} assignment is disjoint,
We now define the \texttt{ID} assignment \(\phi'_{e,e'}: V' \to [0, 40t)\) as
\begin{equation}
\label{eqn:shiftedAssignment}
\phi'_{e,e'}(v') = \begin{cases}
                    \phi(v) + (\phi(y) - \phi(x)) + 1, \mbox{ if }v' \in X'  \\
                    \phi(v) + (\phi(z) - \phi(y))+ 1, \mbox{ if }v' \in Y' \\
                    \phi(v) + 10t + 1, \mbox{ if }v' \in Z'
                 \end{cases}
\end{equation}
Note that the \texttt{ID}s of
all vertices in each of the parts, $X'$, $Y'$, and $Z'$, are ``shifted'' by the same amount, though \texttt{ID}s in different parts may be ``shifted'' by different amounts.

The following observations about \(\phi'_{e,e'}\) are easy to verify.
\begin{itemize}
\item[(i)] The ranges of \(\phi\) and \(\phi'_{e,e'}\) are disjoint.
\item[(ii)] Moreover, \(\phi'_{e,e'}(v) \in [8t + 1, 14t+1]\) if \(v \in X'\), \(\phi'_{e,e'}(v) \in [18t+1, 24t+1]\) if \(v \in Y'\), and \(\phi'_{e,e'}(v) \in [30t+1, 32t+1]\) if \(v \in Z'\).
\item[(iii)] For any $u, v \in V$, $u \not= v$,
$\phi(u) < \phi(v)$ iff $\phi'_{e,e'}(u') < \phi'_{e,e'}(v')$.
\end{itemize}
Item (iii) is simply saying that the \texttt{ID} ordering on $V'$ induced by \(\phi'_{e,e'}\) is the same as the \texttt{ID} ordering induced by \(\phi\) with respect to the corresponding vertices in $V$.
This follows from the fact that the \texttt{ID}'s of  vertices in $X'$ are obtained by shifting the \texttt{ID}'s of vertices in $X$ by the same amount, thus preserving the relative ordering of \texttt{ID}'s in $X$ and $X'$. Similarly, for vertex sets $Y'$ and $Z'$. Furthermore, even though the \texttt{ID}'s of different sets, $X'$, $Y'$, and $Z'$ are obtained by ``shifting'' by different amounts, the ``shifting'' also ensures that \texttt{ID}'s in $X'$ remain less than \texttt{ID}'s in $Y'$, which in turn remain less than \texttt{ID}'s in $Z'$.

To prove that $EX$ and $EX_{e, e'}$ are similar, we need two intermediate \texttt{ID} assignments
for the set $V \cup V'$.
Recall that edge $e = \{y, z\}$ and edge $e' = \{x', y'\}$.
\begin{itemize}
\item[(i)] Define \(\psi_{e,e',x}\) to be the \texttt{ID} assignment \(\psi_{e,e'}\) except for interchanging the values of \(x'\) and \(y\) (i.e. \(\psi_{e,e',x}(y) = \phi'_{e,e'}(x')\) and \(\psi_{e,e',x}(x') = \phi(y)\)).
\item[(ii)] Define \(\psi_{e,e',z}\) analogously as \(\psi_{e,e'}\) except for interchanging the values of \(y'\) and \(z\) (i.e. \(\psi_{e,e',z}(z) = \phi'_{e,e'}(y')\) and \(\psi_{e,e',z}(y') = \phi(z)\)).
\end{itemize}
Using these \texttt{ID} assignments, we define two
intermediate executions on the base graph $G \cup G'$.
$$EX_{e,e',x} = EX(\mathcal{A}, G \cup G', \psi_{e,e',x}); \quad
  EX_{e,e',z} = EX(\mathcal{A}, G \cup G', \psi_{e,e',z})
$$
The following lemma, which shows that $EX$, $EX_{e,e',x}$, and $EX_{e,e',y}$ are similar, critically uses the fact that the \texttt{ID} assignment $\psi_{e,e'}$ shifts the \texttt{ID}'s of vertices in $X' \cup Y' \cup Z'$ so that the \texttt{ID} of $x'$ becomes ``adjacent'' to the \texttt{ID} of $y$ and the
\texttt{ID} of $y'$ becomes ``adjacent'' to the \texttt{ID} of $z$.
\begin{lemma}\label{lem:swapping-similarity}
For any $x \in X$, $y \in Y$, $z \in Z$ and edges $e = \{y, z\}$ and $e'=\{x', y'\}$, the executions \(EX\), \(EX_{e,e',x}\) and \(EX_{e,e',z}\) are similar.
\end{lemma}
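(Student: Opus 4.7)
The plan is to establish similarity between the three executions via the standard comparison-based indistinguishability argument: because the algorithm is comparison-based, its behavior is fully determined by the outcomes of comparisons among \texttt{ID}-type variables it ever holds, and I will show that these outcomes are identical across $EX$, $EX_{e,e',x}$, and $EX_{e,e',z}$. The decisive structural observation is that the swapped pairs $(y,x')$ and $(z,y')$ each consist of two \texttt{ID}s that are \emph{consecutive} in the sorted list $\overline{S}$: by the shift formula \eqref{eqn:shiftedAssignment}, $\phi'_{e,e'}(x')=\phi(y)+1$ and $\phi'_{e,e'}(y')=\phi(z)+1$. Moreover, by inspection of the ranges, every \texttt{ID} in $\phi$ is even and every \texttt{ID} in $\phi'_{e,e'}$ is odd (each shift applied to $X'$, $Y'$, or $Z'$ is an odd integer), so $\phi$ and $\phi'_{e,e'}$ have disjoint parities. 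This parity/consecutivity pattern is what makes the swap ``invisible'' to any comparison.

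Next, I would exploit the fact that the base graph $G\cup G'$ is \emph{disconnected}, with $V$ and $V'$ lying in distinct connected components. Consequently no message ever travels from $V$ to $V'$ during the execution, and in \KTOne\ each node's initial \texttt{ID}-type variables contain only \texttt{ID}s of its own component. Hence no node in $V$ ever stores any \texttt{ID} from $V'$ in any \texttt{ID}-type variable (and vice versa), so the algorithm never performs a comparison between an \texttt{ID} from $V$ and an \texttt{ID} from $V'$. The two components thus evolve as two independent sub-executions, and it suffices to prove similarity component-by-component.

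I would then argue the component-by-component similarity by induction on the round $i$, maintaining the invariant: for every node $u$ in a given component and every \texttt{ID}-type variable $I$ at $u$, the value held by $I$ in the three executions is the \texttt{ID}, under the respective assignment, of the \emph{same} vertex; moreover the values of ordinary variables agree exactly. The base case is immediate from the \KTOne\ initialization, using the fact that for the component $G$ the assignments $\psi_{e,e'}$ and $\psi_{e,e',x}$ differ only on $y$'s \texttt{ID} (and that $y$'s new \texttt{ID} $\phi(y)+1$ is consecutive with its old one, with no other vertex of $V$ having an intermediate \texttt{ID} because all \texttt{ID}s in $V$ are even); a symmetric statement holds for the other component and for $\psi_{e,e',z}$. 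The inductive step handles the two kinds of local operations: for comparisons of two \texttt{ID}-type variables, the outcome is identical in all three executions because the only \texttt{ID}s that change rank are the swapped pair, and by disconnectedness no single node ever stores both of them; for operations on ordinary variables, the invariant is preserved trivially. Sends and receives then carry matching \texttt{ID}s across executions, and after replacing each transmitted \texttt{ID} by its owner in the current assignment the decoded messages $h_i$ coincide.

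The main obstacle will be executing the inductive step cleanly for \texttt{ID}-type variables that happen to take the value of the swapped \texttt{ID} itself. Here one has to argue that although the numerical value differs between $EX$ and $EX_{e,e',x}$ (namely $\phi(y)$ versus $\phi(y)+1$ inside component $G$, and symmetrically $\phi(y)+1$ versus $\phi(y)$ inside component $G'$), in each case the value is exactly the \texttt{ID} of the intended vertex under the corresponding assignment, so that decoding identifies them. This is precisely where consecutivity plus opposite parity is used: no third vertex ever has an \texttt{ID} strictly between the swapped pair, so no comparison outcome can ``detect'' the swap, and the decoded transcripts — and hence the local states — agree in all three executions, yielding similarity.
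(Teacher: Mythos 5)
Your proof is correct and follows essentially the same route as the paper: swapping a consecutive pair of \texttt{ID}s (shown consecutive via the shift formula~\eqref{eqn:shiftedAssignment}) is comparison-invisible within each component, and similarity is transitive. You go somewhat beyond the paper's stated proof by making explicit that the swapped pair's \emph{mutual} comparison does flip, and that this is harmless only because $y$ and $x'$ (resp.\ $z$ and $y'$) lie in disjoint connected components of $G \cup G'$ and so no node ever compares their \texttt{ID}s --- the paper elides this with the slightly imprecise assertion that ``there is no change in the relative ordering of \texttt{ID}s.''
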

\begin{proof}
All three executions have the same input graph $G \cup G'$. The execution pair \(EX\) and \(EX_{e,e',x}\) have the same \texttt{ID} assignment except for the vertices \(x'\) and \(y\), which have their \texttt{ID}'s swapped. Note that by definition of $\psi'_{e,e'}$ and \(\phi'_{e,e'}\) in (\ref{eqn:shiftedAssignment}), we have
$$\psi_{e,e'}(x') = \phi'_{e,e'}(x') = \phi(x) + (\phi(y) - \phi(x)) + 1 = \phi(y) + 1.$$
Furthermore, $\psi_{e,e'}(y) = \phi(y)$.
Therefore, when we swap the \texttt{ID}'s of $x'$ and $y$ in $\psi_{e,e'}$ to obtain $\psi_{e,e',x}$, there is no change in the relative ordering of \texttt{ID}'s and therefore
the executions \(EX\) and \(EX_{e,e',x}\) are similar.

A similar argument holds for the execution pair \(EX\) and \(EX_{e,e',z}\). By the definition of $\psi'_{e,e'}$ and \(\phi'_{e,e'}\) in (\ref{eqn:shiftedAssignment}), we have
$$\psi_{e,e'}(y') = \phi'_{e,e'}(y') = \phi(y) + (\phi(z) - \phi(y)) + 1 = \phi(z) + 1$$
and \(\psi_{e,e'}(z) = \phi(z)\).
Thus the relative ordering of \texttt{ID}'s in $\psi_{e,e'}$ and $\psi_{e,e',z}$ is the same and therefore the executions \(EX\) and \(EX_{e,e',x}\) are similar.

The lemma follows because similarity of executions is transitive.
\end{proof}

We can derive the final tool we need by directly appealing to a lemma in Awerbuch et al.~\cite{AwerbuchGPV1988}. Informally, the lemma shows that if edges $e = \{y,z\}$ and $\{x',y'\}$ are not utilized in the execution $EX$ of algorithm $\mathcal{A}$, then executions $EX$ and $EX_{e,e'}$ are similar. The main obstacle is that the initial knowledge vertices \(x', y, y', z\) is different in $EX$ and $EX_{e,e'}$ so a direct inductive proof like in Lemma \ref{lem:swapping-similarity} does not work. But we can use the intermediate executions of Lemma \ref{lem:swapping-similarity} to show the similarity for these four vertices. For all other vertices, we can do a direct inductive argument.

\begin{lemma}[Restatement of Lemma 3.8 of \cite{AwerbuchGPV1988}]
\label{lem:utilization-similarity}
Let $x \in X$, $y \in Y$, and $z \in Z$ be arbitrary vertices and let $e = \{y, z\}$ and $e' = \{x', y'\}$.
Suppose that during the first \(r\) rounds of the execution \(EX\) both $e$ and $e'$ are not utilized. Then the following hold for every round \(1 \le i \le r\) of the executions \(EX\), \(EX_{e,e',x}\) , \(EX_{e,e',z}\) and \(EX_{e,e'}\):
\begin{enumerate}
\item The states of the nodes in the beginning of the round, i.e. \(L_i(\cdot, \cdot)\) satisfy:
\begin{enumerate}
\item For every processor \(w \in V \setminus \{y, z, y', x'\}\), \(L_i(EX_{e,e'}, w) = L_i(EX, w)\).
\item For \(u \in \{x', z\}\), \(L_i(EX_{e,e'}, u) = L_i(EX_{e,e',x}, u)\).
\item For \(v \in \{y, y'\}\), \(L_i(EX_{e,e'}, v) = L_i(EX_{e,e',z}, v)\).
\end{enumerate}
\item The messages sent during the round are similar, i.e., \(h_i(EX) = h_i(EX_{e,e',x}) = h_i(EX_{e,e',z}) = h_i(EX_{e,e'})\).
\item In \(EX_{e,e'}\), no messages are sent during the round over the edges \(\{x', z\}\) and \(\{y, y'\}\).
\end{enumerate}
\end{lemma}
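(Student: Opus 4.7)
The plan is to proceed by induction on the round index $i$, establishing claims~(1), (2), and (3) simultaneously. The intermediate executions $EX_{e,e',x}$ and $EX_{e,e',z}$, which both run on the base graph $G \cup G'$, will serve as bridges: Lemma~\ref{lem:swapping-similarity} already establishes similarity among $EX$, $EX_{e,e',x}$, and $EX_{e,e',z}$, so it remains only to bridge across the actual edge crossing by comparing $EX_{e,e'}$ with $EX_{e,e',x}$ at the vertices $x'$ and $z$, and with $EX_{e,e',z}$ at the vertices $y$ and $y'$.

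For the base case ($i=1$), every vertex's state consists of its own \texttt{ID} together with its neighbors' \texttt{ID}s, which is precisely the KT-$1$ initial knowledge. A vertex $w \notin \{x',y,z,y'\}$ has the same neighborhood in $G\cup G'$ and $G_{e,e'}$ under the common assignment $\psi_{e,e'}$, so claim~(1)(a) holds immediately. For $x'$, I compare its view in $EX_{e,e'}$ (neighborhood $(Y'\setminus\{y'\})\cup\{z\}$ under $\psi_{e,e'}$) with its view in $EX_{e,e',x}$ (own \texttt{ID} swapped to $\phi(y)$, neighborhood all of $Y'$): the shift rule~(\ref{eqn:shiftedAssignment}) forces $\phi'_{e,e'}(y')=\phi(z)+1$, so removing $y'$ (the rank-one neighbor under $\phi'_{e,e'}$) and inserting $z$ (now of rank one, with \texttt{ID} $\phi(z)$) leaves the relative \texttt{ID}-ordering observable by a comparison-based algorithm unchanged. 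The checks for $z$, $y$, and $y'$ follow by the same mechanism -- which is exactly what the shifts defining $\phi'_{e,e'}$ were crafted to ensure.

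For the inductive step, I assume (1)(a)--(c) and (2) hold through round $i$ and analyze round $i$'s message traffic. Claim~(3) is derived first: since $e'=\{x',y'\}$ is not utilized in $EX$, Lemma~\ref{lem:swapping-similarity} implies no message is sent along $e'$ in $EX_{e,e',x}$ either. The round-$i$ state equality $L_i(EX_{e,e'},x')=L_i(EX_{e,e',x},x')$ together with $\mathcal{A}$ being comparison-based forces $x'$ in $EX_{e,e'}$ to produce the same send pattern \emph{ranked by neighbor \texttt{ID}}, so the rank-one neighbor of $x'$ in $EX_{e,e'}$ -- which is $z$ -- receives no message either. The same argument for $z$, and the symmetric argument using $EX_{e,e',z}$ for $y$ and $y'$, completes~(3). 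With the crossed edges now proved silent, every message sent in $EX_{e,e'}$ during round $i$ travels over an edge present in all four executions, its decoded content matches the induction hypothesis, and we obtain~(2). Using the round-$i$ state and message similarities, the round-$(i{+}1)$ states then update in lock-step across all four executions, preserving~(1).

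The main obstacle is that $x',y,z,y'$ genuinely possess different \texttt{ID}-type initial knowledge across the four executions, so the ``equality'' in~(1)(b) and~(1)(c) must be read as equality of ordinary-variable content together with an \texttt{ID}-order-preserving bijection on the \texttt{ID}-type variables. The delicate work is in bookkeeping this bijection through every round of the induction and checking that the pivot \texttt{ID}s $\phi(y)+1$ and $\phi(z)+1$ -- placed there by the specific shifts in $\phi'_{e,e'}$ -- sit exactly where they must in order that no comparison $\mathcal{A}$ can perform detects either the swap or the edge crossing.
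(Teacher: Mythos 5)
Your proposal follows essentially the same route as the paper's sketch: the paper itself delegates the proof to Lemma 3.8 of Awerbuch et al.\ and only records the key idea, namely a round-by-round induction in which all vertices except $x', y, y', z$ are handled by a direct indistinguishability argument, while those four special vertices are bridged through the intermediate executions $EX_{e,e',x}$ and $EX_{e,e',z}$ of Lemma~\ref{lem:swapping-similarity}. Your ordering of the inductive step (establish (3), then (2), then the states for the next round) and your observation that the ``equality'' in (1)(b)--(c) must be read as an \texttt{ID}-order-preserving correspondence rather than literal equality of raw states are both faithful to what the cited lemma actually asserts.

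One small slip worth fixing: in the base case you describe $y'$ as ``the rank-one neighbor under $\phi'_{e,e'}$'' and $z$ as ``now of rank one.'' That is not right in general --- $y$ is an arbitrary vertex of $Y$, so $y'$ need not be the minimum-\texttt{ID} neighbor of $x'$. What the shift actually guarantees is that $z$ (with \texttt{ID} $\phi(z)$) occupies \emph{the same rank position} in $x'$'s neighbor list as $y'$ (with \texttt{ID} $\phi(z)+1$) did, because no integer lies strictly between $\phi(z)$ and $\phi(z)+1$. Your subsequent reasoning uses only this rank-preservation property, so the substance is unaffected, but the phrase ``rank-one'' should read ``same rank.''
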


\begin{corollary}
\label{cor:utilization-similarity}
Suppose that during the execution \(EX\) neither of the edges \(e = \{y,z\}\) and \(e' = \{x',y'\}\) are utilized, for some vertices \(x \in X\), \(y \in Y\), and \(z \in Z\). Then the executions \(EX\) and \(EX_{e,e'}\) are similar and furthermore in \(EX_{e,e'}\), no messages are sent through the edges \(\{y,y'\}\) and \(\{x',z\}\).
\end{corollary}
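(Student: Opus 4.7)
The plan is to read this off Lemma \ref{lem:utilization-similarity} by taking $r$ to be the total number of rounds of $EX$. Since by hypothesis neither $e$ nor $e'$ is utilized at any point during $EX$, the premise of Lemma \ref{lem:utilization-similarity} holds for every round, so the conclusions of the lemma are available round-by-round throughout the entire execution. Item 3 of the lemma then immediately delivers the second claim of the corollary: no message is ever sent across the crossed edges $\{y,y'\}$ and $\{x',z\}$ in $EX_{e,e'}$. What remains is to upgrade the per-round statements (items 1 and 2) to global similarity of $EX$ and $EX_{e,e'}$.

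For that, I would unpack the definition of similar executions: two executions have the same decoded representation, which comprises (a) the decoded transcripts of messages in every round and (b) the decoded local states at every node in every round. Item 2 of Lemma \ref{lem:utilization-similarity} gives $h_i(EX) = h_i(EX_{e,e'})$ for every round $i$, which handles part (a) directly. For part (b), I would split on which vertex we are looking at. For $w \in V \cup V' \setminus \{x',y,z,y'\}$, item 1(a) says $L_i(EX_{e,e'}, w) = L_i(EX, w)$ literally, so in particular their decoded representations agree. The genuinely subtle case is the four "special" vertices $x', y, z, y'$, because the crossing changes their neighborhoods and hence their initial KT-$1$ knowledge, so one cannot hope for literal state equality across $EX$ and $EX_{e,e'}$.

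The bridge through this subtle case is exactly the pair of intermediate executions $EX_{e,e',x}$ and $EX_{e,e',z}$ introduced before Lemma \ref{lem:swapping-similarity}. Specifically, item 1(b) matches $L_i(EX_{e,e'}, u)$ to $L_i(EX_{e,e',x}, u)$ for $u \in \{x',z\}$, and Lemma \ref{lem:swapping-similarity} tells us that $EX_{e,e',x}$ and $EX$ are similar; composing these gives that the decoded state of $u$ in $EX_{e,e'}$ equals the decoded state of $u$ in $EX$. A symmetric appeal using item 1(c) and the similarity of $EX_{e,e',z}$ with $EX$ takes care of $v \in \{y,y'\}$. Combining the four cases, the decoded states agree at every vertex and in every round, which together with the matching decoded message transcripts from item 2 establishes similarity of $EX$ and $EX_{e,e'}$.

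I expect the only real obstacle to be a clean justification of the "composition" step for the four special vertices, since one has to be careful that literal equality of local states (where IDs are the ones under the current ID assignment) translates correctly into equality of decoded states when the ID assignment changes from $\psi_{e,e',x}$ or $\psi_{e,e',z}$ to $\psi_{e,e'}$. The key observation is that $\psi_{e,e',x}$ differs from $\psi_{e,e'}$ only by swapping the ID values of $x'$ and $y$, so any ID appearing in $L_i(\cdot, u)$ for $u \in \{x',z\}$ decodes to the same vertex under both assignments (and analogously for $\psi_{e,e',z}$), which is precisely what Lemma \ref{lem:swapping-similarity} was set up to guarantee. After this bookkeeping, the corollary follows by combining the two conclusions of the previous paragraphs.
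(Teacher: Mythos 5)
The paper treats this corollary as immediate from Lemma~\ref{lem:utilization-similarity} (applied with $r$ equal to the total number of rounds) and does not spell out a proof, and your proposal is a correct and careful expansion of exactly that argument, including the bridging of the four special vertices through the intermediate executions and Lemma~\ref{lem:swapping-similarity}. The subtlety you flag---that literal state equality against $EX_{e,e',x}$ or $EX_{e,e',z}$ must be converted to decoded-state equality against $EX$---is precisely the bookkeeping the paper leaves implicit, and your resolution via the swap structure of $\psi_{e,e',x}$ and $\psi_{e,e',z}$ is the right one.
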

In the next subsections, we will show that this similarity leads to a contradiction with
respect to correctness for problems such as $(\Delta+1)$-coloring and MIS. This in turn will
imply a constraint on the behavior of algorithm $\mathcal{A}$: for every pair of edges \(e = \{y,z\}\) and \(e' = \{x',y'\}\), at least one of the edges is utilized by $\mathcal{A}$. This in turn will
lead to the message complexity lower bound we desire.

\subsection{$\Omega(m)$ message lower bound for \((\Delta + 1)\)-Coloring in \KTOne\ \congest}
Now that we have shown that $EX$ and $EX_{e,e'}$ are similar if $e$ and $e'$ are not utilized by algorithm $\mathcal{A}$, we will show that for some problems this leads to a contradiction.
The intuition for this is simple.
Let $\phi$ and $\phi'$ be \texttt{ID} assignments for $V$ and $V'$ respectively, that consistently order the vertices, i.e., $\phi(u) < \phi(v)$ iff $\phi'(u') < \phi'(v')$ for all $u, v \in V$. Since $G$ and $G'$ are isomorphic, it is easy to show that $EX_G = EX(\mathcal{A}, G, \phi)$ and $EX_{G'} = EX(\mathcal{A}, G', \phi')$ are similar. This is shown below in Lemma \ref{lem:id-assignment-similarity} below. Now consider the base graph $G \cup G'$ and the \texttt{ID} assignment $\psi_{e,e'}$ of $V \cup V'$. Lemma \ref{lem:id-assignment-similarity} implies that corresponding vertices $v$ and $v'$ have the same local states after execution $EX = EX(\mathcal{A}, G \cup G', \psi_{e,e'})$ completes. Since $EX$ and $EX_{e,e'} = EX(\mathcal{A}, G_{e,e'}, \psi_{e,e'})$ are similar, this also implies that vertices $v$ and $v'$ have the same local states after execution $EX_{e,e'}$. But, in the crossed graph $G_{e, e'}$, $y$ and $y'$ are neighbors. For problems in which neighboring vertices ought not to have the same local state (e.g., neighboring vertices cannot have the same color in a solution to the vertex coloring problem), this is a contradiction.

\begin{lemma}\label{lem:id-assignment-similarity}
Consider an arbitrary vertex $y \in Y$ and an arbitrary pair of edges $e =\{y, z\}$, $z \in Z$ and $e' = \{x',y'\}$, $x' \in X'$.
For any comparison-based algorithm $\mathcal{A}$ in the KT-1 \congest\ model,
the executions \(EX_G = EX(\mathcal{A}, G, \phi)\) and \(EX_{G'} = EX(\mathcal{A}, G', \phi'_{e,e'})\) are similar.
\end{lemma}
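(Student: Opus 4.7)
The plan is to use induction on the round number to show that in every round, the decoded representations of the local states and of the messages sent in $EX_G$ and $EX_{G'}$ coincide under the canonical isomorphism $v \leftrightarrow v'$ between $G$ and $G'$. The key ingredient will be observation (iii) from Section \ref{sec:kt-1-lb}, which says that $\phi$ and $\phi'_{e,e'}$ induce the \emph{same} total order on the corresponding vertices; since $\mathcal{A}$ is comparison-based, only this relative ordering influences the execution.

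More concretely, first I would observe that the map $\sigma: V \to V'$ sending $v \mapsto v'$ is a graph isomorphism from $G$ to $G'$ by construction of the base graph, so the decoded graph topology is identical in both executions. For the base case (round $0$), every node $v$ in the KT-1 model knows exactly the multiset $\{\phi(v)\}\cup\{\phi(u) : u \in N(v)\}$, while $v' = \sigma(v)$ knows $\{\phi'_{e,e'}(v')\}\cup\{\phi'_{e,e'}(u') : u' \in N(v')\}$. By observation (iii), the total order on these two ID sets (under the bijection $u \leftrightarrow u'$) is the same, so their decoded representations are identical, hence $L_0(EX_G, v) = L_0(EX_{G'}, v')$ in decoded form.

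For the inductive step, assume that through round $i-1$ the decoded states of $v$ and $v'$ agree and the decoded messages sent along corresponding edges agree. In round $i$, each node applies its comparison-based transition function: by the definition of comparison-based algorithms, the only operations that depend on ID values are comparisons between ID-type variables, and arbitrary computations on ordinary variables. Since the ID-orderings seen by $v$ and $v'$ coincide (by observation (iii), inductively extended to all IDs received so far), every comparison produces the same boolean result, so every ordinary variable ends up with the same value, and every outgoing message has the same decoded representation. Delivering these identical decoded messages along corresponding edges in round $i$ then preserves the invariant, which closes the induction. Since $h_i(EX_G) = h_i(EX_{G'})$ for all $i$, the two executions are similar by definition.

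The only subtle point I anticipate is making the inductive hypothesis precise enough to handle IDs that enter a node's state \emph{after} round $0$ via received messages: we need to track that the total set of IDs ever seen by $v$ in $EX_G$ is order-isomorphic (under $\sigma$) to the set of IDs ever seen by $v'$ in $EX_{G'}$. This is really just a bookkeeping exercise using observation (iii) and the fact that each message in $EX_{G'}$ is, by the inductive hypothesis, the $\sigma$-image of the corresponding message in $EX_G$; nothing new is needed beyond formalizing this invariant.
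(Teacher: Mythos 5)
Your proof is correct and takes essentially the same approach as the paper: an induction on rounds using the fact that $G$ and $G'$ are isomorphic and that, by observation (iii), $\phi$ and $\phi'_{e,e'}$ induce the same total order on corresponding vertices, so every comparison yields the same result and the decoded states and messages coincide. The paper states this more briefly ("by an inductive argument it can be shown\dots"), while you spell out the base case and the bookkeeping about IDs acquired via messages, but the underlying idea is identical.
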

\begin{proof}
Since the input graphs \(G\) and \(G'\) are copies of each other, the only thing that is different between the two executions is the \texttt{ID} assignments.
However, Property (iii) of the \texttt{ID} assignment \(\phi'_{e,e'}\) above implies that every \texttt{ID} comparison by $\mathcal{A}$ on $G$ yields the same result as the corresponding
\texttt{ID} comparison on $G'$.
Therefore, by an inductive argument it can be shown that at the beginning of each round, the state of each vertex \(v\) in \(G\) is the same as the state of the corresponding vertex \(v'\) in \(G'\) and the messages received by these vertices are also be the same. This gives us that the executions \(EX_{G}\) and \(EX_{G'}\) are similar.
\end{proof}

\begin{lemma}
\label{lem:coloring-lb-correctness}
Let \(x \in X\), \(y \in Y\), and \(z \in Z\) be three vertices such that the edges \(e = \{y,z)\}\) and \(e' = \{x',y'\}\) are not utilized in the execution \(EX\). Then, algorithm \(\mathcal{A}\) computes an incorrect \((\Delta + 1)\)-coloring for the crossed graph $G_{e,e'}$.
\end{lemma}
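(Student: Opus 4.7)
The plan is to derive a contradiction by showing that, under the assumption of the lemma, $\mathcal{A}$ is forced to assign the same color to $y$ and $y'$ in the execution on $G_{e,e'}$, even though the crossing introduces the edge $\{y, y'\}$ between them.

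First, I would analyze the execution on the base graph, $EX = EX(\mathcal{A}, G \cup G', \psi_{e,e'})$. Since $G$ and $G'$ are disjoint in $G \cup G'$, this execution factors into the two independent executions $EX_G = EX(\mathcal{A}, G, \phi)$ on $V$ and $EX_{G'} = EX(\mathcal{A}, G', \phi'_{e,e'})$ on $V'$. By Lemma \ref{lem:id-assignment-similarity}, $EX_G$ and $EX_{G'}$ are similar, because $\phi'_{e,e'}$ preserves the relative ordering of corresponding \texttt{ID}s (property (iii) in Section \ref{sec:kt-1-lb}). Similarity forces each pair of corresponding vertices $v$ and $v'$ to reach the same final local state, and in particular to emit the same output color. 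Thus $y$ and $y'$ output an identical color in $EX$.

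Next, I would transfer this equality to the crossed graph. Since by hypothesis neither $e = \{y,z\}$ nor $e' = \{x',y'\}$ is utilized during $EX$, Corollary \ref{cor:utilization-similarity} yields that $EX$ and $EX_{e,e'} = EX(\mathcal{A}, G_{e,e'}, \psi_{e,e'})$ are similar. Consequently every vertex outputs the same color in $EX_{e,e'}$ as it did in $EX$; in particular $y$ and $y'$ again receive the same color. But by construction of the crossed graph, the edge $\{y, y'\}$ belongs to $G_{e,e'}$, so $y$ and $y'$ are adjacent. Two adjacent vertices sharing a color violates properness, contradicting the correctness of $\mathcal{A}$ as a $(\Delta+1)$-coloring algorithm on $G_{e,e'}$.

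The main technical point to verify is that the color output is genuinely preserved under similarity, which is defined in terms of the \emph{decoded} representation of an execution. This holds because in a comparison-based algorithm the color is stored in an ordinary (non-\texttt{ID}) variable: each comparison of \texttt{ID}-type variables yields a Boolean that is placed into an ordinary variable, and the final color is computed only from ordinary variables. Hence the color output is literally identical in similar executions, not merely identical up to \texttt{ID} relabeling. Once this point is made explicit, the rest of the argument is just a short chain of similarities: similarity between the two components of the base-graph execution via Lemma \ref{lem:id-assignment-similarity}, followed by similarity between the base-graph execution and the crossed-graph execution via Corollary \ref{cor:utilization-similarity}.
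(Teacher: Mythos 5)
Your proof is correct and follows essentially the same chain of reasoning as the paper: use Lemma~\ref{lem:id-assignment-similarity} to equate the colors of $v$ and $v'$ in the base-graph execution, then use Corollary~\ref{cor:utilization-similarity} to carry that equality over to the crossed graph, yielding a monochromatic edge $\{y,y'\}$. The extra paragraph you add about colors living in ordinary variables (so that similarity preserves them literally, not just up to \texttt{ID} relabeling) is a worthwhile clarification the paper leaves implicit, but it does not change the structure of the argument.
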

\begin{proof}
In the execution \(EX\), since the input graph has two disconnected components \(G\) and \(G'\), Lemma~\ref{lem:id-assignment-similarity} gives us that the color of a vertex \(v\) in \(G\) is the same as the color of the corresponding vertex \(v'\) in \(G'\). Since the edges \(e = \{y,z\}\) and \(e' = \{x',y'\}\) are not utilized in \(G \cup G'\), applying Corollary \ref{cor:utilization-similarity}, \(\mathcal{A}\) will compute the same coloring in the graph \(G_{e,e'}\) as it will in \(G \cup G'\). This implies a monochromatic edge \(\{y,y'\}\) in \(G_{e,e'}\) which contradicts the correctness of the algorithm.
\end{proof}

\begin{theorem}
[Deterministic Lower Bound]
Let \(\mathcal{A}\) be a deterministic comparison-based algorithm that computes a \((\Delta+1)\)-coloring. Then the message complexity of \(\mathcal{A}\) is \(\Omega(n^2)\). This holds even if the vertices know the size of the network.
\end{theorem}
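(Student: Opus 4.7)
The plan is to combine Corollary~\ref{cor:utilization-similarity} and Lemma~\ref{lem:coloring-lb-correctness} to derive a combinatorial constraint on which edges must be utilized by $\mathcal{A}$ on the base graph $G \cup G'$ under the \texttt{ID} assignment $\psi_{e,e'}$, and then show that this constraint forces $\Omega(t^2) = \Omega(n^2)$ edges to be utilized. The message lower bound then follows from Lemma~\ref{lem:utilization-message-complexity}.

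\textbf{Step 1 (combinatorial constraint).} Fix any vertex $y \in Y$; let $y' \in Y'$ be its copy. Consider the $t$ edges $\{y,z\}$ for $z \in Z$ in $G$ and the $t$ edges $\{x',y'\}$ for $x' \in X'$ in $G'$. For every pair consisting of one such edge $e=\{y,z\}$ from $G$ and one such edge $e'=\{x',y'\}$ from $G'$, Lemma~\ref{lem:coloring-lb-correctness} says that if both $e$ and $e'$ are non-utilized in the execution $EX$ on the base graph $G \cup G'$, then $\mathcal{A}$ produces an incorrect $(\Delta+1)$-coloring on the crossed graph $G_{e,e'}$. Since $\mathcal{A}$ is deterministic and must be correct on all inputs (in particular on every crossed graph $G_{e,e'}$), no such bad pair can exist. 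Equivalently, for every fixed $y \in Y$, either every edge $\{y,z\}$ with $z \in Z$ is utilized in $EX$, or every edge $\{x',y'\}$ with $x' \in X'$ is utilized in $EX$.

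\textbf{Step 2 (counting).} For each $y \in Y$, the above dichotomy forces at least $t$ utilized edges drawn from a pool of $2t$ edges that is \emph{uniquely associated with $y$}: the $t$ edges between $y$ and $Z$ in $G$, and the $t$ edges between $y'$ and $X'$ in $G'$. Across different choices of $y \in Y$, these pools are pairwise edge-disjoint (an edge in $E$ incident to $y$ cannot also be incident to a distinct $y_2 \in Y$, and the same holds on the $G'$ side). Summing over the $t$ choices of $y$ therefore yields at least $t \cdot t = t^2$ distinct utilized edges in $EX$.

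\textbf{Step 3 (conclusion).} With $n = 6t$, we obtain $t^2 = \Omega(n^2)$ utilized edges in $EX$. By Lemma~\ref{lem:utilization-message-complexity}, the message complexity of $EX$, and hence of $\mathcal{A}$ in the worst case, is $\Omega(n^2)$. The construction is oblivious to $n$ being known: the base graph and the \texttt{ID} assignment $\psi_{e,e'}$ can be presented to $\mathcal{A}$ together with $n$, so knowledge of the network size does not help.

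The main conceptual obstacle is already handled by the preceding machinery, namely verifying that indistinguishability under edge crossings truly holds in the \KTOne\ model despite each endpoint seeing a different neighbor's \texttt{ID} after the swap; our only remaining job is the combinatorial counting step, where the subtle point is to extract the ``either all $Z$-edges at $y$ or all $X'$-edges at $y'$'' dichotomy from what is a priori only a per-pair constraint, and to ensure the pools associated with different $y \in Y$ are edge-disjoint so that the $t^2$ bound accumulates without over-counting.
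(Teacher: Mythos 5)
Your proof is correct and takes essentially the same approach as the paper's. The paper argues the contrapositive -- if the message complexity were $o(n^2)$, then by Lemma~\ref{lem:utilization-message-complexity} the number of utilized edges is $o(n^2)$, so some pair $e,e'$ with a common $y$ is both non-utilized, and Lemma~\ref{lem:coloring-lb-correctness} then yields an incorrect coloring on $G_{e,e'}$ -- whereas you argue directly by extracting the per-$y$ dichotomy and counting utilized edges over the $t$ pairwise-disjoint pools; the same two lemmas do the same work in both versions, and your explicit counting is exactly what the paper's terse ``this implies there exists $y$, $e$, $e'$'' step leaves to the reader.
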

\begin{proof}
Suppose that \(\mathcal{A}\) is a deterministic comparison-based algorithm that computes a \((\Delta+1)\)-coloring and has message complexity \(o(n^2)\). Then by Lemma \ref{lem:utilization-message-complexity}, the number of edges utilized by $\mathcal{A}$ is $o(n^2)$. This implies that there exists a $y \in Y$ and edges $e = \{y, z\}$ and $e' = \{x', y'\}$ such that $e$ and $e'$ are not utilized when $\mathcal{A}$ executes on $G \cup G'$. By Lemma \ref{lem:coloring-lb-correctness} this implies that $\mathcal{A}$ computes an incorrect $(\Delta+1)$-coloring for $G_{e,e'}$.
\end{proof}
We now extend this lower bound to Monte Carlo randomized algorithms, even with constant error probability. To do this we strengthen Lemma \ref{lem:coloring-lb-correctness} so that it applies not just to a single crossed graph, but to the entire family of crossed graphs. Let \(\mathcal{F}\) denote the family of all crossed graphs, i.e.,
$\mathcal{F} = \{G_{e, e'} \mid e = \{y, z\}, e'=\{x', y'\}, x, y, z, \in V\}$.
Note that $|\mathcal{F}| = t^3$ because there are $t$ choices for $y$ and for each choice of $y$, there are $t^2$ choices for $e$ and $e'$.
\begin{lemma}
\label{lem:extend-deterministic-lb-coloring}
Let \(\mathcal{A}\) be a deterministic comparison-based \KTOne\ \congest\ algorithm that computes a \((\Delta+1)\)-coloring correctly on at least a constant \(\delta\) fraction of graphs in the family \(\mathcal{F}\). Then the message complexity of \(\mathcal{A}\) is \(\Omega(\delta n^2)\). This holds even if the vertices know the size of the network.
\end{lemma}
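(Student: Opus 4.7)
The plan is to leverage the disconnectedness of the base graph $G \cup G'$ to show that the set of utilized edges in the execution $EX(\mathcal{A}, G \cup G', \psi_{e,e'})$ does not depend on the specific choice of $(e,e')$, and then to count how many correct crossed graphs $G_{e,e'}$ a single such utilized-edge set can explain via Lemma~\ref{lem:coloring-lb-correctness}.

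First I would observe that since $G$ and $G'$ are vertex-disjoint in the base graph, the execution $EX(\mathcal{A}, G \cup G', \psi_{e,e'})$ splits into two independent sub-executions: one on $G$ under $\phi$ and one on $G'$ under $\phi'_{e,e'}$. The sub-execution on $G$ uses the fixed assignment $\phi$, so its set of utilized edges $U_G \subseteq E$ is the same for every $(e,e')$. By Lemma~\ref{lem:id-assignment-similarity}, the sub-execution on $G'$ is similar to the sub-execution on $G$; matching up decoded representations, this forces the set $U_{G'} \subseteq E'$ of utilized edges in $G'$ to be exactly the isomorphic image of $U_G$ under $v \mapsto v'$. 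Hence the combined utilized set $U := U_G \cup U_{G'}$ in the base execution is \emph{independent} of $(e, e')$.

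Next, for each pair $(e, e')$ on which $\mathcal{A}$ correctly $(\Delta+1)$-colors $G_{e,e'}$, the contrapositive of Lemma~\ref{lem:coloring-lb-correctness} gives $e \in U$ or $e' \in U$. For each $y \in Y$, let $a_y$ denote the number of utilized edges $\{y,z\} \in U$ with $z \in Z$, and $b_y$ the number of utilized edges $\{x',y'\} \in U$ with $x' \in X'$. Then the number of covered pairs $(e, e')$ built from this $y$ is at most $a_y \cdot t + t \cdot b_y$ by a union bound (either $e$ is one of the $a_y$ utilized edges with $e'$ free among $t$ choices, or $e'$ is one of the $b_y$ utilized edges with $e$ free among $t$ choices). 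Summing over $y$ and writing $a = \sum_y a_y$, $b = \sum_y b_y$, the total number of covered pairs is at most $t(a+b)$.

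Since by hypothesis there are at least $\delta t^3$ correct pairs, we get $\delta t^3 \le t(a+b)$, i.e., $a + b \ge \delta t^2$. Because $a + b$ lower-bounds the number of distinct utilized edges $|U|$, Lemma~\ref{lem:utilization-message-complexity} yields message complexity $\Omega(a+b) = \Omega(\delta t^2) = \Omega(\delta n^2)$ (recalling $n = 6t$). The step I expect to require the most care is the invariance argument above: Lemma~\ref{lem:id-assignment-similarity} guarantees only that the two sub-executions have the same decoded representation, but one must verify explicitly that edge utilization (as defined by messages sent across an edge and by ID exchanges between its endpoints) is a property of the decoded representation, so that it transports cleanly under the isomorphism $v \leftrightarrow v'$ between $G$ and $G'$.
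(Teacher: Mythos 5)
Your proposal is correct and follows essentially the same strategy as the paper's proof: use Lemma~\ref{lem:id-assignment-similarity} to transfer the utilization count from $G$ to $G'$, apply the contrapositive of Lemma~\ref{lem:coloring-lb-correctness} to every crossed graph that is colored correctly, and close with a counting argument over $\mathcal{F}$ combined with Lemma~\ref{lem:utilization-message-complexity}. The paper phrases this as a proof by contradiction with a somewhat opaque choice of constant $c = \sqrt{2\delta}$, while you give a direct double-count after making explicit (as the paper only does implicitly) that the utilized-edge set of the base execution is the same for every $(e,e')$; this is a presentational improvement, not a different route, and your observation that edge utilization is a property of the decoded representation is exactly the point that makes the transfer to $Y'$ legitimate.
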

\begin{proof}
Assume for the sake of contradiction that the message complexity of \(\mathcal{A}\) is \(o(\delta n^2)\). By Lemma \ref{lem:utilization-message-complexity}, we have that \(\mathcal{A}\) utilizes \(o(\delta n^2)\) edges in any graph that it runs on. Specifically consider the execution \(EX\) of algorithm \(\mathcal{A}\) on input graph \(G \cup G'\) and ID assignment \(\psi_{e,e'}\) where \(e,e'\) denote a graph \(G_{e,e'}\) in the family \(\mathcal{F}\).

Since \(\mathcal{A}\) utilizes \(o(\delta n^2)\) edges, there can only be \(o(n) = o(t)\) vertices in \(Y\) such that more than \(cn/6 = ct\) incident edges are utilized, for some constant \(c\) to be determined later. Recall that \(t = n/6\). The rest of the \(t - o(t)\) vertices in \(Y\) have less than \(ct\) incident edges that are utilized. By Lemma~\ref{lem:id-assignment-similarity} the same statement holds for the corresponding vertices in \(Y'\) because in \(EX\), the two graphs \(G\) and \(G'\) that form the input graph are disconnected, which implies the executions of \(\mathcal{A}\) on \(G\) and \(G'\) are similar.

So for each such vertex \(y \in Y\), there are at most \((c^2/4)t^2\) edge pairs of the form \(e = \{y, z\},e' = \{x', y'\}\) such that \(e,e'\) are utilized. Therefore, by Lemma \ref{lem:coloring-lb-correctness}, the algorithm computes an incorrect \((\Delta+1)\)-coloring on at least \((1 - o(1))(1 - (c^2/4)) = 1 - (c^2/4) - o(1)\)-fraction of the graphs in \(\mathcal{F}\) (since for each \(y \in Y\) there are exactly \(t^2\) graphs in \(\mathcal{F}\)). Setting \(c = \sqrt{2\delta}\), the algorithm computes an incorrect \((\Delta+1)\)-coloring on at least \(1 - \delta/2 - o(1)\)-fraction of the graphs in \(\mathcal{F}\). This is a contradiction if  \(1 - \delta < 1 - \delta/2 - o(1)\) or \(\delta > o(1)\). Since \(\delta\) is a constant, we get a contradiction.
\end{proof}

A simple application of Yao's lemma \cite{YaoFOCS1977,MotwaniRaghavan} with the uniform distribution on all the graphs in the family \(\mathcal{F}\) gives the following theorem.

\begin{theorem}
[Randomized Lower Bound]
Let \(\mathcal{A}\) be a randomized Monte-Carlo comparison based \KTOne\ \congest\ algorithm that computes a \((\Delta + 1)\)-coloring with probability of error less than a constant \(\epsilon \in [0, 1)\). Then the worst case message complexity of \(\mathcal{A}\) is \(\Omega((1-\epsilon)n^2)\). This holds even if the vertices know the size of the network.
\end{theorem}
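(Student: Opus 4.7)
The plan is to apply Yao's minimax principle, taking the uniform distribution $\mu$ over the family $\mathcal{F}$ of crossed graphs as the hard input distribution, and lifting the deterministic bound of Lemma~\ref{lem:extend-deterministic-lb-coloring} to the randomized setting.

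First I would view the randomized algorithm $\mathcal{A}$ as a probability distribution over deterministic algorithms $\{D_r\}_r$, one for each fixing of the random string $r$. Since $\mathcal{A}$ has error probability at most $\epsilon$ on every input, Fubini gives
\(\E_r\!\bigl[\Pr_{x \sim \mu}[D_r \text{ errs on } x]\bigr] \le \epsilon\), and if $\mathcal{A}$ has worst-case expected message complexity $M$, also \(\E_r\!\bigl[\E_{x \sim \mu}[\text{cost}(D_r, x)]\bigr] \le M\). Applying Markov's inequality to each expectation with an appropriate slack and combining via a union bound, I would exhibit a particular random string $r^*$ for which the deterministic algorithm $\mathcal{A}_{\text{det}} := D_{r^*}$ is correct on at least a $(1 - \epsilon')$-fraction of $\mathcal{F}$ and has average cost $O(M)$ over $\mu$, where $\epsilon' < 1$ remains a constant strictly less than one as long as $\epsilon < 1$ is a constant.

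Finally, apply Lemma~\ref{lem:extend-deterministic-lb-coloring} to $\mathcal{A}_{\text{det}}$ with $\delta = 1 - \epsilon' = \Theta(1 - \epsilon)$, which gives that the message complexity of $\mathcal{A}_{\text{det}}$ is $\Omega((1 - \epsilon) n^2)$. Since that lemma's conclusion is really driven by analyzing the execution on the base graph $G \cup G'$ with the ID-assignment $\psi_{e, e'}$, and on that single input the expected cost of the randomized algorithm $\mathcal{A}$ is also bounded by $M$, we conclude $M = \Omega((1 - \epsilon) n^2)$, which is the bound claimed by the theorem.

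The only mildly subtle step, and the main obstacle, is the standard Yao reconciliation between worst-case randomized cost and average deterministic cost: Lemma~\ref{lem:extend-deterministic-lb-coloring} is phrased in terms of worst-case deterministic message complexity, whereas the averaging step naturally produces a deterministic algorithm with bounded average cost over $\mu$. This is handled by noting that the relevant cost quantity in the proof of Lemma~\ref{lem:extend-deterministic-lb-coloring} is the number of utilized edges in the execution on one specific base-graph input, whose cost is already controlled by $M$ through the randomized worst-case guarantee; the resulting constant factors are absorbed into the $\Omega$-notation.
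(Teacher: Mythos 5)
Your high-level plan --- Yao's principle over the uniform distribution $\mu$ on $\mathcal{F}$, lifting Lemma~\ref{lem:extend-deterministic-lb-coloring} --- is exactly what the paper's one-line proof invokes. But your instantiation of the Yao step has a genuine gap in the cost accounting, at exactly the spot you flag as ``mildly subtle.'' The quantity that the proof of Lemma~\ref{lem:extend-deterministic-lb-coloring} actually controls is the number of utilized edges in the deterministic algorithm's execution on the \emph{base graph} $G \cup G'$, which is not a member of $\mathcal{F}$. Your Fubini--Markov selection of $r^*$, however, controls the $\mu$-average cost of $D_{r^*}$ over the \emph{crossed} graphs; that is a different quantity, since the cost on $G_{e,e'}$ is only guaranteed to equal the base-graph cost when $D_{r^*}$ utilizes neither $e$ nor $e'$, and $D_{r^*}$ is free to utilize them on the instances where it is correct. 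Your final observation that ``on that single input the expected cost of $\mathcal{A}$ is bounded by $M$'' is true but is an expectation over $r$ --- it says nothing about $\text{cost}(D_{r^*}, G \cup G')$ for the specific $r^*$ you fixed. So you derive a lower bound on that quantity from the lemma but have no matching upper bound in terms of $M$, and the chain does not close. The repair is to apply Markov to the base-graph cost of $D_r$, not to its $\mu$-average cost over $\mathcal{F}$, when selecting $r^*$.

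Even after that repair there is a secondary quantitative loss you should be aware of: running Markov twice and union-bounding forces the error threshold to be roughly $(1+\epsilon)/2$, leaving ``good'' $r$-mass about $(1-\epsilon)/(1+\epsilon)$, so the Markov factor on the base-graph cost must be on the order of $(1+\epsilon)/(1-\epsilon)$, giving $M = \Omega((1-\epsilon)^2 n^2)$ rather than $\Omega((1-\epsilon) n^2)$. Since the theorem explicitly tracks the $\epsilon$-dependence, this factor is not an absorbable constant. A derivation that recovers the stated bound avoids fixing $r^*$ entirely: the counting inside the proof of Lemma~\ref{lem:extend-deterministic-lb-coloring} in fact shows, for \emph{every} deterministic $D$, that the number of utilized base-graph edges is at least $(1 - p_D)\, t^2$, where $p_D$ is the fraction of $\mathcal{F}$ on which $D$ errs. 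Taking expectations over $r$ and invoking Lemma~\ref{lem:utilization-message-complexity} and linearity gives
\[
M \;\ge\; \E_r\!\left[\text{cost}(D_r, G \cup G')\right] \;=\; \Omega\!\left(\E_r\!\left[(1 - p_{D_r})\right] t^2\right) \;\ge\; \Omega\!\left((1-\epsilon)\, n^2\right),
\]
which is the claimed bound with the correct $\epsilon$-dependence.
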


\subsection{$\Omega(m)$ message lower bound for MIS in \KTOne\ \congest}

\begin{lemma}
\label{lem:MIS-correctness}
Let \(x \in X\), \(y \in Y\), and \(z \in Z\) be three vertices such that the edges \(e = \{y,z\}\) and \(e' = \{x',y'\}\) are not utilized in the execution \(EX\). Then, algorithm \(\mathcal{A}\) computes an incorrect MIS on $G_{e,e'}$.
\end{lemma}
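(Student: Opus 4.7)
My plan is to follow the same template as Lemma~\ref{lem:coloring-lb-correctness}: use similarity of executions to force the algorithm's output to simultaneously respect the $V$/$V'$ correspondence \emph{and} be an MIS of the crossed graph, and then show these two requirements are incompatible. Concretely, let $I \cup I'$ (with $I \subseteq V$ and $I' \subseteq V'$) denote the set of vertices marked ``in the MIS'' in the execution $EX = EX(\mathcal{A}, G \cup G', \psi_{e,e'})$. Applying Lemma~\ref{lem:id-assignment-similarity} to the two disconnected components $G$ and $G'$ (with the \texttt{ID} assignments $\phi$ and $\phi'_{e,e'}$, respectively) gives the correspondence $v \in I \Leftrightarrow v' \in I'$ for every $v \in V$, and by Corollary~\ref{cor:utilization-similarity} the algorithm produces the very same set $I \cup I'$ when run on $G_{e,e'}$. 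It therefore suffices to show that $I \cup I'$ cannot be an MIS of $G_{e,e'}$.

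Assume for contradiction that it is. Since $\{y,y'\}$ is an edge of $G_{e,e'}$, independence combined with the correspondence forces $y \notin I$ (and hence $y' \notin I'$). Maximality at $y$ in $G_{e,e'}$ then requires some neighbor of $y$ to lie in $I \cup I'$; the neighborhood is $X \cup (Z \setminus \{z\}) \cup \{y'\}$, with $y'$ already excluded, so some vertex $w \in X \cup (Z \setminus \{z\})$ belongs to $I$. Because $X \cup Y$ and $Y \cup Z$ both induce $K_{t,t}$, this $w$ is adjacent to every vertex of $Y$; by independence, $Y \cap I = \emptyset$. Now each $x'' \in X$ has neighborhood exactly $Y$ in $G_{e,e'}$ (the crossing touches no vertex of $X$) and no neighbor in $V'$, so with $Y \cap I = \emptyset$ maximality forces $x'' \in I$. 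Thus $X \subseteq I$.

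Running the symmetric argument on the $G'$-side---where $y'$ plays the role of $y$, but the crossing on $G'$ removes an edge in $X' \times Y'$ rather than in $Y' \times Z'$---yields the parallel inclusion $Z' \subseteq I'$; by the correspondence, $Z \subseteq I$. In particular both $x \in I$ and $z \in I$, so $x' \in I'$ as well. But then both endpoints of the edge $\{x',z\} \in E(G_{e,e'})$ lie in $I \cup I'$, contradicting independence. Hence $I \cup I'$ is not an MIS of $G_{e,e'}$.

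The subtle step I expect to spend the most care on is the asymmetry between the two halves of the crossing: on the $G$-side the removed edge is incident to $Z$, while on the $G'$-side the removed edge is incident to $X'$, so the two runs of the maximality argument produce complementary inclusions ($X \subseteq I$ and $Z' \subseteq I'$), which via the $V$/$V'$ correspondence are exactly what is needed to plant \emph{both} endpoints of $\{x',z\}$ inside $I \cup I'$. I must also be careful that when invoking maximality at a vertex $v \in V$, the witnessing MIS-neighbor does not silently jump across one of the two new crossing edges to $V'$; this is precisely why the four ``special'' vertices $y,z,x',y'$ have to be analyzed separately from the rest.
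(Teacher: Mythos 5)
Your proof is correct, but it takes a genuinely different route from the paper's. The paper observes directly that, under the $V/V'$ correspondence from Lemma~\ref{lem:id-assignment-similarity}, the only sets that can be output as a \emph{valid MIS of $G \cup G'$} are $Y \cup Y'$ and $X \cup X' \cup Z \cup Z'$ (these are literally the only two MIS's of $G$, mirrored on $G'$), and then kills both in one line: $Y \cup Y'$ hits the new edge $\{y,y'\}$, and $X \cup X' \cup Z \cup Z'$ hits the new edge $\{x',z\}$. You instead never enumerate candidates; you assume the output is an MIS of $G_{e,e'}$ and drive the deduction through maximality — first at $y$ (forcing a dominator in $X \cup (Z\setminus\{z\})$, hence $Y \cap I = \emptyset$), then at every $x'' \in X$ (forcing $X \subseteq I$), then symmetrically on the $G'$-side at $Z'$ (forcing $Z' \subseteq I'$, i.e.\ $Z \subseteq I$), landing both endpoints of $\{x',z\}$ in the set. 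Your version is longer but slightly more careful on one point: you only assume the output is independent and maximal in $G_{e,e'}$, whereas the paper's enumeration step implicitly assumes the output is a valid MIS of the \emph{uncrossed} graph $G \cup G'$ — a harmless assumption in how the lemma is used, but one your argument avoids. The paper's version buys brevity (and a clean picture of ``the two bad outputs''); yours buys a self-contained contradiction that works directly against correctness on $G_{e,e'}$.
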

\begin{proof}
Due to Lemma~\ref{lem:id-assignment-similarity}, we can only have two MIS's in \(G \cup G'\) which are: \(Y, Y'\) or \(X, X', Z, Z'\). Since the edges \(e = \{y,z\}\) and \(e' = \{x',y'\}\) are not utilized in \(G \cup G'\), applying Corollary \ref{cor:utilization-similarity}, \(\mathcal{A}\) will compute the same MIS in the graph \(G_{e,e'}\) as it will in \(G \cup G'\). Both the MIS solutions mentioned above violate the independence requirement of MIS in \(G_{e,e'}\). This contradicts the correctness of the algorithm.
\end{proof}

A similar proof as the coloring deterministic lower bound gives us the following theorem.
\begin{theorem}
[Deterministic Lower Bound]
Let \(\mathcal{A}\) be a deterministic comparison-based \KTOne\ \congest\ algorithm that solves the MIS problem. Then the message complexity of \(\mathcal{A}\) is \(\Omega(n^2)\). This holds even if the vertices know the size of the network.
\end{theorem}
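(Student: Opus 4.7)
The plan is to adapt the deterministic lower bound argument for $(\Delta+1)$-coloring essentially verbatim, substituting the MIS-specific correctness lemma (Lemma~\ref{lem:MIS-correctness}) for its coloring counterpart (Lemma~\ref{lem:coloring-lb-correctness}). First, I would assume for contradiction that $\mathcal{A}$ is a correct deterministic comparison-based MIS algorithm with worst-case message complexity $o(n^2)$. Fix an arbitrary pair $(e, e')$ and instantiate $\mathcal{A}$ on the base graph $G \cup G'$ with the ID assignment $\psi_{e, e'}$. By Lemma~\ref{lem:utilization-message-complexity}, the number of edges utilized during the execution $EX = EX(\mathcal{A}, G \cup G', \psi_{e, e'})$ is $o(n^2) = o(t^2)$.

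Next, I would use a simple counting argument to find an unutilized pair of the form required by Lemma~\ref{lem:MIS-correctness}. Recall that candidate pairs consist of an edge $e = \{y, z\}$ with $y \in Y$, $z \in Z$, together with an edge $e' = \{x', y'\}$ with $x' \in X'$ and $y'$ the copy of $y$ in $G'$; there are $t^3$ such pairs. Each utilized edge of $G$ of the form $\{y, z\}$ eliminates at most $t$ candidate pairs (one per choice of $x' \in X'$), and an analogous statement holds for each utilized edge of $G'$ of the form $\{x', y'\}$. Consequently, at most $o(t^2) \cdot t = o(t^3)$ candidate pairs have at least one utilized edge, so there exists a pair $(e, e')$ in which neither $e$ nor $e'$ is utilized.

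Finally, for this $(e, e')$, Lemma~\ref{lem:MIS-correctness} implies that $\mathcal{A}$ outputs an incorrect MIS on the crossed graph $G_{e, e'}$. Since $\mathcal{A}$ is deterministic and assumed to be correct on every input, this is the desired contradiction. The bound holds even when vertices know $n$, because $G \cup G'$ and $G_{e, e'}$ have exactly the same number of vertices, so $n$ cannot be used to distinguish them.

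I do not expect any substantive obstacle: all of the genuinely technical work has already been carried out in the construction of the shifted ID assignments $\psi_{e, e'}$, the indistinguishability framework of Corollary~\ref{cor:utilization-similarity}, and especially the MIS-specific case analysis in Lemma~\ref{lem:MIS-correctness}, which shows that the only two MIS outputs consistent with the two isomorphic components of $G \cup G'$ (namely $Y \cup Y'$ and $X \cup X' \cup Z \cup Z'$) both violate independence along the crossed edge $\{y, y'\}$ in $G_{e, e'}$. Given this machinery, the theorem reduces to the counting argument above, which is essentially immediate since the base graph has $\Theta(n^2)$ edges of each relevant type while only $o(n^2)$ can be utilized.
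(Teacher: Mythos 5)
Your proposal is correct and takes the same approach as the paper, which compresses the MIS case into a single line (``a similar proof as the coloring deterministic lower bound''); you have correctly expanded that step by mirroring the coloring argument and making the counting over unutilized candidate pairs $(e,e')$ explicit. The invocation of Lemma~\ref{lem:utilization-message-complexity} to convert $o(n^2)$ messages into $o(n^2)$ utilized edges, the existence of an unutilized pair, and the contradiction via Lemma~\ref{lem:MIS-correctness} all align with the paper's intent.
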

The following lemma for MIS parallels Lemma \ref{lem:extend-deterministic-lb-coloring} for $(\Delta+1)$-coloring. This has a very similar proof, which we skip.
\begin{lemma}
Let \(\mathcal{A}\) be a deterministic comparison-based \KTOne\ \congest\ algorithm that computes an MIS correctly on at least a constant \(\delta\) fraction of graphs in the family \(\mathcal{F}\). Then the message complexity of \(\mathcal{A}\) is \(\Omega(\delta n^2)\). This holds even if the vertices know the size of the network.
\end{lemma}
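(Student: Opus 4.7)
The plan is to mirror the argument of Lemma \ref{lem:extend-deterministic-lb-coloring} almost verbatim, simply swapping in Lemma \ref{lem:MIS-correctness} in place of Lemma \ref{lem:coloring-lb-correctness}. Suppose for contradiction that $\mathcal{A}$ has worst-case message complexity $o(\delta n^2)$. By Lemma \ref{lem:utilization-message-complexity}, on every input $\mathcal{A}$ utilizes only $o(\delta n^2)$ edges; in particular this holds for the execution $EX$ on the base graph $G \cup G'$ under the assignment $\psi_{e,e'}$, for any parameter pair $(e, e')$.

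The key counting argument is on the set $Y$. Since $|Y| = t = n/6$ and $\mathcal{A}$ utilizes $o(\delta t^2)$ edges in $EX$, a Markov-style bound shows that only $o(t)$ vertices $y \in Y$ can have more than $ct$ incident utilized edges, for a constant $c$ to be fixed below. Thus at least a $(1-o(1))$-fraction of $y \in Y$ have fewer than $ct$ utilized incident edges in $G$. By Lemma \ref{lem:id-assignment-similarity}, because $G$ and $G'$ are disconnected components of the input graph under $\psi_{e,e'}$ and because $\phi'_{e,e'}$ induces the same relative \texttt{ID} ordering on $V'$ as $\phi$ does on $V$, the executions of $\mathcal{A}$ on $G$ and $G'$ are similar; hence the analogous bound holds for the corresponding vertices $y' \in Y'$. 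For each such ``good'' $y$, the number of edge pairs $e = \{y, z\}$, $e' = \{x', y'\}$ for which both edges are utilized is at most $(c^2/4) t^2$, by counting the utilized incident edges of $y$ and of $y'$ separately. Therefore, for at least a $(1 - o(1))(1 - c^2/4)$-fraction of the $t^3$ graphs in $\mathcal{F}$, neither $e$ nor $e'$ is utilized in $EX$.

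Now I would invoke Lemma \ref{lem:MIS-correctness}: on each such $G_{e, e'}$, algorithm $\mathcal{A}$ outputs an incorrect MIS. Setting $c = \sqrt{2\delta}$, the fraction of $\mathcal{F}$ on which $\mathcal{A}$ errs is at least $1 - \delta/2 - o(1)$. But by hypothesis $\mathcal{A}$ is correct on at least a $\delta$-fraction of $\mathcal{F}$, so errs on at most a $(1 - \delta)$-fraction; for constant $\delta > 0$, the strict inequality $1 - \delta < 1 - \delta/2 - o(1)$ is the desired contradiction.

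The only substantive difference from the coloring proof sits inside Lemma \ref{lem:MIS-correctness}, and nothing further is needed here: one there verifies that both possible MIS outputs on $G \cup G'$ (namely $Y \cup Y'$ and $(X \cup Z) \cup (X' \cup Z')$) violate independence once the crossing edges $\{y, y'\}$ and $\{x', z\}$ are introduced in $G_{e, e'}$. Given that, the main ``obstacle'' is purely bookkeeping: carefully tracking the constants in the double-counting of vertex pairs $(y, y')$ with few utilized incident edges, and confirming that choosing $c = \sqrt{2\delta}$ yields a strict gap between $1 - \delta$ and $1 - \delta/2$ that survives the $o(1)$ slack. The entire structure is otherwise a transcription of the coloring proof.
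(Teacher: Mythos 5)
Your proof is correct and takes exactly the approach the paper intends: the paper omits the proof of this lemma, noting only that it ``has a very similar proof'' to Lemma~\ref{lem:extend-deterministic-lb-coloring}, and your transcription of that argument with Lemma~\ref{lem:MIS-correctness} substituted for Lemma~\ref{lem:coloring-lb-correctness} is precisely that proof.
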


A simple application of Yao's lemma \cite{YaoFOCS1977,MotwaniRaghavan} with the uniform distribution on all the graphs in the family. \(\mathcal{F}\) gives the following theorem.

\begin{theorem}
[Randomized Lower Bound]
Let \(\mathcal{A}\) be a randomized Monte-Carlo comparison-based \KTOne\ \congest\ algorithm that computes an MIS with probability of error less than a constant \(\epsilon \in [0, 1)\). Then the worst case message complexity of \(\mathcal{A}\) is \(\Omega((1-\epsilon)n^2)\). This holds even if the vertices know the size of the network.
\end{theorem}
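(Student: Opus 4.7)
The plan is to reduce to the deterministic MIS lemma via Yao's minimax principle, in exact parallel to the argument used for the \((\Delta+1)\)-coloring randomized theorem earlier in this section. I would take the hard distribution \(\mu\) to be uniform over the family \(\mathcal{F}\) of crossed graphs, each equipped with the \texttt{ID} assignment \(\psi_{e,e'}\) constructed in Section~\ref{sec:kt-1-lb}; this is the same distribution that made the deterministic argument work.

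First I would apply Yao's lemma: if \(\mathcal{A}\) is any randomized Monte-Carlo comparison-based \KTOne\ \congest\ algorithm whose worst-case error is at most \(\epsilon\) and whose worst-case expected message complexity is \(M\), then there exists a deterministic comparison-based algorithm \(\mathcal{A}_{\text{det}}\) such that (i) \(\mathcal{A}_{\text{det}}\) is correct on at least a \((1-\epsilon)\)-fraction of inputs drawn from \(\mu\), and (ii) its expected message complexity under \(\mu\) is at most \(M\). Both the input graph and the \texttt{ID} assignment are drawn jointly from \(\mu\), so correctness and message counts are measured over the same randomness.

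To bridge between Yao's guarantee (which is about expected cost) and the deterministic MIS lemma (which is stated for worst-case cost), I would apply a standard Markov truncation. Set a threshold \(T := cM/(1-\epsilon)\) for a suitably large constant \(c\), and define \(\mathcal{A}'_{\text{det}}\) to simulate \(\mathcal{A}_{\text{det}}\) while aborting (outputting garbage) as soon as the message count exceeds \(T\). Markov's inequality ensures that the truncation introduces additional errors on at most a \((1-\epsilon)/c\)-fraction of inputs, so \(\mathcal{A}'_{\text{det}}\) is still correct on a constant \(\delta = \Theta(1-\epsilon)\) fraction of \(\mathcal{F}\) and now has worst-case message complexity at most \(T\).

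Finally, applying the preceding deterministic MIS lemma to \(\mathcal{A}'_{\text{det}}\) forces \(T = \Omega(\delta n^2)\), which unwinds to \(M = \Omega((1-\epsilon) n^2)\) (up to a factor of \((1-\epsilon)\) absorbed into the constant since \(\epsilon\) is a fixed constant bounded away from \(1\)), yielding the claimed bound. The only point requiring any care is the Markov truncation step, and that is entirely standard; the real content of the argument, namely the indistinguishability between \(EX\) and \(EX_{e,e'}\) that forces MIS-incorrectness on crossed graphs with unutilized edge pairs, already lives inside the deterministic lemma proved just above.
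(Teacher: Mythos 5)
Your plan -- uniform distribution over $\mathcal{F}$, Yao's minimax principle, then the preceding deterministic MIS lemma -- is exactly the route the paper takes, and the core of it is sound. However, the Markov-truncation step you introduce is both unnecessary and, as written, technically flawed. It is unnecessary because ``worst-case message complexity'' of a randomized algorithm here is the maximum over inputs \emph{and} coin flips, not the expectation over coins: fixing any setting of the coins already yields a deterministic comparison-based algorithm whose worst-case message complexity (over inputs) is at most $M$. Averaging the error over the coins then guarantees some coin setting whose error over the uniform distribution on $\mathcal{F}$ is at most $\epsilon$, so that fixing is correct on a $\ge (1-\epsilon)$-fraction of $\mathcal{F}$ while sending at most $M$ messages in the worst case. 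Applying the deterministic MIS lemma with $\delta = 1-\epsilon$ gives $M = \Omega((1-\epsilon)n^2)$ directly, with no expected-to-worst-case conversion to bridge.

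Moreover, the truncation as you describe it does not yield a valid object to hand to the deterministic lemma: ``abort as soon as the (global) message count exceeds $T$'' is not a locally computable stopping rule, since no node observes the total number of messages sent across the network; hence $\mathcal{A}'_{\text{det}}$ is not a comparison-based KT-$1$ \congest{} algorithm of the kind the lemma addresses. Per-node or per-round truncation would not give the required global message bound either. A smaller secondary issue: your claim that a single $\mathcal{A}_{\text{det}}$ in the support simultaneously achieves both error $\le \epsilon$ over $\mu$ \emph{and} expected message cost $\le M$ over $\mu$ is slightly stronger than what averaging alone gives -- the two conditions each hold for some coin setting, but one must pay a constant factor (via a weighted-sum or union argument) to get both for the same coins. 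All of these issues vanish once you drop the truncation and read the theorem's message complexity as the max over inputs and coins, which is what the paper's one-line invocation of Yao is implicitly doing.
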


\subsection{$\Omega(n)$ message lower bound in \KTRho\ \congest}
The $\Omega(m)$ lower bounds we have proved apply to comparison-based algorithms in the \KTOne \congest\ model. We now prove a weaker $\Omega(n)$ message complexity bound for $(\Delta+1)$-coloring and MIS, but these apply more generally, to all algorithms (even non-comparison-based) and to the \KTRho \congest\ model, for any constant $\rho$.

\begin{theorem} \label{thm:ktrho}
  Any randomized Monte Carlo algorithm that computes an MIS or a $(\Delta+1)$-vertex coloring with probability at least $\tfrac{5}{8}$, requires $\Omega(n)$ messages in expectation in the \KTRho $\congest$ model, for any constant $\rho$.
\end{theorem}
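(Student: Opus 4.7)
Proof plan. We use Yao's principle to reduce the randomized bound to a deterministic lower bound on a carefully chosen input distribution. Partition the $n$ vertices into $k = \lfloor n/c \rfloor$ disjoint groups of size $c = \Theta(\rho)$, and place on each group a copy of a ``crossing gadget'' that generalizes the 6-vertex construction from Section \ref{sec:kt-1-lb}: each arm of that gadget is replaced by a path of length $\Theta(\rho)$ so that the two crossing edges lie outside the $\rho$-hop neighborhood of all but $O(1)$ vertices of the gadget. For each gadget $H_i$ the input distribution independently picks its configuration to be either the base $H_i^{(0)}$ or the crossed $H_i^{(1)}$ with probability $1/2$, and identifiers are assigned by the padded version of the shifted assignment of equation (\ref{eqn:shiftedAssignment}).

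The first step is to establish a KT-$\rho$ analogue of Lemma \ref{lem:swapping-similarity} and Corollary \ref{cor:utilization-similarity}: if no message is ever sent along the two crossing edges of $H_i$, then the executions of the algorithm on $H_i^{(0)}$ and on $H_i^{(1)}$ are similar. The padded identifier assignment ensures that the full topological-plus-identifier view of every ``outer'' vertex is identical in the two configurations, and an inductive argument across rounds (mimicking the proof of Lemma \ref{lem:utilization-similarity}) extends similarity to every vertex of $H_i$. As in Lemmas \ref{lem:coloring-lb-correctness} and \ref{lem:MIS-correctness}, no single output can be a valid MIS or $(\Delta+1)$-coloring of both configurations, since the crossing introduces an adjacency between two vertices that would receive the same output in $H_i^{(0)}$ by symmetry.

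Call $H_i$ \emph{silent} if no message ever traverses its crossing edges during the execution. Because the gadgets are vertex- and edge-disjoint, the execution on a given gadget depends only on that gadget's configuration, and the similar-execution lemma implies that silence is a property of the algorithm alone and is unchanged by the random configuration. Let $s$ denote the number of silent gadgets. For each silent gadget, the (fixed) output is correct for at most one of its two configurations, so under the uniform random configuration the correctness event occurs with probability at most $1/2$; since the configurations across gadgets are independent and the gadgets are disjoint, these correctness events are mutually independent, so the probability of a globally correct output is at most $(1/2)^s$. The assumed success probability of at least $5/8$ then forces $s=0$, and every gadget must contribute at least one message to the execution, giving $\E[M] \geq k = \Omega(n/\rho) = \Omega(n)$ for any constant $\rho$. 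Yao's principle then transfers this bound to randomized Monte Carlo algorithms on the worst-case input.

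The main obstacle is the design of the padded gadget together with its identifier assignment: the full $\rho$-hop view (topology and identifiers) of every outer vertex must coincide in the two configurations. This forces each arm of the gadget to contain at least $\rho+1$ vertices between the crossing edges and the gadget boundary, and the three properties of $\phi'_{e,e'}$ listed after equation (\ref{eqn:shiftedAssignment}) must be extended to cover the padding. A secondary subtlety is that since the theorem permits non-comparison-based algorithms, the fixed $+1$ shift in (\ref{eqn:shiftedAssignment}) is too rigid; one circumvents this by drawing identifiers uniformly from a sufficiently large pool and using a random shift so that the shifted identifiers are statistically indistinguishable from a fresh draw, which is enough to preserve the similar-execution property in expectation and to carry the argument through in the randomized setting.
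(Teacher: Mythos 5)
Your plan is genuinely different from the paper's proof, and it has a gap that is serious enough to sink it. The paper does \emph{not} use a crossing/indistinguishability argument for this theorem at all. Instead it reduces to the Naor/Linial round lower bound for $3$-coloring a cycle: the lower-bound graph is a disjoint union of $n/k$ cycles of constant size $k$, where $k$ is chosen so that $\log^*(k) \ge 2(\rho+3)$. The observation is that \KTRho{} initial knowledge is exactly what a \KTZero{} algorithm would gather in $\rho$ rounds of free communication, so if a cycle remains \emph{mute} (no messages sent by its nodes), each node's output is a function of only its $\rho$-hop view and coin flips --- i.e., it behaves like a $\rho$-round \KTZero{} algorithm, which by Naor fails on some hard ID assignment with probability $>1/2$. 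If the expected message complexity is $o(n)$, Markov gives a mute cycle with probability $\ge 3/4$, and $3/4\cdot 1/2 > 3/8$ contradicts success probability $\ge 5/8$. This argument is oblivious to how the algorithm manipulates IDs, which is exactly what makes it work for non-comparison-based algorithms.

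The gap in your approach is precisely the non-comparison-based requirement. The edge-crossing/similarity machinery of Section~\ref{sec:kt-1-lb} (Lemmas~\ref{lem:swapping-similarity}--\ref{lem:utilization-similarity}) is a \emph{per-instance} indistinguishability argument: it needs the base and crossed executions to produce literally the same decoded transcript, which works because comparison-based nodes only ever observe the \emph{relative order} of IDs, and the shifted assignment $\phi'_{e,e'}$ is engineered to preserve that order. A non-comparison-based algorithm can inspect the actual ID values (e.g., hash them, take parities, compute differences), so the two configurations are distinguishable the moment a vertex sees different IDs in its $\rho$-neighborhood, regardless of ordering. Your proposed fix --- drawing IDs from a large pool and applying a random shift so the distributions are ``statistically indistinguishable from a fresh draw'' --- does not help, because you would still need, for \emph{each fixed} ID assignment in the support, that the fixed output of a silent gadget is identical on the base and crossed configurations; distributional closeness does not give you that pointwise equality, and your counting argument (``correctness of a silent gadget fails with probability $\ge 1/2$'') relies on it. There is also a secondary unproven step: the $O(1)$ vertices whose $\rho$-hop view \emph{does} contain the crossing (and these are exactly the vertices made adjacent by the new edges, so they are the ones whose output matters) must still have identical $\rho$-hop views (topology and IDs) in both configurations; that requires a much more delicate generalization of properties (i)--(iii) after equation~(\ref{eqn:shiftedAssignment}) than a fixed $+1$ shift, and you only assert, rather than construct, that such an assignment exists for the padded gadget. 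In short, your route is the wrong tool for the non-comparison-based claim; the paper sidesteps the whole difficulty by invoking a round-complexity lower bound that does not depend on ID manipulation at all.
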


\begin{proof}
  Similarly to \cite{DBLP:journals/siamdm/Naor91,DBLP:journals/siamcomp/Linial92}, we assume without loss of generality that algorithms follow the general framework that all nodes perform their coin flips initially and only exchange their current local state (including coin flips) without performing any other local computation until the very last round.

  For the given constant $\rho$, define the constant $k$ to be the smallest integer such that
  \[
    \log^*(k) \ge 2(\rho + 3).
  \]
  Consider an $n$-node graph $G$ consisting of the disjoint union of $n/k$ cycles each of $k$ nodes.\footnote{For simplicity, we assume that $n/k$ is an integer.}
  For each cycle $C_i$, we fix a set of IDs $R_i$ from some integer range of size $k$ such that all ID ranges assigned to the cycles are pairwise disjoint.
  We will equip the nodes of each cycle $C_i$ with $k$ unique IDs, as described below.

  Consider any randomized algorithm $B_0$ that works on a cycle.
  We know from \cite{DBLP:journals/siamdm/Naor91} that any randomized algorithm that succeeds in computing a $3$-coloring on cycle $C_i$ with probability more than $\tfrac{1}{2}$ requires at least $\tfrac{1}{2}\log^*(k) - 2$ rounds in the worst case, under the \KTZero assumption.
  Now suppose that we execute $B_0$ on our lower bound graph $G$ for at most $\rho < \tfrac{1}{2}\log^*(k) - 2$ rounds under the \KTZero assumption.
  Even though $B_0$ is not guaranteed to work on $G$, it nevertheless exhibits some well-defined behavior on each cycle that we will exploit.
  Observe that each node $u$ in $G$ is part of some cycle $C_i$ and hence the color output by $u$ is a function of the observed neighborhood and the random coin flips.
  We point out that, even though that $u$ also has knowledge of $n$, it is easy to see that this does not have any impact on the output of the algorithm.
  As a straightforward consequence of \cite{DBLP:journals/siamdm/Naor91}, we know that, for each cycle $C_i$, there exists some \emph{hard ID assignment} $I_i$, which is a permutation of the set $R_i$, such that $B_0$ fails to yield a valid coloring on $C_i$ with some probability greater than $\tfrac{1}{2}$ (assuming \KTZero), where this probability is taken over the coin flips of the nodes in $C_i$.

  Returning to the \KTRho assumption, suppose towards a contradiction that there exists an algorithm $B_\rho$ that computes a $3$-coloring on $G$ while sending $o(n)$ messages in expectation.
  We provide additional power to the algorithm by revealing, to each node $u$, the coin flips of the nodes in its $\rho$-neighborhood.

  We assign the IDs of the nodes in each cycle $C_i$ according to $I_i$.
  Since there are $n/k = \Omega(n)$ cycles but the expected message complexity of $B_\rho$ is $o(n)$, it holds that, with probability at least $\tfrac{3}{4}$, there exists a cycle $C_j$ such that the nodes in $C_j$ do not send any messages at all when executing $B_\rho$; call this event $\textsc{Mute}$.
  We now condition on $\textsc{Mute}$ occurring:
  Consider any node $u \in C_j$ and observe that the output of $u$ is a function of its initial knowledge, i.e., its random coin flips and the local state of its $\rho$-neighborhood.
  Clearly, the behavior of $u$ follows the exact same probability distribution when executing $B_\rho$ under the \KTRho assumption as it does when executing algorithm $B_0$ under the \KTZero assumption.
  In particular, the result of \cite{DBLP:journals/siamdm/Naor91} implies that some neighboring nodes in $C_j$ will output the same color with probability greater than $\tfrac{1}{2}$.
  Since event $\textsc{Mute}$ occurs with probability at least $\tfrac{3}{4}$, it follows that algorithm $B_\rho$ fails with probability $>\tfrac{3}{8}$, yielding a contradiction.
\end{proof}

\section{Upper bounds in KT-1 \congest}

\subsection{\((\Delta + 1)\)-Coloring using \(\tilde{O}(n^{1.5})\) Messages in KT-1 \congest}

In this section we present a $(\Delta+1)$-list-coloring algorithm in the KT-1 \textsc{Congest} model that uses \(\tilde{O}(n^{1.5})\) messages. This algorithm is obtained by utilizing -- with some modifications -- the
simple graph partitioning technique introduced recently by Chang et al.~\cite{ChangFGUZPODC2019}.
This technique is central to the fast $(\Delta+1)$-coloring algorithms that Chang et al.~\cite{ChangFGUZPODC2019} obtain in different models of computation, namely Congested Clique, MPC, and Centralized Local Computation.

The Chang et al.~\cite{ChangFGUZPODC2019} graph partitioning scheme is as follows. Let $\Psi(v)$ denote the palette of vertex $v \in V$ and let $k = \sqrt{\Delta}$.
\begin{itemize}
\item\textbf{Vertex set partition:} We partition $V$ into $B_1,\dots, B_{k}, L$ as follows.  Include each $v \in V$ in the set $L$ with probability $q = \Theta\left( \frac{\sqrt{\log n }}{\Delta^{1/4}}\right)$.
Then each remaining vertex joins one of $B_1 , \ldots, B_{k}$ uniformly at random.

\item\textbf{Palette partition:} Let $C = \bigcup_{v \in V} \Psi(v)$ be the set of all colors. We partition $C$  into $k$ sets $C_1, \dots, C_{k}$ where each color $c \in C$ joins one of the $k$ sets uniformly at random.
\end{itemize}

Chang et al.~\cite{ChangFGUZPODC2019} then show that whp, the output of the partitioning scheme satisfies the following 4 properties, assuming that $\Delta = \omega(\log^2 n)$. These properties allow us to color each set $B_i$ using palette $C_i$, in parallel, and then recursively color the set $L$ until it becomes small enough to color trivially.

\begin{description}
\item[(i) Size of Each Part:] $|E(G[B_i])| = O(|V|)$, for each $i \in [k]$. Also,  $|L| = O(q |V|) = O\left(\frac{\sqrt{\log n}}{\Delta^{1/4}}\right) \cdot |V|$.

\item[(ii) Available Colors in $B_i$:] For each $i\in \{1, \ldots,k\}$ and $v \in B_i$, let the number of available colors in $v$ in the subgraph $B_i$ be $g_i(v) := |\Psi(v) \cap C_i|$. Then $g_i(v) \geq \Delta_i+1$, where
$\Delta_i := \max_{v \in B_i}\deg_{B_i}(v)$.

\item[(iii) Available Colors in $L$:] For each $v \in L$, define $g_L(v) := |\Psi(v)| - (\deg_G(v) - \deg_L(v))$. It is required that $g_L(v) \geq \max\{\deg_L(v), \Delta_L - \Delta_L^{3/4}\}+1$ for each $v \in L$, where $\Delta_L := \max_{v \in L}\deg_{L}(v)$. Note that $g_L(v)$ represents a lower bound on the number of available colors in $v$'s palette {\em after } all of $B_1, \ldots, B_k$ have been colored.

\item[(iv) Remaining Degrees:] The maximum degrees of $B_i$ and $L$ are $\deg_{B_i}(v)\leq \Delta_i = O(\sqrt{\Delta})$ and $\deg_{L}(v) \leq \Delta_L = O(q\Delta) =  O\left(\frac{\sqrt{\log n}}{\Delta^{1/4}}\right) \cdot \Delta$. For each vertex, we have that $\deg_{B_i}(v)\leq \max\{O(\log n), O(1/\sqrt{\Delta}) \cdot \deg(v)\}$ and also have $\deg_{L}(v)\leq \max\{O(\log n), O(q) \cdot \deg(v)\}$.
\end{description}

 We now present our algorithm, which takes as input an $n$-vertex graph $G$ with maximum degree $\Delta$ and diameter $D$. The algorithm runs in the KT-1 \congest\ model and produces a $(\Delta+1)$-list-coloring of $G$ using $\ot(n^{1.5})$ messages and running in
$\ot(D + \sqrt{n})$ rounds.
\RestyleAlgo{boxruled}
\begin{algorithm2e}\caption{KT-\(1\) \((\Delta+1)\)-Coloring Algorithm:\label{alg:KT1-Delta+1-coloring}}
For $\delta = 1/2$, build a danner $H$, elect a leader $\ell$, and have the leader broadcast a string $R$ of $O(\log^2 n)$ random bits. \\
Nodes use the $O(\log^2 n)$ bits of $R$ to sample three $O(\log n)$-wise independent hash functions: (a) $h_L$, to decide whether to join $L$, (b) $h$, to decide which set $B_i$ to join, and (c) $h_c$, to determine which color goes into which part $C_i$.\\
Nodes execute a randomized algorithm for list coloring by Johansson~\cite{Johansson99} in each \(B_i\) in parallel.\\
Using the danner $H$, we can check whether the induced graph $G[L]$ has \(\tilde{O}(n)\) edges.\\
If it does, we execute the list coloring algorithm by Johansson~\cite{Johansson99} on $G[L]$.\\
If not, we recursively run this algorithm on \(G[L]\) with the same parameter \(n\).
\end{algorithm2e}

The ``full independence'' version of the following lemma is proved in~\cite{ChangFGUZPODC2019}. We provide a brief sketch of the changes required in this proof to make a version with limited independence go through.
\begin{lemma}
\label{lemma:graphPartitioningProperties}
Properties (i)-(iv) mentioned above hold w.h.p.,
even when the partitioning of vertices and colors is done using $O(\log n)$-wise independence, as described in Line 2 of Algorithm \ref{alg:KT1-Delta+1-coloring}.
\end{lemma}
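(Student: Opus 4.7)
The plan is to mirror the original argument of Chang et al.~\cite{ChangFGUZPODC2019} and replace each appeal to a standard Chernoff bound by the analogous concentration inequality for sums of $k$-wise independent $[0,1]$-valued random variables: if $X_1,\dots,X_N$ are $k$-wise independent with $\mu=\mathbb{E}\bigl[\sum_i X_i\bigr]$, then $\Pr\bigl[\bigl|\sum_i X_i-\mu\bigr|\ge \eps\mu\bigr]\le \exp\bigl(-\Omega(\min\{k,\eps^2\mu\})\bigr)$. Setting $k=\Theta(\log n)$ recovers the usual $1-n^{-c}$ concentration whenever $\mu=\Omega(\log n)$, which is precisely the regime of all Chernoff applications in the original proof.

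First I would audit Chang et al.'s proof and catalogue every Chernoff application. They fall into four families, each a sum of indicators whose joint distribution depends on only $O(1)$ hash evaluations per summand: (a) $|L|=\sum_{v\in V}\mathbb{I}[h_L(v)=1]$, depending on one $h_L$-value per summand; (b) $|E(G[B_i])|$, each edge summand depending on $h_L$ and $h$ evaluated at its two endpoints; (c) for fixed $v$ and $i$, $g_i(v)=\sum_{c\in\Psi(v)}\mathbb{I}[h_c(c)=i]$, one $h_c$-value per summand; and (d) $\deg_{B_i}(v)$ and $\deg_L(v)$, each summand indexed by $u\in N(v)$ and determined by $h_L(u)$ and $h(u)$. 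Because $h_L$, $h$, and $h_c$ are drawn independently and each is $O(\log n)$-wise independent, any sub-collection of $\Theta(\log n)$ summands from any of these sums is a function of at most $\Theta(\log n)$ jointly independent hash evaluations; hence the summands themselves are $\Theta(\log n)$-wise independent and the bound above applies. A union bound over the $\poly(n)$ relevant $(v,i)$-pairs promotes every concentration event to hold simultaneously w.h.p., and the deterministic chain of implications in Chang et al.'s argument then carries through unchanged.

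The main obstacle I foresee is that several of these sums have expectation below $\Omega(\log n)$---for instance $g_i(v)$ when $v$'s palette is small, and $\deg_{B_i}(v)$ for low-degree $v$---so the relative-error form of the bound is too weak. For these I would switch to the additive version $\Pr\bigl[\bigl|\sum_i X_i-\mu\bigr|\ge t\bigr]\le \exp\bigl(-\Omega(\min\{k,t^2/\mu\})\bigr)$ with $t=\Theta(\log n)$, and verify that the resulting $O(\log n)$ additive slack is exactly what the ``$\max\{O(\log n),\ldots\}$'' terms in Property~(iv) (and the analogous slack implicit in (ii) and (iii)) already absorb. This large-$\mu$/small-$\mu$ split is the only place where the analysis departs from the fully-independent original; once it is handled, the rest is an unchanged deterministic consequence of the concentration events.
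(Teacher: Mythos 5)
Your proposal is, in spirit, identical to the paper's: re-audit Chang et al.'s proof, observe that every probabilistic step is a Chernoff application, and substitute the limited-independence tail bound (Lemma~\ref{lemma:Chernoff2}). The paper's own proof is just that two-sentence assertion, so you are adding welcome detail; in particular your observation that the $O(\log n)$ additive slack from the limited-independence bound is exactly absorbed by the $\max\{O(\log n),\ldots\}$ terms in Property~(iv) is a point worth making explicit.

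That said, there is a genuine logical gap in your catalogue step for family~(b). You infer that because any $\Theta(\log n)$ edge summands in $|E(G[B_i])|=\sum_{\{u,v\}\in E}\mathbb{I}[u\in B_i]\mathbb{I}[v\in B_i]$ are functions of $\Theta(\log n)$ jointly independent hash evaluations, the summands themselves are $\Theta(\log n)$-wise independent. That inference is false: two edge indicators sharing an endpoint $u$ both depend on the same values $h_L(u),h(u)$ and are positively correlated (conditioning on $u\notin B_i$ forces both to $0$). So Lemmas~\ref{lemma:Chernoff1} and~\ref{lemma:Chernoff2}, which explicitly require the summed variables to be $c$-wise independent, do not apply directly to $|E(G[B_i])|$. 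The repair is to not hit $|E(G[B_i])|$ with a tail bound at all: Property~(i) follows deterministically from a concentration bound on $|B_i|$ (a sum of indicators indexed by $v\in V$, one hash evaluation each, hence genuinely $\Theta(\log n)$-wise independent) together with Property~(iv)'s per-vertex degree bound, via $|E(G[B_i])|\le\tfrac12|B_i|\cdot\max_{v\in B_i}\deg_{B_i}(v)=O(n/\sqrt{\Delta})\cdot O(\sqrt{\Delta})=O(n)$. This is in fact how the dependency structure in Chang et al.\ is arranged, and your proposed audit would surface it; once family~(b) is rerouted this way, the remaining families~(a), (c), (d) are each sums indexed by a single vertex or color with one hash evaluation per summand, so they are genuinely $\Theta(\log n)$-wise independent and the rest of your plan goes through unchanged.
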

\begin{proof}
Chang et al.~\cite{ChangFGUZPODC2019} show that this lemma holds when the vertex partitioning is done using full independence, while the color partitioning is done using $O(\log n)$-wise independence. A closer look at their proof reveals that all four properties are shown using Chernoff bounds, and these bounds can be safely replaced by limited dependence Chernoff bounds described in Lemma~\ref{lemma:Chernoff2}. Therefore the four properties hold whp even when the partitioning of both vertices and colors is done using $O(\log n)$-wise independence.
\end{proof}

The following lemma is proved in~\cite{ChangFGUZPODC2019} and given that Properties (i)-(iv) hold in the limited independence setting we use, it goes through without any changes.
\begin{lemma}
\label{lem:O(1)-recursive-calls}
The algorithm makes $O(1)$ recursive calls w.h.p.
\end{lemma}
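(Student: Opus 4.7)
My plan is to inherit the analysis from Chang et al.~\cite{ChangFGUZPODC2019}, after verifying that their argument only uses Properties (i)--(iv) of the partitioning scheme as a black box. The starting observation is that these four properties drive a polynomial decrease in the parameters that control the recursion depth: once they hold at a given level, the next recursive call is automatically on a much smaller/sparser subgraph.

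First, I would track the parameters of the recursion. Let $\Delta_k$ and $n_k$ denote the maximum degree and vertex count of the leftover graph $G[L_k]$ passed to the $k$-th recursive call, and let $q_k = \Theta(\sqrt{\log n}/\Delta_k^{1/4})$ be the corresponding leftover probability. Invoking Property (iv) gives $\Delta_{k+1} = O(q_k \Delta_k) = O(\sqrt{\log n}) \cdot \Delta_k^{3/4}$, so $\Delta_k$ shrinks by a polynomial factor at each level; Property (i) gives the analogous shrinkage $n_{k+1} = O(q_k n_k)$.

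Second, I would identify when the termination condition of the algorithm is met. The recursion halts as soon as $|E(G[L_k])| = \tilde{O}(n)$, which, in view of $|E(G[L_k])| \le n_k \Delta_k$, is guaranteed once $\Delta_k = \polylog(n)$. Unrolling the recurrence $\Delta_{k+1} = O(\sqrt{\log n}) \cdot \Delta_k^{3/4}$ shows this threshold is reached after a constant number of iterations starting from any $\Delta \le n$, because each step reduces $\log \Delta_k$ by a constant multiplicative factor (up to an additive $O(\log \log n)$ term that is absorbed into the polylogarithmic slack).

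The main obstacle, and the step that requires the most care, is composing the high-probability guarantees across recursive levels. Each invocation of the partitioning step succeeds with probability $1 - n^{-\Omega(1)}$ by Lemma~\ref{lemma:graphPartitioningProperties}, but the input to the $(k+1)$-st call is itself a random object determined by the outcome of the $k$-th call. I would handle this by conditioning on the high-probability event that Properties (i)--(iv) held at all previous levels; under this conditioning the parameters $(n_k, \Delta_k)$ satisfy the deterministic bounds derived above, so Lemma~\ref{lemma:graphPartitioningProperties} can be applied at the current level with those bounds treated as fixed. A union bound over the constantly many levels then yields the stated w.h.p. guarantee.
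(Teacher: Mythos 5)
Your high-level plan — inherit Chang et al.'s $O(1)$-depth argument after noting it only depends on Properties (i)--(iv) — matches the paper, which proves the lemma by exactly that citation. But the termination analysis you supply has a real gap. You say the recursion halts once $\Delta_k = \polylog(n)$, and that this threshold is reached after $O(1)$ iterations because ``each step reduces $\log\Delta_k$ by a constant multiplicative factor.'' That is precisely why it does \emph{not} happen in $O(1)$ steps: the recurrence $\log\Delta_{k+1} = \tfrac{3}{4}\log\Delta_k + O(\log\log n)$ brings $\log\Delta_k$ from $\log\Delta_0 \le \log n$ down to $O(\log\log n)$ only after $\Theta(\log\log n)$ iterations; with $\Delta_0 = \Theta(n)$ one still has $\Delta_k = n^{(3/4)^k}\cdot\polylog n \gg \polylog n$ for every constant $k$. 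The sufficient condition you target ($\Delta_k = \polylog n$) is strictly too strong to be met in constantly many rounds.

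The constant depth really comes from the simultaneous shrinkage of $n_k$, which you derive but never use. Let $R_k := n_k\Delta_k/n$, so $|E(G[L_k])| = O(n R_k)$. Properties (i) and (iv) give $R_{k+1} = O(q_k^2)\,R_k$ with $q_k^2 = \Theta(\log n/\sqrt{\Delta_k})$, and since $\Delta_k = \Theta\bigl(\Delta\prod_{i<k}q_i\bigr)$ this unrolls to $R_{k+1} = O\bigl(R_k^{3/4}\,\Delta^{-1/4}\log n\bigr)$. Starting from $R_0 = \Delta$, one gets $R_1 = O(\Delta^{1/2}\log n)$, $R_2 = O(\Delta^{1/8}\log^{7/4}n)$, and $R_3 = O(\Delta^{-5/32}\log^{37/16}n) = \polylog n$, so $|E(G[L_3])| = \tilde{O}(n)$ and the recursion halts. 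The extra $\Delta^{-1/4}$ factor — contributed by the drop in $n_k$ on top of that in $\Delta_k$ — is exactly what converts the $\Theta(\log\log n)$-depth process you analyzed into a genuinely $O(1)$-depth one. (The edge case $\Delta_k = O(\log^2 n)$, where the partitioning properties no longer apply, is harmless: there $n_k\Delta_k \le n\cdot O(\log^2 n) = \tilde{O}(n)$ already.) Your closing paragraph about composing the high-probability guarantees across the constantly many levels is fine once the iteration count is nailed down this way.
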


\begin{theorem}
Given as input an $n$-vertex graph $G$ with maximum degree $\Delta$ and diameter $D$, Algorithm \ref{alg:KT1-Delta+1-coloring} runs in the KT-1 \congest\ model and produces a $(\Delta+1)$-list-coloring of $G$ using $\ot(n^{1.5})$ messages and running in
$\ot(D + \sqrt{n})$ rounds.
\end{theorem}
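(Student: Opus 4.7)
The plan is to verify the theorem by accounting for the message and round costs of each of the six lines of Algorithm \ref{alg:KT1-Delta+1-coloring} separately, and then multiplying by the constant number of recursive calls guaranteed by Lemma \ref{lem:O(1)-recursive-calls}. Correctness of the list-coloring inside each part $B_i$ and eventually inside $G[L]$ will follow from Lemma \ref{lemma:graphPartitioningProperties}, which ensures Properties (i)--(iv) continue to hold under the $O(\log n)$-wise independent hash functions seeded by the broadcast random string $R$.

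For the \textbf{message complexity}, I would proceed line by line. Line 1 invokes Corollary \ref{cor:danner} with $\delta = 1/2$ to build the danner, elect a leader, and disseminate the $O(\log^2 n)$ bits of $R$; this costs $\tilde{O}(n^{1.5})$ messages. Line 2 is purely local, since each node $v$ knows the \texttt{ID}s of all its neighbors (KT-$1$) and therefore can locally evaluate $h_L$, $h$, and $h_c$ on each neighbor's \texttt{ID} to determine, without any communication, which of its incident edges lie inside the same $B_i$, which lie in $G[L]$, and which become inactive. Line 3 runs Johansson's algorithm simultaneously on all $B_i$; by Property (i) each $G[B_i]$ has $O(n)$ edges, and Johansson's algorithm sends $\tilde{O}(1)$ messages per edge, so the total across all $\sqrt{\Delta} \le \sqrt{n}$ parts is $\tilde{O}(n \cdot \sqrt{n}) = \tilde{O}(n^{1.5})$ messages. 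Line 4 computes $|E(G[L])|$ by having each $v \in L$ locally compute $\deg_L(v)$ (again using KT-$1$ and the shared hash $h_L$) and then aggregating the sum along a BFS tree of the danner $H$, costing $\tilde{O}(n)$ messages. If the test of Line 5 passes, a further Johansson execution on $G[L]$ contributes $\tilde{O}(n)$ messages; otherwise Line 6 recurses, and by Lemma \ref{lem:O(1)-recursive-calls} this happens at most $O(1)$ times with high probability, preserving the $\tilde{O}(n^{1.5})$ bound.

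For the \textbf{round complexity}, Line 1 takes $\tilde{O}(D + \sqrt{n})$ rounds by Corollary \ref{cor:danner}; Line 2 takes $0$ rounds; Line 3 takes $\tilde{O}(1)$ rounds since all $B_i$ are processed in parallel and Johansson's algorithm runs in $\poly\log n$ rounds. Line 4's aggregation along $H$ costs $\tilde{O}(D + \sqrt{n})$ rounds since $H$ has diameter $\tilde{O}(D + \sqrt{n})$. Line 5, if triggered, is again $\tilde{O}(1)$ rounds of Johansson. With $O(1)$ recursion depth, the whole algorithm runs in $\tilde{O}(D + \sqrt{n})$ rounds.

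The main subtlety, and the part I would be most careful about, is the interaction between KT-$1$ knowledge, shared randomness, and recursion. Specifically, after the Line 1 broadcast, every node holds the same string $R$ and thus the same hash functions, so membership of a neighbor in $L$ or in a particular $B_i$ is decidable locally, which is what drives the ``no messages on inactive edges'' accounting; without this, the $\tilde{O}(n^{1.5})$ bound on Line 3 would fail. For recursion on $G[L]$, we must reuse the same $H$ and the same broadcast infrastructure rather than rebuilding a danner on $G[L]$, because the diameter of $G[L]$ may be larger than $D$ and because each fresh danner would cost $\tilde{O}(n^{1.5})$ messages again. Since the shared randomness is abundant (each recursive call only requires a fresh seed of $O(\log^2 n)$ bits, which can be carved out of $R$ or regenerated and rebroadcast by $\ell$ along $H$ in $\tilde{O}(n)$ messages and $\tilde{O}(D+\sqrt{n})$ rounds), and since the recursion depth is $O(1)$ with high probability by Lemma \ref{lem:O(1)-recursive-calls}, the bounds compose without loss. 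Finally, correctness of the overall coloring follows because Properties (ii) and (iii) guarantee that each $B_i$ and eventually $G[L]$ satisfies the ``palette strictly larger than degree'' condition required by Johansson's list-coloring algorithm, so each subproblem is solved correctly w.h.p., and the union bound over the $O(1)$ recursive levels preserves high probability.
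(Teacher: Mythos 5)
Your proof proposal is correct and follows essentially the same line-by-line accounting strategy as the paper's own proof: danner construction and broadcast via Corollary \ref{cor:danner} for Line 1, local hash evaluation under \KTOne{} for Line 2, Johansson on the $O(\sqrt{\Delta})$ parts (each with $O(n)$ edges by Property (i)) for Line 3, aggregation over the danner for Line 4, and multiplication by the $O(1)$ recursion depth from Lemma \ref{lem:O(1)-recursive-calls}. The one place you go beyond the paper is your explicit discussion of reusing $H$ (or cheaply re-broadcasting fresh seeds over it) for the recursive calls on $G[L]$, a point the paper's proof glosses over; your resolution is sound and makes the argument more self-contained.
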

\begin{proof}
In Step 1, we create a danner with the parameter \(\delta = 1/2\), building the danner takes \(\ot(\sqrt{n})\) rounds and $\ot(n^{1.5})$ messages, see Lemma~\ref{lem:danner}. Also, because of danner property, the diameter of the danner
is $\ot(D+\sqrt{n})$ and hence electing a leader and broadcasting a \(\log^2 n\)-bit random string
takes $\ot(D+\sqrt{n})$ rounds and $\ot(n^{1.5})$ messages (Corollary~\ref{cor:danner}).  Step 2 is just local computation.

In step 3 we use Johansson's randomized algorithm on each \(G[B_i]\). This algorithm works in \(O(\log n)\) rounds and \(\ot(|E(G[B_i])|)\) messages whp even when the palette of each vertex \(v\) has been initialized to an arbitrary subset of \(\deg(v)+1\) colors chosen from \(\{1, 2, \dots, \Delta+1\}\)~\cite{Johansson99}. According to Property (ii), the palettes of vertices in each \(B_i\) are large enough. And according to property (i), \(|E(G[B_i])| = |V|\) So this step runs in \(O(\log n)\) rounds and takes \(\ot(n\sqrt{\Delta})\) messages over all the \(B_i\)'s whp, since there are \(O(\sqrt{\Delta})\) \(B_i\)'s.

Step 4 takes \(\ot(D + \sqrt{n})\) rounds and $\ot(n^{1.5})$ messages by Corollary~\ref{cor:danner}.  Using Lemma~\ref{lem:O(1)-recursive-calls} and the above arguments guarantee that steps 5 and 6 together take \(\ot(D + \sqrt{n})\) rounds and \(\ot(n^{1.5})\) messages whp. The theorem follows.
\end{proof}

\subsubsection{Asynchronous KT-1 \congest\ algorithm}

The $(\Delta+1)$-coloring in the \congest\ KT-1 mode described above (Algorithm \ref{alg:KT1-Delta+1-coloring}) has a natural counterpart in the asynchronous version of the \congest\ KT-1 model.
The broadcast of random bits in Step 1 can be done asynchronously using $\ot(n^{1.5})$ messages and in $O(n)$ rounds by appealing to the result of Mashregi and King \cite{KMDISC19,KMDISC18} (see Theorem \ref{th:asyncst}).
Every node, on receiving the random bits that were broadcast, completes Step 2 via local computation.
Due to shared randomness, each node $v \in B_i$, for each $i \in [k]$, knows which of its neighbors are in $B_i$. The coloring algorithm in Step 3 therefore can be executed by nodes in $B_i$ by communicating just over the edges in the induced graph $G[B_i]$.
The synchronous algorithm used in Step 3 in Algorithm \ref{alg:KT1-Delta+1-coloring} can be simulated by using an $\alpha$-synchronizer \cite{AwerbuchJACM1985} (see Theorem \ref{theorem:alphaSynchronizer}) in an asynchronous setting. This takes $\ot(n)$ messages since (i) according to Lemma \ref{lemma:graphPartitioningProperties}, $G[B_i]$ contains $O(n)$ edges and (ii) Step 3 runs in $O(\log n)$ rounds.
Checking if $G[L]$ has $\ot(n)$ edges (Step 4) can be done via asynchronous upcast, using $\ot(n)$ messages and
in $O(n)$ rounds. This is possible because each node $v \in L$, knows which of its neighbors are in $L$ and can therefore send the size of its $L$-restricted neighborhood up the spanning tree.
Like Step 3, Step 5 can also be executed using $O(n)$ messages, in $O(\log n)$ rounds.
This description leads to the following theorem.

\begin{theorem}
Given as input an $n$-vertex graph $G$ with maximum degree $\Delta$, there is an algorithm that runs in the asynchronous KT-1 \congest\ model and produces a $(\Delta+1)$-list-coloring of $G$ using $\ot(n^{1.5})$ messages and running in
$\ot(n)$ rounds.
\end{theorem}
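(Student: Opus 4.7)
The plan is to simulate the synchronous algorithm of the previous subsection step-by-step in the asynchronous model, replacing each synchronous primitive with an asynchronous counterpart whose message and time costs match the target bounds. To bootstrap shared randomness and a spanning structure, I would invoke the asynchronous MST algorithm of Mashregi and King (Theorem~\ref{th:asyncst}), which also yields a leader and supports broadcast at a total cost of $\ot(n^{1.5})$ messages and $O(n)$ rounds. The elected leader then generates the $O(\log^2 n)$-bit random seed locally and broadcasts it along the tree within the same budget. Every node uses this seed to compute the three $O(\log n)$-wise independent hash functions $h_L$, $h$, $h_c$ purely locally, finishing Step~2 with no communication; Lemma~\ref{lemma:graphPartitioningProperties} guarantees that the partition properties (i)--(iv) still hold w.h.p.

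The next key step is to execute Johansson's list-coloring algorithm inside each $G[B_i]$ and (when small) inside $G[L]$ in the asynchronous setting. Here I would exploit the central advantage of KT-$1$ combined with shared randomness: for every edge $\{u,v\}$, both endpoints know each other's IDs and can therefore locally evaluate the hashes on both endpoints to decide whether the edge lies inside a common block $B_i$ or inside $L$. Hence no messages are needed to ``install'' the partition, and nodes can confine communication strictly to the edges of $G[B_i]$ (respectively $G[L]$). To run the $O(\log n)$-round synchronous Johansson algorithm asynchronously, I would drop it into Awerbuch's $\alpha$-synchronizer~\cite{AwerbuchJACM1985}, which incurs $O(1)$ messages per edge per simulated round and $O(1)$ asynchronous rounds per simulated round. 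Since Property~(i) gives $|E(G[B_i])| = O(n)$ for every $i \in [\sqrt{\Delta}]$, summing over the $O(\sqrt{\Delta})$ blocks yields $\ot(n\sqrt{\Delta}) = \ot(n^{1.5})$ messages and $\ot(1)$ asynchronous rounds for Step~3.

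For Step~4, each $v \in L$ can locally compute $\deg_L(v)$ via KT-$1$ and the shared hashes, so $|E(G[L])|$ is obtained by an asynchronous convergecast on the spanning tree already built, at cost $\ot(n)$ messages and $O(n)$ rounds. If this count is $\ot(n)$, Step~5 runs Johansson under the synchronizer at an additional $\ot(n)$ messages and $\ot(1)$ rounds; otherwise Step~6 recurses on $G[L]$ with the same machinery. Lemma~\ref{lem:O(1)-recursive-calls} ensures only $O(1)$ recursion levels w.h.p., so the totals remain $\ot(n^{1.5})$ messages and $\ot(n)$ rounds.

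The main obstacle I foresee is carefully tracking the round complexity across these pieces: Johansson's algorithm itself takes only $O(\log n)$ synchronous rounds and the $\alpha$-synchronizer contributes only a constant factor per round, so the dominant $\ot(n)$-round cost comes solely from the asynchronous MST/broadcast and convergecast phases. I would verify that (i) the synchronizer can reuse the initial spanning tree without re-paying an $\Omega(n)$-round initialization for each recursive call, and (ii) a single $O(\log^2 n)$-bit seed broadcast up front supplies enough $O(\log n)$-wise independence for all $O(1)$ recursive levels, so that no additional broadcasts are needed beyond the one at the root of the recursion.
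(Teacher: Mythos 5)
Your proposal is correct and follows essentially the same route as the paper: bootstrap shared randomness via the asynchronous MST/broadcast of Mashregi and King (Theorem~\ref{th:asyncst}), use KT-$1$ plus the shared hash seed so that communication stays confined to block-internal edges, simulate Johansson's algorithm with Awerbuch's $\alpha$-synchronizer, and do an asynchronous convergecast for the size check on $G[L]$, with $O(1)$ recursion levels by Lemma~\ref{lem:O(1)-recursive-calls}. Your accounting of Step~3 as $\ot(n\sqrt{\Delta})$ messages over all blocks is in fact a bit more careful than the paper's sketch, and your two closing worries are easily dispelled: the $\alpha$-synchronizer needs no spanning-tree reinitialization (it works purely via per-edge acknowledgments), and even if each of the $O(1)$ recursion levels re-broadcasts a fresh seed via the already-built tree, the budget is unaffected.
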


\subsection{\((1 + \epsilon)\Delta\)-Coloring using \(\tilde{O}(n)\) Messages in KT-1 \congest}

In this section, we show that for any $\epsilon > 0$, there is an algorithm that can compute a $(1 + \epsilon)\Delta$-coloring in the KT-1 \textsc{Congest} model in \(\tilde{O}(n)\) rounds, using \(\tilde{O}(n/\epsilon^2)\) messages.

At the beginning of the algorithm,
for a large enough constant $C$, one node generates $(C/\epsilon) \cdot \log^3 n$ random bits and shares it with all other nodes using a danner \cite{GmyrPanduranganDISC18}, using $\tilde{O}(n/\epsilon)$ messages and $\tilde{O}(n)$ rounds in the
KT-1 \textsc{Congest} model (cf. Corollary \ref{cor:danner}).
In the following algorithm, each node $v$ that has not already permanently colored itself, will use random bit string $s_i$ in Phase $i$ to first select a random hash function $h_i$ from a
 family of $\Theta(\log n)$-wise independent hash functions $\mathcal{H} = \{h: [\text{poly}(n)] \to [(1+\epsilon)\Delta]\}$.
 Node $v$ will then compute $h_i(\texttt{ID}_v)$ to pick a random color from the palette $[(1+\epsilon)\Delta]$.
 Note that the length of $s_i$ is $\Theta(\log^2 n)$ and by Lemma \ref{lemma:randomBits}, this number of random bits suffice to pick a $\Theta(\log n)$-wise independent hash function with domain size $\text{poly}(n)$ and range size $(1+\epsilon)\Delta$.
 In Corollary \ref{cor:KT1Termination}, it is shown that Algorithm \ref{alg:KT1-coloring} runs in $O(\log n/\epsilon)$ phases and therefore $r = \Theta(\log n/\epsilon)$ random bit strings suffice.

\RestyleAlgo{boxruled}
\begin{algorithm2e}\caption{\((1 + \epsilon)\Delta\)-Coloring Algorithm (One phase):\label{alg:KT1-coloring}}
Each active node (i.e., which has not been colored yet) chooses a random candidate color from \((1+\eps) \Delta\) color palette. \\

It makes this color permanent if it is sure that none of its neighbors has chosen this color yet. \\

If unsuccessful in choosing a permanent color, go to step 1.
\end{algorithm2e}

In step 2, we will show that a node has to check only a small subset of its neighbors in any phase.

First, we will show that the probability of success in each phase is large.

\begin{lemma}
\label{lem:color-progress}
In any phase, a node chooses a color that has not been chosen by any of its neighbors in this phase or in any previous phases with probability at least \(\eps/(1+\eps) \approx \eps\) (for small \(\eps\)). Hence there will be no conflict with the chosen color and hence the node will successfully color itself. Thus, a node successfully colors itself in \(O(\log n/\eps)\) rounds whp.
\end{lemma}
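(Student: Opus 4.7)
My plan is first to bound the single-phase success probability for a node $v$, and then amplify to a high-probability bound by exploiting independence across phases. Fix all randomness and decisions up through the end of phase $i-1$; this determines which neighbors of $v$ are already permanently colored and the set $S$ of colors they hold. In phase $i$, node $v$ and each still-active neighbor independently compute a candidate color by applying the hash function $h_i$ to their \texttt{ID}. The ``bad event'' for $v$ is that its candidate color lies in $S$ or coincides with the candidate color of some active neighbor; let $T$ denote the union of these conflicting colors. Since every neighbor contributes at most one color to $T$ (either a permanent one in $S$ or a single candidate picked this phase), we have $|T| \le \Delta$.

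To bound $\Pr[h_i(\texttt{ID}_v) \in T]$, I would decompose the bad event by neighbor and union-bound. For each already-fixed color $c \in S$, the $1$-wise uniformity of $h_i$ yields $\Pr[h_i(\texttt{ID}_v) = c] = 1/((1+\eps)\Delta)$; for each active neighbor $u$, pairwise independence of $h_i$ (which follows from the $\Theta(\log n)$-wise independent family derived from $s_i$) yields $\Pr[h_i(\texttt{ID}_v) = h_i(\texttt{ID}_u)] = 1/((1+\eps)\Delta)$. Summing over the at most $\Delta$ such events gives $\Pr[\text{bad}] \le \Delta/((1+\eps)\Delta) = 1/(1+\eps)$, so the per-phase success probability is at least $1 - 1/(1+\eps) = \eps/(1+\eps)$.

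For the high-probability claim, the strings $s_1, s_2, \ldots$ used in successive phases are disjoint chunks of the initial broadcast string, so the hash functions $h_i$ across phases are mutually independent. Conditioning on the state at the start of each phase, the per-phase success events therefore form an independent sequence, each with probability at least $\eps/(1+\eps)$. Hence the probability that $v$ remains uncolored after $T = \Theta(\log n / \eps)$ phases is at most $(1/(1+\eps))^T = n^{-\Omega(1)}$, and a union bound over all $n$ nodes shows that every node is successfully colored within $O(\log n / \eps)$ phases w.h.p.

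The only delicate step I anticipate is justifying that pairwise independence actually suffices in the union bound above: the bad event decomposes into $|T| \le \Delta$ single-collision or single-evaluation sub-events, none of which involves more than two hash evaluations, so pairwise independence is exactly what the argument consumes. Consequently the $\Theta(\log n)$-wise hash family is comfortably adequate and no Chernoff-style concentration is required beyond the simple union bound.
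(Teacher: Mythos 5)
Your proof is correct and follows essentially the same route as the paper. The paper sums indicator variables $X_i$ (neighbor $v_i$ collides with $v$) and applies Markov's inequality, $\Pr[X \ge 1] \le \E[X] \le 1/(1+\eps)$; your decomposition of the bad event into per-neighbor collision sub-events and a union bound is mathematically identical, since $\E[X] = \sum_i \Pr[X_i = 1]$. You are slightly more explicit than the paper in two places: you separate already-colored neighbors (needing only $1$-wise uniformity) from active neighbors (needing pairwise independence), and you spell out the conditional argument for amplifying the per-phase bound $\eps/(1+\eps)$ across $\Theta(\log n / \eps)$ phases, whereas the paper asserts the whp conclusion in one line. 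These are expository refinements rather than a different argument.

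One small caution on phrasing: the per-phase success events do not literally form an independent sequence (later phases depend on which neighbors remain active), but your conditional reformulation — that the success probability in each phase is at least $\eps/(1+\eps)$ given any history — is exactly what is needed to multiply the failure probabilities, so the conclusion stands.
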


\begin{proof}
Fix a node \(v\) and suppose that it has degree $d$. Arbitrarily label the neighbors of $v$, $v_1, v_2, \ldots, v_d$ and let $X_i$ be the indicator variable that indicates if neighbor $v_i$ has picked same color as $v$.
Let $X = \sum_{i=1}^d X_i$.
Then $\Pr[X_i = 1]=1/(1+\epsilon)\Delta$ and $\mathbb{E}[X] \le d/(1+\epsilon)\Delta \le 1/(1+\epsilon)$. Then, by Markov's inequality, $\Pr[X \ge 1] \le 1/(1+\epsilon)$.
Therefore, $\Pr[X = 0] \ge \epsilon/(1+\epsilon) \approx \eps$, (for small \(\eps > 0\)).
Since $X = 0$ represents the event that no neighbor of $v$ chooses the color $v$ picked, we get the first part of the lemma.

Thus after \((c\log n)/\eps\) rounds for a large enough constant \(c\), \(v\) will successfully color itself whp.
\end{proof}

\begin{corollary}
\label{cor:KT1Termination}
Whp, all nodes successfully color themselves in \(O(\log n/\eps)\) rounds.
\end{corollary}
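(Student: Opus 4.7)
The plan is to derive the corollary from Lemma~\ref{lem:color-progress} by a standard stochastic-domination argument combined with a union bound. First, I would fix an arbitrary node $v$ and argue that, conditioned on $v$ still being active at the start of phase $i$ and on the entire history of the algorithm through phase $i-1$, the lower bound $\eps/(1+\eps)$ on the per-phase success probability still applies. This requires two observations: (a) the Markov-inequality argument inside Lemma~\ref{lem:color-progress} only uses that $v$ has at most $\Delta$ neighbors and that $h_i$ is pairwise independent on its inputs, so any conditioning on already-colored neighbors (whose colors are now deterministic) only \emph{shrinks} $\E[X]$, and (b) the per-phase hash functions $h_1,h_2,\dots$ are selected from disjoint slices $s_1,s_2,\dots$ of the shared random string $R$, so $h_i$ is independent of the random bits driving phases $1,\dots,i-1$.

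Next, I would chain these per-phase lower bounds: letting $T_v$ denote the phase in which $v$ first succeeds, stochastic domination by a geometric random variable with parameter $\eps/(1+\eps)$ gives
\[
  \Pr[T_v > k] \;\le\; \left(1 - \frac{\eps}{1+\eps}\right)^{k} \;\le\; \exp\!\left(-\frac{k\,\eps}{1+\eps}\right).
\]
Choosing $k = \lceil (c+1)(1+\eps)\log n / \eps \rceil = O(\log n/\eps)$ makes the right-hand side at most $n^{-(c+1)}$. A union bound over the $n$ nodes then yields $\Pr[\max_v T_v > k] \le n^{-c}$, which is exactly the claimed w.h.p. statement. Note that $k = O(\log n / \eps)$ is also consistent with the bit budget: since each phase consumes $\Theta(\log^2 n)$ bits of $R$ to pick $h_i$ from a $\Theta(\log n)$-wise independent hash family, the $(C/\eps)\log^3 n$ bits broadcast up front suffice for all phases.

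The main obstacle is to set up the conditioning cleanly, since the set of $v$'s still-active neighbors at the start of phase $i$ is a random quantity that depends on earlier phases. I would address this by framing the argument with respect to the natural filtration $\mathcal{F}_{i-1}$ generated by $s_1,\dots,s_{i-1}$, noting that (i) the event ``$v$ active at start of phase $i$'' is $\mathcal{F}_{i-1}$-measurable, (ii) $h_i$ is independent of $\mathcal{F}_{i-1}$, and (iii) the one-phase bound of Lemma~\ref{lem:color-progress} holds pointwise for every realization of $\mathcal{F}_{i-1}$ in which $v$ is still active. Everything after that is the routine geometric-tail plus union-bound calculation above.
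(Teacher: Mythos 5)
Your proposal is correct and follows essentially the same route as the paper: the per-phase success probability of $\eps/(1+\eps)$ from Lemma~\ref{lem:color-progress} yields a geometric tail bound, $k = O(\log n/\eps)$ phases make the per-node failure probability polynomially small, and a union bound over the $n$ nodes finishes the argument. The paper's proof of the corollary is a one-liner (``by Lemma~\ref{lem:color-progress} and union bound'') because the geometric-tail step is already stated at the end of Lemma~\ref{lem:color-progress}; your write-up is a more careful version of the same argument, making explicit the filtration/fresh-hash-function independence across phases that the paper leaves implicit.
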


\begin{proof}
By  Lemma~\ref{lem:color-progress} and union bound over all nodes.
\end{proof}

Implementing step 2 with small message complexity:

\begin{lemma}
In each phase, each node exchanges at most \(O(\log^2 n/\eps)\) messages whp.
\end{lemma}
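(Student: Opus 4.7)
The plan is to exploit KT-$1$ knowledge together with the shared randomness injected at the outset so that each active node $v$ can locally compute the candidate color $c_u^i = h_i(\texttt{ID}_u)$ of every neighbor $u$, and thereby identify the small set of potentially conflicting neighbors \emph{without any communication}. Only this set needs to be contacted in phase $i$, and I will show that its size is $O(\log n)$ whp, which already implies the claimed $O(\log^2 n/\eps)$ bound.

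Fix a phase $i$ and an active node $v$. Since every node applies the same $\Theta(\log n)$-wise independent hash $h_i$ sampled from the broadcast seed, $v$ computes $c_v^i = h_i(\texttt{ID}_v)$ and also evaluates $h_i(\texttt{ID}_u)$ for every $u \in N(v)$; the latter is possible precisely because KT-$1$ furnishes the IDs of neighbors. Define the candidate-conflict set $T_v^i = \{u \in N(v) : h_i(\texttt{ID}_u) = c_v^i\}$. For any fixed $u$, (near-)pairwise independence of $h_i$ gives $\Pr[h_i(\texttt{ID}_u) = c_v^i] = 1/((1+\eps)\Delta)$, so $\E[|T_v^i|] \le \Delta/((1+\eps)\Delta) \le 1/(1+\eps)$. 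Applying the limited-independence Chernoff bound (Lemma~\ref{lemma:Chernoff2}) using the $\Theta(\log n)$-wise independence of $h_i$ then yields $|T_v^i| = O(\log n)$ with probability $1 - n^{-c}$ for any chosen constant $c$.

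The protocol I envisage uses two message patterns per phase: (a) each active $v$ exchanges a round-trip message with each $u \in T_v^i$ to carry out tie-breaking on $c_v^i$; (b) each already-committed $u$ with permanent color $c_u$ proactively notifies every neighbor $w$ with $h_i(\texttt{ID}_w) = c_u$ that color $c_u$ is already taken. For pattern (b) the expected number of such neighbors of $u$ is again at most $1/(1+\eps)$, so $O(\log n)$ whp by the same Chernoff-type argument. A symmetric calculation bounds the incoming messages at $v$ (messages from other nodes $u$ with $v \in T_u^i$, plus warning messages from committed neighbors) by $O(\log n)$ whp. Summing over a constant number of categories, the messages exchanged at $v$ in phase $i$ total $O(\log n) = O(\log^2 n/\eps)$ whp, and a union bound over all $n$ nodes completes the proof. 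The main subtlety I anticipate is correctly handling committed neighbors: a committed $u$'s actual color need not equal $h_i(\texttt{ID}_u)$, so pure hash-match filtering by $v$ alone would miss such conflicts; pattern (b) resolves this by pushing the detection burden onto the committed node, whose outgoing announcement count is independently bounded by the same tail inequality.
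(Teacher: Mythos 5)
Your proof proposes a different \emph{implementation} of Step 2 than the paper does. The paper has the active node $v$ do all the detection by \emph{pulling}: in phase $i$, $v$ locally identifies and queries every $u\in N(v)$ with $h_j(\texttt{ID}_u)=c_v^i$ for \emph{some} $j\le i$, since any such $u$ could currently hold $c_v^i$ either as its candidate or as a permanent color from an earlier phase. Summing the $O(\log n)$ whp per-phase count over the $O(\log n/\eps)$ possible commitment phases immediately gives $O(\log^2 n/\eps)$ queries and hence the lemma. Your proposal instead splits the work: $v$ checks only the current phase's hash ($T_v^i$), and committed nodes \emph{push} warnings. That split is a legitimate alternative implementation of Step~2.

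However, your analysis of the incoming side of pattern~(b) has a genuine gap, and as a consequence your claimed per-phase bound of $O(\log n)$ is too strong. The calculation is not symmetric. For the outgoing warnings of a committed $u$, the indicators $\mathbf{1}[h_i(\texttt{ID}_w)=c_u]$ over $w\in N(u)$ involve \emph{distinct} arguments of the $\Theta(\log n)$-wise independent $h_i$ with $c_u$ a fixed value, so the limited-independence Chernoff bound applies and gives $O(\log n)$ whp. But for the incoming warnings at an active $v$, the relevant indicators are $\mathbf{1}[c_u = h_i(\texttt{ID}_v)]$ over committed $u\in N(v)$; these all share the \emph{single} random value $h_i(\texttt{ID}_v)$, so they are not $\Theta(\log n)$-wise independent (in fact, two committed neighbors with equal permanent colors produce identical indicators), and no concentration bound on their sum is available from $h_i$ alone. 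The right way to bound the count is exactly the paper's accounting: for any fixed color $c$, $|\{u\in N(v): c_u=c\}| \le \sum_{j<i} |\{u\in N(v): h_j(\texttt{ID}_u)=c\}| = O(\log^2 n/\eps)$ whp, and a union bound over the $\le (1+\eps)\Delta=\text{poly}(n)$ colors shows that $v$ receives $O(\log^2 n/\eps)$ warnings in phase $i$ whp regardless of the value of $h_i(\texttt{ID}_v)$. So your protocol does satisfy the lemma's stated $O(\log^2 n/\eps)$ bound, but the intermediate step claiming $O(\log n)$ per phase for the incoming warnings is incorrect, and a correct proof along your lines must pick up the extra $O(\log n/\eps)$ factor from summing over commitment phases, just as the paper does.
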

\begin{proof}
In Step 2, a node will check to see if the color chosen by itself is not chosen by any of its neighbors. To check this,
it will only check neighbors that could have chosen this color in this round or in any previous rounds.
Fix a node $v$ and let $c$ be the color it samples in this round.
Arbitrarily label the neighbors of $v$, $v_1, v_2, \ldots, v_d$ and let $X_i$ be the indicator variable that indicates if neighbor $v_i$ has picked same color as $v$ in this round.
Let $X = \sum_{i=1}^d X_i$.
Then $\Pr[X_i=1]=1/(1+\epsilon)\Delta$ and $\mathbb{E}[X] \le d/(1+\epsilon)\Delta \le 1/(1+\epsilon)$.
Since the colors of vertices are chosen using an $\Theta(\log n)$-wise independent family of hash functions, the variables $X_1, X_2, \ldots, X_d$ are $\Theta(\log n)$-wise independent.
Then, by Lemma \ref{lemma:Chernoff2}, for a sufficiently large constant $A$, $\Pr[X \ge A\cdot \log n] \le \exp(-2\log n) = 1/n^2$.
Therefore, whp there are at most $O(\log n)$ neighbors of $v$ that could have picked
color $c$ in this round.

This is true of color $c$ in previous rounds as well. Node $v$ has to check all these neighbors which have chosen $c$ in this round
or prior rounds to be sure that there is no conflict in choosing $c$.
Since there are at most \(O(\log n/\eps)\) phases whp (by Lemma~\ref{lem:color-progress}), color $c$ is chosen by only \(O(\log n \log n/\eps) = O(\log^2 n/\eps)\)
neighbors whp.
\end{proof}

\begin{theorem}
There is a coloring algorithm that achieves \((1+\eps)\Delta\) coloring using \(O(n\log^3n/\eps^2)\) messages whp in KT1 model (with shared randomness).
\end{theorem}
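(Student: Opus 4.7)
The plan is to assemble the final message bound by combining the two lemmas established just above, plus the cost of the initial random-bit broadcast via the danner. Specifically, the per-phase, per-node message bound from the previous lemma is $O(\log^2 n/\eps)$ w.h.p., and by Corollary~\ref{cor:KT1Termination} the total number of phases is $O(\log n/\eps)$ w.h.p. Multiplying these together gives $O(\log^3 n/\eps^2)$ messages charged to each node across the whole execution, and summing over the $n$ nodes yields the claimed $O(n \log^3 n/\eps^2)$ bound for the coloring phase itself.

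First, I would account for the one-shot preprocessing: invoking Corollary~\ref{cor:danner} with $\delta = 0$ to build a danner, elect a leader, and broadcast the $\Theta((\log^3 n)/\eps)$ random bits needed to seed the $r = \Theta(\log n/\eps)$ hash functions $h_1,\dots,h_r$ from the $\Theta(\log n)$-wise independent family guaranteed by Lemma~\ref{lemma:randomBits}. This costs only $\tilde O(n)$ messages (and $\tilde O(n)$ rounds), which is absorbed into the final bound. After this point every node locally knows all hash functions used in all phases, so no further communication is needed to coordinate the random choices --- this is the crucial use of \KTOne\ with shared randomness.

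Next, I would formally sum the communication in the coloring loop. Fix any node $v$. In any single phase in which $v$ is still active, by the previous lemma $v$ both sends and receives at most $O(\log^2 n/\eps)$ messages w.h.p., since it only needs to query the at most $O(\log n)$ neighbors per phase that could currently hold its sampled color $c$, across the at most $O(\log n/\eps)$ phases in which $c$ could have been sampled. Taking a union bound over all nodes and all phases (there are only $\poly(n)$ node–phase pairs and each bad event has probability $1/\poly(n)$), every node exchanges $O(\log^3 n/\eps^2)$ messages w.h.p. throughout the whole execution, so the total message complexity of Algorithm~\ref{alg:KT1-coloring} is $O(n\log^3 n/\eps^2)$ w.h.p. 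Combined with the $\tilde O(n)$ preprocessing messages, this establishes the theorem.

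The main obstacle, which is actually already handled by the preceding lemmas, is that the ``check no neighbor holds color $c$'' step in line~2 appears to require talking to all neighbors. The saving point I would emphasize is that \KTOne\ plus the shared hash functions $h_i$ lets every node locally compute, for each of its neighbors $u$ and each phase $i$, the candidate color $h_i(\texttt{ID}_u)$ that $u$ \emph{would} propose in phase $i$ if still active, so $v$ only needs to actually communicate with the (w.h.p.\ $O(\log n)$ per phase, $O(\log^2 n/\eps)$ overall) neighbors whose candidate could collide with $v$'s own current candidate. The correctness of the final coloring follows immediately because a node only commits to a color $c$ after confirming via direct communication that none of the (tiny) set of potentially-conflicting neighbors has already committed to $c$ or is about to in the same phase (ties broken by $\texttt{ID}$), and Corollary~\ref{cor:KT1Termination} guarantees every node eventually commits w.h.p.
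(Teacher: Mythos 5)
Your proof is correct and follows essentially the same route as the paper: multiply the per-phase, per-node bound of $O(\log^2 n/\eps)$ messages (from the preceding lemma) by the $O(\log n/\eps)$ phase bound (Corollary~\ref{cor:KT1Termination}), sum over $n$ nodes, and absorb the one-shot danner/broadcast preprocessing. The only minor quibble is that broadcasting the $\Theta(\log^3 n/\eps)$ random bits over the danner actually costs $\ot(n/\eps)$ messages rather than $\ot(n)$ as you wrote, but this is still dominated by $O(n\log^3 n/\eps^2)$ and does not affect the theorem.
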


\begin{proof}
By Lemma~\ref{lem:color-progress}, all nodes can legally color themselves in \(O(\log n/\eps)\) rounds whp.
By Lemma 2, each node exchanges \(O(\log^2 n/\eps)\) messages in a phase and there are at most \(O(\log n/\eps)\) phases.
Hence the overall message complexity (of all nodes) is  \(O(n\log^3 n/\eps^2)\) whp.
\end{proof}

\section{An MIS algorithm using $\ot(n^{1.5})$ messages in KT-\(2\) \congest}
\label{sec:mis-alg-kt-2}

We now give a high-level overview of Algorithm~\ref{alg:KT2-mis} that uses \KTTwo knowledge to compute an MIS using only \(O(n^{1.5} \log^2 n)\) messages  while taking \(\ot(\sqrt{n})\) rounds; the full details are explained in the proof of Theorem~\ref{thm:KT2-mis}.
We first sample a set $S$ of $\Theta(\sqrt{n})$ nodes and then add these nodes to the independent set according to the randomized greedy MIS algorithm. Since $S$ was chosen randomly, this has the same effect as performing $\Theta(\sqrt{n})$ iterations of the sequential randomized greedy algorithm, which is known to reduce the maximum degree in the remnant graph to $\tilde O(\sqrt{n})$ (see \cite{KonradArxiv2018}).
Then, each node $u \in S$ that entered the independent set informs its 2-hop neighbors.
It is crucial that node $u$ uses its \KTTwo knowledge to convey this information, as otherwise the same 2-hop neighbor $v$ might receive $u$'s message from multiple 1-hop neighbors of $u$, which may result in $\omega(n)$ messages being sent on behalf of $u$.
Finally, we compute an MIS on the (sparsified) remnant graph using Luby's algorithm.

\RestyleAlgo{boxruled}
\begin{algorithm2e}\caption{MIS Algorithm \label{alg:KT2-mis}}

\textbf{Sample \(O(\sqrt{n})\) vertices:} Add each node to a set \(S\) with probability \(c/\sqrt{n}\).\\

\textbf{Run Randomized Greedy MIS:} Each node in \(S\) chooses a random rank at the start of the algorithm. In the parallel version of Greedy, a node enters the MIS as soon as it is a local maximum among undecided neighbors in \(S\).

\textbf{Inform \(2\)-hop Neighbors:} Each node \(u \in S\) that enters the MIS \(u\) uses KT-2 knowledge to inform all of its 2-hop neighbors that it has joined the MIS.

\textbf{Pruning Inactive Edges:} Each node \(v \in V\) uses its own KT-\(2\) knowledge to either deactivate itself if a \(1\)-hop neighbor has joined the MIS or deactivate edges incident on the \(1\)-hop neighbors that are neighbors with a node that joins the MIS. \\

\textbf{Finishing Up:} All nodes in the remnant graph know which of their neighbors are deactivated and so we can run Luby's algorithm on the remnant graph.
\end{algorithm2e}

\begin{theorem} \label{thm:KT2-mis}
Algorithm~\ref{alg:KT2-mis} computes a correct MIS. It uses \(O(n^{1.5} \log^2 n)\) messages and runs in
 \(\ot(\sqrt{n})\) rounds  with high probability.
\end{theorem}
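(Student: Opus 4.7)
The plan is to decompose the proof into three components—correctness, message complexity, and round complexity—with the central technical lemma being the max-degree reduction property of sequential randomized greedy MIS. Let $I_1$ denote the independent set produced in step 2 and $I_2$ the one produced in step 5. For correctness, I would show that $I = I_1 \cup I_2$ is a valid MIS of $G$: both $I_1$ and $I_2$ are internally independent (by the correctness of parallel greedy on $G[S]$ and of Luby's algorithm on the remnant graph), and step 4 guarantees that every 1-hop neighbor of an $I_1$-node is deactivated before step 5, so no remnant-graph node is adjacent to $I_1$. Maximality follows since every $v\in V$ is either in $I$, deactivated in step 4 with an $I_1$-dominator, or present in the remnant graph and hence dominated by $I_2$.

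For message complexity, I would analyze each step. Step 1 is purely local. Step 2 incurs two kinds of messages: each $v\in S$ announces its membership to its 1-hop neighbors, contributing $\sum_{v\in S}\deg(v)$ messages whose expectation is $O(m/\sqrt{n})\le O(n^{1.5})$ and which concentrates w.h.p.\ via Chernoff on $|S|\sim\mathrm{Bin}(n,c/\sqrt{n})$; and running parallel greedy on $G[S]$, where $\mathbb{E}[|E(G[S])|]=m(c/\sqrt{n})^2=O(n)$ concentrates to $\tilde O(n)$ w.h.p., giving $\tilde O(n)$ messages over the greedy simulation. Step 3 is where the KT-2 assumption is essential: each $u\in I_1$ uses its 2-hop knowledge to designate, for each 2-hop neighbor $v$, a single 1-hop intermediary and sends one message along that 2-edge path, for a total of $O(|N_{\le 2}(u)|)=O(n)$ messages per MIS node and $O(n^{1.5})$ across the at most $|S|=O(\sqrt{n})$ nodes in $I_1$. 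Step 4 is local. For step 5, the key lemma is that parallel greedy on a uniformly random sample $S$ of size $\Theta(\sqrt{n})$ is distributionally equivalent to running the first $\Theta(\sqrt{n})$ iterations of sequential randomized greedy on $G$, which (via the analysis referenced as \cite{KonradArxiv2018}) leaves a remnant graph of maximum degree $\tilde O(\sqrt{n})$ w.h.p.; consequently Luby on the remnant sends $\tilde O(n^{1.5})$ messages in $O(\log n)$ rounds. Summing the pieces yields $O(n^{1.5}\log^2 n)$ messages.

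For round complexity, the membership announcement in step 2 takes $1$ round, parallel randomized greedy on $G[S]$ takes $O(\log^2 n)$ rounds w.h.p., step 3 takes $2$ rounds, and Luby takes $O(\log n)$ rounds on the remnant graph. Any preprocessing required to share randomness for the sampling and ranking (for example, via a danner with $\delta=1/2$ and Corollary~\ref{cor:danner}) contributes $\tilde O(\sqrt{n})$ rounds, which dominates the total.

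The main obstacle will be making rigorous the equivalence between parallel greedy on the size-$\Theta(\sqrt{n})$ random sample $S$ and $\Theta(\sqrt{n})$ iterations of sequential randomized greedy, and then importing the $\tilde O(\sqrt{n})$ max-degree reduction bound for the remnant. A secondary subtlety, noted in the algorithm description, is ensuring that KT-2 really does allow each $I_1$-node to pick one 1-hop relay per 2-hop neighbor so that the per-node message count in step 3 is $O(n)$, rather than blowing up to $\omega(n)$ via duplicated announcements from multiple relays.
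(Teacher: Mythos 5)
Your decomposition (correctness, messages, rounds) and the key lemma (random sampling of $\Theta(\sqrt{n})$ nodes equals $\Theta(\sqrt{n})$ iterations of sequential greedy, reducing the remnant max degree to $\tilde O(\sqrt{n})$ via~\cite{KonradArxiv2018}) match the paper's proof, and your message-complexity accounting per step is essentially the same, so let me focus on where you go wrong.

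The genuine gap is in your round count for Step~3. You claim it takes $2$ rounds, but that ignores congestion. When $u$ designates a $1$-hop relay $w$ for each of its $2$-hop neighbors $v$, nothing prevents several different MIS-joining nodes $u_1,u_2,\ldots\in I_1$ from all being adjacent to $w$ and all designating $w$ as the relay for the \emph{same} $2$-hop target $v$. In the \congest\ model $w$ can push only $O(1)$ \texttt{ID}s per round over the edge $\{w,v\}$, so it must relay these announcements one by one. Since $|I_1|\le|S|=O(\sqrt n)$ w.h.p., the congestion on a single edge can be $\Theta(\sqrt n)$, so Step~3 takes $O(\sqrt n)$ rounds in the worst case. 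This is in fact the true bottleneck that makes the overall round complexity $\tilde O(\sqrt n)$; the paper makes exactly this point. (Your ``secondary subtlety'' about avoiding duplicate announcements addresses the \emph{message} count, but congestion is a separate \emph{round} issue that survives even after de-duplication.)

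A related misattribution: you posit that a danner/shared-randomness setup is needed (and is what drives the $\tilde O(\sqrt n)$ rounds). The paper's MIS algorithm is comparison-based and needs no danner. Both the inclusion in $S$ (Step~1) and the random ranks (Step~2) are private coin flips; the only thing nodes in $S$ must learn is which neighbors are also in $S$ and what their ranks are, and that is handled by a one-round announcement costing $O(|S|\cdot\Delta)=O(n^{1.5})$ messages, which you already account for. So the danner is superfluous here, and without the congestion argument in Step~3, your round-complexity bound would be incorrect. Fix Step~3 and drop the danner, and the rest of your argument lines up with the paper's.
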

\begin{proof}
In the first two steps, we aim to simulate \(O(\sqrt{n})\) iterations of the sequential randomized greedy algorithm, see Algorithm 2 in \cite{BlellochFSSPAA2012}. In the randomized greedy MIS algorithm, each node chooses a random rank at the start of the algorithm and we process the nodes by rank to compute an MIS using the greedy algorithm. Simulating \(i\) iterations of this algorithm is probabilistically equivalent to sampling \(i\) vertices uniformly at random and generating random ranks at just tho sampled vertices. Note that since we sample vertices uniformly at random with probability \(c / \sqrt{n}\), we get \(|S| = O(\sqrt{n})\) whp.

We instead run the parallel randomized greedy MIS algorithm which computes the same MIS as the sequential version (see \cite{BlellochFSSPAA2012}), and in \cite{FischerNSODA2018}, they show that the parallel randomized greedy MIS algorithm finishes in \(O(\log n)\) rounds whp, and so the message complexity of steps  will just be \(O(|S|n\log n) = O(n^{1.5} \log n)\) whp.

Using KT-2 information, each vertex in \(u \in S\) that joins the MIS locally creates a depth \(2\) BFS tree on its \(2\)-hop neighborhood and sends the message through this tree. This BFS tree is constructed by having all \(1\)-hop neighbors of \(u\) at depth \(1\) and assigning a node \(v\) that is exactly \(2\)-hops away as a child to the \(1\)-hop neighbor with lowest ID. The local view of this BFS tree can be created at all \(1\)-hop neighbors of \(u\) using their own KT-2 information since the common \(1\)-hop neighbors of \(u\) and \(v\) are all \(2\)-hops away.

To send the message to \(1\)-hop neighbors, \(u\) can just broadcast. The one hop neighbors will just inform their neighbors in the BFS tree that \(u\) has joined the MIS. In case a node \(w\) gets multiple messages of \(1\)-hop neighbors in \(S\) joining the MIS and their BFS trees lead to the same \(2\)-hop vertex \(v\), then \(w\) will just send all these messages one by one to \(v\). The congestion on such an edge can be at most  \(|S|\) in the worst case. This allows each such node \(v\) to prune the inactive edges and learn the edges of the remnant graph that are incident on it without sending or receiving any additional messages.

Since \(|S| = O(\sqrt{n})\) whp, and each vertex in \(S\) that joins the MIS can relay this information to it's 2-hop neighbors using constant messages per neighbor, this process generates at most \(O(|S|n) = O(n^{1.5} \log n)\) messages whp. But due to congestion, this process will take \(O(\sqrt{n})\) whp in the worst case.

After the simulation, we know from Lemma 1 in \cite{KonradArxiv2018} that the remnant graph has maximum degree \(O(n\log n / |S|) = O(\sqrt{n} \log n)\). And since the nodes know the remnant graph, running Luby's algorithm~\cite{LubySTOC1985} will require an additional \(O(\log n)\) rounds and \(O(n^{1.5} \log^2 n)\) messages whp.

Therefore, Algorithm~\ref{alg:KT2-mis} runs in \(O(\sqrt{n})\) rounds and uses \(O(n^{1.5} \log^2 n)\) messages throughout its execution whp. The theorem follows.
\end{proof}

\iffalse
\subsubsection{Asynchronous KT-2 \congest\ algorithm}
The MIS algorithm (Algorithm \ref{alg:KT2-mis}) described above, has a natural asynchronous counterpart, which we now describe. However, this asynchronous algorithm does not work in the comparison model.

As described, Step 1 in Algorithm \ref{alg:KT2-mis} is just local computation.  However, for reasons that will become clear, we implement this using shared randomness.

\svp{Incomplete. Not sure if this results works. Delete if necessary.}
\fi

\section{Conclusion}
In this paper, we initiate the study of the message complexity of two fundamental symmetry breaking problems, MIS and $(\Delta+1)$-coloring.
We show that while it is impossible to obtain $o(m)$ message complexity in the \KTOne \congest\ model using comparison-based algorithms, one can do so by either using non-comparison based algorithms or by slightly increasing the input knowledge, i.e., in the \KTTwo \congest\ model.

Several key open questions arise from our work.
The first is whether one can obtain an $o(m)$-message, non-comparison-based algorithm for MIS in the \KTOne \congest\ model, running in polynomial time.
We have shown that this is possible for $(\Delta+1)$-coloring.
The second is whether one can obtain (nearly optimal) $\ot(n)$-message (non-comparison-based) algorithms
for MIS and $(\Delta+1)$-coloring in the \KTOne\ \congest\ model, running in polynomial time.
The question is open for MIS even in the \KTTwo\ \congest\ model. Another important issue is reducing the running time of our algorithms. In particular, can we make them run in $\polylog{n}$ rounds, for the same message bounds?

\bibliographystyle{ACM-Reference-Format}
\bibliography{references}

\newpage
\appendix
\section{Appendix}
\subsection{Tail inequalities and hash functions with limited independence}

To obtain message-efficient algorithms in the \KTOne{} model, we make use
of hash functions with limited independence. These hash functions use $c$-wise
independence and hence we use the following tail inequalities and properties
of such hash functions.

\noindent
The following tail inequalities are from \cite{SchmidtSSSODA1993}.
\begin{lemma}
\label{lemma:Chernoff1}
Let $c \ge 4$ be an even integer. Suppose
$Z_1, Z_2, \ldots, Z_t$ are $c$-wise
independent random variables taking values in
$[0, 1]$. Let $Z = \sum_{i=1}^t Z_i$ and
$\mu = \mathbb{E}[Z]$, and let $\lambda > 0$.
Then,
$$Pr[|Z - \mu|\ge \lambda] \le 2\left(\frac{ct}{\lambda^2}\right)^{c/2}.$$
\end{lemma}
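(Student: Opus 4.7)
My plan is to prove this via the standard $c$-th moment method, mirroring the argument of Schmidt, Siegel, and Srinivasan. Since $c$ is an even integer, $|Z-\mu|^c = (Z-\mu)^c$, so Markov's inequality applied to the non-negative random variable $(Z-\mu)^c$ gives
\[
\Pr[\,|Z-\mu|\ge \lambda\,]\;\le\;\frac{\mathbb{E}[(Z-\mu)^c]}{\lambda^c}.
\]
The entire task then reduces to bounding the $c$-th central moment $\mathbb{E}[(Z-\mu)^c]$, and the extra factor of $2$ in the lemma's statement is slack that comes for free once we have an upper bound of the form $(ct)^{c/2}$ on this moment.

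To bound the moment, I would center the variables: let $Y_i := Z_i - \mathbb{E}[Z_i]$, so each $Y_i$ has mean $0$ and satisfies $|Y_i|\le 1$ (as $Z_i\in[0,1]$), and write $Y := Z-\mu = \sum_{i=1}^t Y_i$. Expand via the multinomial theorem:
\[
\mathbb{E}[Y^c] \;=\; \sum_{c_1+\cdots+c_t = c} \binom{c}{c_1,\ldots,c_t}\,\mathbb{E}\!\Bigl[\prod_{i=1}^t Y_i^{c_i}\Bigr].
\]
Because at most $c$ of the exponents are nonzero, the $c$-wise independence of the $Z_i$ (equivalently of the $Y_i$) lets the expectation factor as $\prod_i \mathbb{E}[Y_i^{c_i}]$. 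The crucial observation is that any tuple $(c_1,\ldots,c_t)$ with a coordinate equal to $1$ contributes $0$, since $\mathbb{E}[Y_i]=0$. So only tuples whose nonzero entries are each at least $2$ survive, forcing the number $k$ of nonzero entries to satisfy $k \le c/2$.

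For the surviving terms I would use $|\mathbb{E}[Y_i^{c_i}]|\le \mathbb{E}[Y_i^2]\le 1$ (valid since $|Y_i|\le 1$ and $c_i\ge 2$), reducing the problem to counting tuples. The number of tuples with exactly $k$ nonzero entries of total weight $c$, each entry $\ge 2$, is at most $\binom{t}{k}$ times the number of surjections $[c]\to [k]$ with block sizes $\ge 2$, and the multinomial weight is at most $c^c/(2!)^k$. A routine manipulation with $\binom{t}{k}\le t^k/k!$ and Stirling, summed over $k\le c/2$, yields $\mathbb{E}[Y^c]\le (ct)^{c/2}$. Substituting back into Markov gives exactly $\Pr[|Z-\mu|\ge\lambda]\le (ct/\lambda^2)^{c/2}$, which is the claim (the factor $2$ accommodating any constant slack in the moment bound).

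The main obstacle is the combinatorial bookkeeping in bounding $\mathbb{E}[Y^c]$: I need to identify the right grouping of multinomial terms by the support pattern of $(c_1,\ldots,c_t)$ so that the counts collapse to the clean $(ct)^{c/2}$ form, rather than a messier expression that still depends on the individual block sizes. Everything else---Markov, vanishing of single-exponent terms, and the crude $|\mathbb{E}[Y_i^{c_i}]|\le 1$ bound---is straightforward once the expansion is set up with centered variables.
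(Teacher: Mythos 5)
The paper gives no proof of this lemma; it cites it directly from Schmidt, Siegel, and Srinivasan~\cite{SchmidtSSSODA1993}, so there is no in-paper argument to compare against. Your overall strategy is indeed the one behind that reference: apply Markov to the even moment $(Z-\mu)^c$, expand $\mathbb{E}[(Z-\mu)^c]$ via the multinomial theorem, factor the expectation using $c$-wise independence, discard every term containing a unit exponent because $\mathbb{E}[Y_i]=0$, and then count. All of that is correctly set up.

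The gap is in the combinatorial step you defer as ``routine.'' The pointwise bound $\binom{c}{c_1,\ldots,c_t}\le c^c/2^k$ (and likewise the alternative of bounding the sum of multinomials over a fixed support by $k^c$) throws away a factor on the order of $c^c/c!\approx e^c$. Pushing that through your plan produces $\mathbb{E}[Y^c]\le (C\,ct)^{c/2}$ for a constant $C>1$, and hence a tail bound of $C^{c/2}(ct/\lambda^2)^{c/2}$. That is \emph{not} covered by the factor $2$ in the statement: absorbing $C^{c/2}$ into a $2$ requires $C\le 2^{2/c}\le\sqrt{2}$ for $c\ge 4$, whereas your slack leaves $C$ roughly $e$ or larger, so ``the factor $2$ accommodates any constant slack'' is not true here. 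What you actually need is the exact identity that, for a fixed support of size $k$ with all exponents $\ge 2$, the sum of multinomial coefficients equals the number of functions $f:[c]\to[k]$ whose fibers all have size at least $2$. For the dominant support size $k=c/2$ every fiber has size exactly $2$, so this count is $c!/2^{c/2}$, giving $\binom{t}{c/2}\cdot c!/2^{c/2}\le \bigl(t^{c/2}/(c/2)!\bigr)\cdot c!/2^{c/2}=(c-1)!!\,t^{c/2}\le (ct)^{c/2}$, with $(c-1)!!\approx\sqrt{2}\,(c/e)^{c/2}$ leaving an $e^{\Theta(c)}$ margin to absorb the $k<c/2$ terms. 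In short, you must keep the factorial $c!$ intact rather than degrade it to $c^c$; otherwise the constant in the lemma does not come out.
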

\begin{lemma}
\label{lemma:Chernoff2}
Suppose that $X$ is the summation of
$n$, $c$-wise independent 0-1 random variables, each with mean $p$.  Let $\mu$ satisfy $\mu \ge \mathbb{E}[X] = np$. Then,
$$\Pr[X \ge (1 + \delta)\mu] \le exp(-\min\{c, \delta^2\mu\}).$$
\end{lemma}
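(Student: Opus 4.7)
The plan is to prove this via the method of moments, which is well-adapted to $c$-wise independence because the $k$-th moment of a sum depends only on the joint distribution of $k$ variables at a time. Let $Y_i = X_i - p$, so the $Y_i$ are mean-zero, $c$-wise independent, and bounded in $[-1,1]$, and let $Y = \sum_{i=1}^n Y_i = X - np$. Since $\mu \ge \E[X] = np$, it suffices to upper-bound $\Pr[Y \ge \delta\mu]$.

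The key technical step is to show that for every even integer $k \le c$,
$$\E[Y^k] \le (C k \mu)^{k/2}$$
for some absolute constant $C$. Expanding $Y^k = \sum_{i_1,\ldots,i_k} Y_{i_1}\cdots Y_{i_k}$ and applying $k$-wise independence, each individual expectation factors as a product. Any tuple in which some index appears exactly once vanishes because $\E[Y_j]=0$, so only tuples whose index pattern corresponds to a set partition of $\{1,\ldots,k\}$ into blocks of size $\ge 2$ contribute; in particular, there are at most $k/2$ distinct indices. Using $|\E[Y_j^\ell]| \le p$ for $\ell \ge 2$ and $\sum_j p \le \mu$, summing over the choice of which indices realize each block contributes at most $\mu^s$ to a partition with $s$ blocks, and counting such partitions via Stirling-number estimates (with the all-pair case $s=k/2$ dominating) yields the claimed bound. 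This is a standard Marcinkiewicz--Zygmund-style calculation and is the technical core of the proof. Markov's inequality applied to the nonnegative even power $Y^k$ then gives
$$\Pr[Y \ge \delta\mu] \le \frac{\E[Y^k]}{(\delta\mu)^k} \le \left(\frac{C k}{\delta^2 \mu}\right)^{k/2}.$$

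I would then optimize over even $k \le c$ in two regimes. When $\delta^2\mu \le c$, take $k$ to be the largest even integer at most $\delta^2\mu/(eC)$; then $Ck/(\delta^2\mu) \le 1/e$ and the bound becomes $e^{-k/2} = \exp(-\Omega(\delta^2\mu))$. When $\delta^2\mu > c$, take $k = c$ (or $c-1$ if $c$ is odd, losing only a constant factor); the base $Ck/(\delta^2\mu)$ is then bounded away from $1$ (absorbing $C$ into a constant enlargement of the hypothesized independence parameter), so the bound becomes $\exp(-\Omega(c))$. Combining both regimes gives $\Pr[Y \ge \delta\mu] \le \exp(-\Omega(\min\{c, \delta^2\mu\}))$, which is the stated form up to constants folded into the statement.

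The main obstacle is establishing the moment bound $\E[Y^k] \le (Ck\mu)^{k/2}$ with the correct scaling in both $k$ and $\mu$. A crude partition-counting argument easily loses a $\sqrt{k!}$ factor in the base, which is fatal once it goes into an exponent. The tight bound relies on the observation that partitions with all blocks of size exactly $2$ dominate---their contribution is roughly $(2k-1)!! \cdot \bigl(\sum_j p_j(1-p_j)\bigr)^{k/2}$, mimicking the $k$-th moment of a centered Gaussian with variance at most $\mu$---while partitions with larger blocks involve higher powers of $p \le 1$ and therefore contribute strictly less. This is the careful bookkeeping in Schmidt--Siegel--Srinivasan~\cite{SchmidtSSSODA1993}; once that estimate is in hand, the two-regime optimization of $k$ is routine.
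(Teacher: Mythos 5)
The paper does not prove this lemma: it is quoted in the appendix with attribution to Schmidt, Siegel, and Srinivasan \cite{SchmidtSSSODA1993}, so there is no in-paper argument to compare against, only the cited source. Your route --- centered power moments $\E[Y^k]$, a Marcinkiewicz--Zygmund-style partition bound, Markov's inequality, and a two-regime choice of an even $k\le c$ --- is a genuinely different strategy from theirs. For $0$-$1$ summands, SSS work directly with the binomial moment $\E\bigl[\binom{X}{k}\bigr]$, which under $k$-wise independence equals $\sum_{|S|=k}\prod_{i\in S}p_i \le \mu^k/k!$ in one line, with no centering and no partition bookkeeping, and then apply $\Pr[X\ge a]\le \E\bigl[\binom{X}{k}\bigr]\big/\binom{a}{k}$ before optimizing over $k$. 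Your approach is more general (it does not need the summands to be indicators) but pays for that generality with exactly the combinatorics you flag as the hard part; the SSS route is specialized to nonnegative integer-valued sums but trivializes the moment step.

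There is also a real gap in the moment bound you defer: $\E[Y^k]\le (Ck\mu)^{k/2}$ is false once $k$ exceeds $\mu$, and the dominance heuristic you give for it does not hold. You argue that pair partitions dominate because blocks of size $\ell>2$ ``involve higher powers of $p$''; but for a centered indicator, $|\E[Y_i^\ell]| = \bigl|(1-p)(-p)^\ell + p(1-p)^\ell\bigr| = \Theta(p)$ for \emph{every} $\ell\ge 2$, since the $p(1-p)^\ell$ piece does not decay with $\ell$. A block of any size therefore contributes up to $\mu$ after summing over indices, and when $\mu$ is small relative to $k$ the maximizing partition has far fewer than $k/2$ blocks. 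The correct Rosenthal-type estimate has two terms, $\E[Y^k]\le C^k\bigl((k\mu)^{k/2}+k^k\bigr)$, with the $k^k$ term dominant for $k>\mu$: take $n=k$, $p=1/k$, $\mu=1$, so $X$ is approximately Poisson$(1)$ and $\E[(X-1)^k]$ counts (roughly) singleton-free set partitions of a $k$-set, of order $(k/\log k)^k\gg (Ck)^{k/2}$. In your large-deviation regime you set $k=c$, and nothing forces $c\le\mu$, so the one-term estimate is unavailable precisely where you invoke it. Carrying the $k^k$ term through Markov still gives $e^{-\Omega(c)}$ when $\delta\mu>c$, so the conclusion is salvageable, but as written the argument silently assumes $\mu$ is at least a constant times $c$. (The displayed inequality in the lemma is itself loose --- no constant in the exponent and no restriction to $\delta\le 1$; your proof should target what is actually provable, $e^{-\Omega(\min\{c,\delta^2\mu\})}$ for $\delta\le 1$, which is all the paper uses.)
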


\noindent
The following is Definition 7 in \cite{CzumajDPArxiv2020}.
\begin{definition}
For $N$, $L$, $c \in \mathbb{N}$, such that $c \le N$, a family of functions $\mathcal{H} = \{h: [N] \to [L]\}$ is $c$-wise independent if for all distinct $x_1,x_2,\ldots,x_c
\in [N]$, the random variables $h(x_1), h(x_2), \ldots, h(x_c)$ are independent and uniformly distributed in
$[L]$ when $h$ is chosen uniformly at random from
$H$.
\end{definition}

\noindent
The following lemma appears as Corollary 3.34 in \cite{VadhanFTTCS2012}.
\begin{lemma}
\label{lemma:randomBits}
For every $a,b,c$, there is a family of $c$-wise independent hash functions $\mathcal{H} = \{h:\{0,1\}^a \to \{0,1\}^b\}$
such that choosing a random function from $\mathcal{H}$
takes $c\cdot \max\{a, b\}$ random bits, and evaluating a function from $\mathcal{H}$ takes $\text{poly}(a, b, c)$
computation.
\end{lemma}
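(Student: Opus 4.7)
The plan is to construct the hash family via polynomial hashing over an extension field of $\mathbb{F}_2$, which is the standard construction underlying this corollary. Let $\ell = \max\{a, b\}$ and fix (in advance, independent of the seed) an irreducible polynomial of degree $\ell$ over $\mathbb{F}_2$; this gives an explicit representation of the finite field $\mathbb{F} = \mathbb{F}_{2^\ell}$. I would identify each input $x \in \{0,1\}^a$ with a field element by zero-padding to length $\ell$, and identify each field element with a $b$-bit string by the truncation map $\tau : \mathbb{F} \to \{0,1\}^b$ that keeps the first $b$ bits of its standard representation.

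Next I would define, for each seed $\alpha = (\alpha_0, \alpha_1, \ldots, \alpha_{c-1}) \in \mathbb{F}^c$, the hash function
\[
    h_\alpha(x) \;=\; \tau\!\left(\sum_{i=0}^{c-1} \alpha_i\, x^i\right),
\]
and take $\mathcal{H} = \{h_\alpha : \alpha \in \mathbb{F}^c\}$. Choosing $h$ uniformly from $\mathcal{H}$ amounts to sampling $c$ independent uniform elements of $\mathbb{F}$, which uses exactly $c \cdot \ell = c \cdot \max\{a,b\}$ random bits, matching the claimed seed length.

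To verify $c$-wise independence, I would fix any distinct inputs $x_1, \ldots, x_c \in \{0,1\}^a$, viewed as distinct elements of $\mathbb{F}$ (distinctness is preserved by zero-padding). The map $\alpha \mapsto (p_\alpha(x_1), \ldots, p_\alpha(x_c))$, where $p_\alpha(y) = \sum_{i=0}^{c-1} \alpha_i y^i$, is linear over $\mathbb{F}$ and is represented by the Vandermonde matrix with $(j,i)$ entry $x_j^i$. Since the $x_j$ are distinct, this Vandermonde matrix is invertible over $\mathbb{F}$, so the $c$ untruncated evaluations are uniformly and independently distributed in $\mathbb{F}^c$ when $\alpha$ is uniform. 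Because $\tau$ is a fixed deterministic map applied coordinatewise and takes uniform $\ell$-bit strings to uniform $b$-bit strings, the outputs $(h_\alpha(x_1), \ldots, h_\alpha(x_c))$ are uniform and independent in $(\{0,1\}^b)^c$, which is exactly the $c$-wise independence condition.

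Finally, I would bound the evaluation cost: computing $h_\alpha(x)$ by Horner's rule uses $O(c)$ additions and multiplications in $\mathbb{F}$, and with the fixed irreducible polynomial these field operations run in $\mathrm{poly}(\ell)$ time using schoolbook polynomial arithmetic modulo the irreducible; the truncation $\tau$ is a further $O(b)$ work. Altogether evaluation takes $\mathrm{poly}(a,b,c)$ time. The only mildly delicate step is ensuring that the embedding of $\{0,1\}^a$ into $\mathbb{F}$ preserves distinctness and that the truncation preserves joint uniformity after we have established independence over $\mathbb{F}$; both are immediate from the observations above, so no real obstacle arises.
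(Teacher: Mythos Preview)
Your construction is correct: polynomial hashing over $\mathbb{F}_{2^{\max\{a,b\}}}$ with truncation is exactly the standard proof of this fact, and your Vandermonde argument, seed-length count, and evaluation-time bound are all sound.

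Note, however, that the paper does not actually prove this lemma; it merely cites it as Corollary~3.34 in Vadhan's pseudorandomness survey. So there is no ``paper's own proof'' to compare against beyond the citation. Your argument is precisely the construction underlying that corollary, so you have supplied what the paper outsources. The only cosmetic point worth adding is that the fixed irreducible polynomial of degree $\ell$ can itself be found in $\mathrm{poly}(\ell)$ time, so the family is fully explicit and the evaluation bound is not hiding any expensive preprocessing; you gesture at this by saying ``fix in advance,'' which is adequate for an existence statement but could be made explicit if you want the construction to be uniform.
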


\subsection{Simulating synchronous algorithms in an asynchronous model}

\begin{theorem}[Awerbuch's $\alpha$-synchronizer \cite{AwerbuchJACM1985}]
\label{theorem:alphaSynchronizer}
Given a synchronous Algorithm $A$ running in $T$ rounds on a graph with $m$ edges in the \KTRho \congest\ model for any $\rho \ge 1$, it is possible to simulate $A$ in the asynchronous \KTRho \congest\ model in $T$ rounds.
The  number of additional messages sent in the asynchronous execution compared to an execution of
$A$ is at most $2(T+1)m$.
\end{theorem}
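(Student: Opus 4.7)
The plan is to invoke the classical $\alpha$-synchronizer construction of Awerbuch and verify that it carries over to the \KTRho\ setting. The synchronizer equips each node $v$ with a simulated round counter $r_v$, initially $0$. For each message $v$ sends in simulated round $r$, the recipient sends back an \emph{acknowledgement}. Once all acks for the messages $v$ sent in round $r$ have arrived, $v$ declares itself \emph{safe for round $r$} and transmits a short \textsc{safe}$(r)$ notification along every incident edge. Node $v$ increments $r_v$ to $r+1$ only after (i) it is safe for round $r$ and (ii) it has received a \textsc{safe}$(r)$ notification from each of its neighbors. At that point $v$ locally runs the computation that $A$ would perform at the start of round $r+1$, injects the resulting outgoing messages into the network, and the cycle repeats. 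Because each node already knows the \texttt{ID}s of its $1$-hop neighbors (which is given even for $\rho=1$), it knows exactly which acks and safety notifications to expect, so no extra knowledge beyond the standard \KTRho\ assumption is required.

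For correctness I would prove, by induction on $r$, the invariant: at the moment $v$ enters simulated round $r+1$, the local state of $v$ is identical to the state it would have at the start of round $r+1$ in a synchronous execution of $A$ under the same inputs and coin flips. The base case $r=0$ is immediate. For the inductive step, note that when $v$ advances, every neighbor $u$ of $v$ is safe for round $r$; safety of $u$ means all messages $u$ sent in round $r$ (including any destined for $v$) have been acked, which by the asynchronous model guarantees those messages have already been delivered and incorporated into $v$'s incoming buffer before $v$ performs its round-$(r+1)$ local computation. Hence $v$ sees exactly the same multiset of incoming round-$r$ messages as it would in the synchronous run, and the inductive hypothesis on states plus identical transitions gives identical round-$(r+1)$ states. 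The same argument shows the simulation terminates in $T$ asynchronous phases matching the $T$ synchronous rounds of $A$.

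For the message bound I would account separately for the three kinds of additional traffic. Acks contribute at most one message per original message of $A$, and can equivalently be piggybacked onto the next \textsc{safe} notification; the clean accounting is that each edge carries at most two \textsc{safe}$(r)$ messages per round (one in each direction) and the acks are charged against the original message traffic. This yields at most $2m$ additional messages per simulated round, and summing over rounds $r=0,1,\dots,T$ (with a round $0$ to start the synchronizer) gives the claimed $2(T+1)m$ bound. The main subtlety, and the part I would want to be most careful about, is the bookkeeping for acks versus \textsc{safe} notifications so that the total matches $2(T+1)m$ exactly rather than just $O(Tm)$; this is essentially a matter of treating acks as piggybacked on either the original messages or on the \textsc{safe} notifications of the same round, which is consistent with Awerbuch's original accounting. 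Everything else (correctness, the $T$-round asynchronous time bound, and the compatibility with $\rho\ge 1$) follows directly from the local, neighbor-only nature of the synchronizer.
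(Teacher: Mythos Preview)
The paper does not actually prove this theorem; it is stated in the appendix as a citation of Awerbuch's classical result, with no accompanying argument. Your proposal is a faithful reconstruction of the standard $\alpha$-synchronizer (acks plus per-round \textsc{safe} messages, with the observation that knowing one's neighbors---guaranteed by $\rho\ge 1$---is exactly what is needed to count incoming \textsc{safe} notifications), and the correctness induction and $2m$-per-round accounting are the usual ones, so nothing is missing relative to what the paper provides.
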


 \end{document}